\numberwithin{equation}{section}
\newtheorem{theorem}{Theorem}[section]
\newtheorem{corollary}[theorem]{Corollary}
\newtheorem{definition}[theorem]{Definition}
\newtheorem{lemma}[theorem]{Lemma}
\newtheorem{problem}[theorem]{Problem}
\def\R{\mathbb{R}}
\def\C{\mathbb{C}}
\def\T{\mathbb{T}}
\def\Sa{S^{\langle\alpha\rangle}}
\def\Saa{S^{\langle\!\langle \alpha\rangle\!\rangle}}
\def\Sza{S_0^{\langle\alpha\rangle}}
\def\Szaa{S_0^{\langle\!\langle \alpha\rangle\!\rangle}}
\def\Ta{T^{\langle\alpha\rangle}}
\begin{document}

\title[Maps that preserve a fixed quantum angle]{The structure of maps on the space of all quantum pure states that preserve a fixed quantum angle}
\author{Gy\"orgy P\'al Geh\'er}
\address{Gy\"orgy P\'al Geh\'er, Department of Mathematics and Statistics\\ University of Reading\\ Whiteknights\\ P.O.
Box 220\\ Reading RG6 6AX\\ United Kingdom}
\email{G.P.Geher@reading.ac.uk or gehergyuri@gmail.com}
\author{Michiya Mori}
\address{Michiya Mori, Graduate School of Mathematical Sciences, The University of Tokyo, Komaba, Tokyo, 153-8914, Japan}
\email{mmori@ms.u-tokyo.ac.jp}

\thanks{This research was supported by the London Mathematical Society Grant, Research in Pairs (Scheme 4), Reference No.: 41864}
\thanks{Geh\'er was supported by the Leverhulme Trust Early Career Fellowship (ECF-2018-125), and also by the Hungarian National Research, Development and Innovation Office (Grant no. K115383 and K134944)}
\thanks{Mori was supported by Leading Graduate Course for Frontiers of Mathematical Sciences and Physics (FMSP) and JSPS Research Fellowship for Young Scientists (KAKENHI Grant Number 19J14689), MEXT, Japan.}

\keywords{Projective space, quantum pure state, quantum angle preserving map, Fubini--Study metric, Wigner symmetry, transition probability preserving map.}
\subjclass[2010]{Primary: 47B49, 51A05. Secondary: 47N50}

\begin{abstract}
	Let $H$ be a Hilbert space and $P(H)$ be the projective space of all quantum pure states. Wigner's theorem states that every bijection $\phi\colon P(H)\to P(H)$ that preserves the quantum angle between pure states is automatically induced by either a unitary or an antiunitary operator $U\colon H\to H$. Uhlhorn's theorem generalises this result for bijective maps $\phi$ that are only assumed to preserve the quantum angle $\frac{\pi}{2}$ (orthogonality) in both directions. Recently, two papers, written by Li--Plevnik--\v{S}emrl and Geh\'er, solved the corresponding structural problem for bijections that preserve only one fixed quantum angle $\alpha$ in both directions, provided that $0 < \alpha \leq \frac{\pi}{4}$ holds. In this paper we solve the remaining structural problem for quantum angles $\alpha$ that satisfy $\frac{\pi}{4} < \alpha < \frac{\pi}{2}$, hence complete a programme started by Uhlhorn. In particular, it turns out that these maps are always induced by unitary or antiunitary operators, however, our assumption is much weaker than Wigner's.
\end{abstract}

\maketitle


\section{Introduction}\label{sec:1}

Let $H$ be a complex Hilbert space. For any vector $v\in H$ with length one, $\|v\| = 1$, let $[v]$ denote the line (one-dimensional subspace) it generates: $\C\cdot v$. From now on whenever we write $[v]$ with $v\in H$, it is implicitly assumed that $\|v\|=1$ holds. Also, given a finite number of vectors $v_1, v_2, \dots, v_n \in H$ with $\|v_1\|=\|v_2\|=\dots=\|v_n\| = 1$, the symbol $[v_1, v_2, \dots, v_n]$ stands for the subspace generated by them.
The \emph{projective space} $P(H)$ is the set of all lines in $H$, that is, $P(H) = \{[v]\colon v\in H, \|v\|=1\}$. In the mathematical foundations of quantum mechanics a line $[v]$ corresponds to a quantum pure state, and $P(H)$ to the set of all quantum pure states in a quantum system.
The so-called \emph{quantum angle} or \emph{Fubini--Study distance} between two lines $[u], [v]\in P(H)$ is defined by the following formula:
$$
\measuredangle([u],[v]) := \arccos {|\langle u,v\rangle|} \in \left[0,\frac{\pi}{2}\right].
$$
It is well-known that this is a metric on $P(H)$. Moreover, the important quantity called transition probability between $[u]$ and $[v]$ can be expressed as $\cos^2\measuredangle([u],[v])$, for more details on this see for instance the introduction of \cite{G}. 

Let us introduce the notation $\T := \{z\in\C\colon |z|=1\}$ for the complex unit circle.
In 1931 Wigner stated the following theorem.

\begin{theorem}[Wigner, \cite{W}]\label{thm:W}
Let $H$ be a complex Hilbert space with $\dim H \geq 2$. Assume that the bijective map $\phi\colon P(H)\to P(H)$ preserves the quantum angle between lines, that is,
\begin{equation}\label{eq:assW}
\measuredangle(\phi([u]),\phi([v])) = \measuredangle([u],[v]) \qquad ([u],[v] \in P(H)).
\end{equation}
Then $\phi$ is induced by either a unitary or an antiunitary operator $U\colon H\to H$, namely, we have
\begin{equation}\label{eq:Ws}
\phi([v]) = [Uv] \quad ([v]\in P(H)).
\end{equation}
Moreover, two unitary or antiunitary operators $U_1$ and $U_2$ induce the same map on $P(H)$ if and only if $U_2 = \lambda U_1$ holds with some $\lambda\in\T$. 
\end{theorem}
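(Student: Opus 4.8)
The plan is to use the fact that preserving the quantum angle is exactly preserving the modulus of the inner product, $|\langle \phi(u),\phi(v)\rangle| = |\langle u,v\rangle|$ for unit vectors, and in particular that orthogonality is preserved in both directions. First I would reduce to a normalised situation: fix an orthonormal basis $\{e_i\}$ of $H$ and pick unit representatives $f_i$ with $\phi([e_i]) = [f_i]$. Since $\phi$ preserves orthogonality and is bijective, $\{f_i\}$ is again an orthonormal basis, so there is a unitary $V$ with $Ve_i = f_i$; replacing $\phi$ by the map sending $[w]$ to $[V^*w]$ after $\phi$ yields an angle-preserving bijection that fixes every $[e_i]$. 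Hence I may assume $\phi([e_i]) = [e_i]$ for all $i$.

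Next, the moduli of the coordinates are automatically preserved: if $v = \sum_i c_i e_i$ and $\phi([v]) = [\sum_i d_i e_i]$, then testing against each $e_i$ gives $|d_i| = |c_i|$. Applying this to the two-coordinate vectors $\frac{1}{\sqrt 2}(e_1 + e_j)$ shows that $\phi$ sends each of them to $[\frac{1}{\sqrt 2}(e_1 + \mu_j e_j)]$ for some $\mu_j \in \T$; absorbing the $\mu_j$ by a further diagonal unitary (which still fixes every $[e_i]$), I may assume $\phi$ also fixes each $[\frac{1}{\sqrt 2}(e_1 + e_j)]$.

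The heart of the argument is then the phase coherence for an arbitrary unit vector $v = \sum_i c_i e_i$. Writing $\phi([v]) = [w]$ with $w = \sum_i d_i e_i$ and $|d_i| = |c_i|$, I would normalise the global phase so that $d_1 = c_1$ (handling $c_1 = 0$ by taking a coordinate where $c$ does not vanish as the pivot), and then read off the relative phases from the preserved quantities $|\langle w, \frac{1}{\sqrt 2}(e_1 + e_j)\rangle| = |\langle v, \frac{1}{\sqrt 2}(e_1 + e_j)\rangle|$ together with the analogous constraints coming from three-coordinate reference vectors such as $\frac{1}{\sqrt 3}(e_1 + e_j + e_k)$. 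These equations force, for each pair $(j,k)$, the relative phase $d_j \overline{d_k}$ to equal either $c_j \overline{c_k}$ or its conjugate, and the genuinely delicate point is to show this choice is globally consistent: either $d_i = c_i$ for all $i$ (the unitary branch) or $d_i = \overline{c_i}$ for all $i$ (the antiunitary branch), with the branch independent of $v$. I expect this global coherence---ruling out ``mixed'' images in which some coordinates are conjugated and others are not---to be the main obstacle, and I would resolve it via the three-coordinate computation, which ties together the phases of every pair of coordinates at once.

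Finally, having established the dichotomy, the reduced map is induced either by the identity (unitary branch) or by the coordinatewise conjugation $C$ determined by $\{e_i\}$ (antiunitary branch), the preservation of all moduli guaranteeing that $C$ is a genuine antiunitary. Undoing the two unitary reductions, the original $\phi$ is then induced by a unitary or antiunitary operator $U$, giving \eqref{eq:Ws}. For the uniqueness clause, if $U_1$ and $U_2$ induce the same map then $U_2 U_1^{-1}$ maps every line into itself; a short argument (testing on $e_1, e_2$ and on $e_1 + ie_2$) shows such an operator must be linear and scalar, $U_2 U_1^{-1} = \lambda I$, while (anti)unitarity forces $|\lambda| = 1$, so $U_2 = \lambda U_1$ with $\lambda \in \T$. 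The converse implication is immediate.
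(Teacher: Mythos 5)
The paper never proves Theorem \ref{thm:W}: it is quoted as classical background, with the explicit remark that Wigner's own argument in \cite{W} contains gaps and that the first rigorous proofs are those of \cite{Ba,LM,U}. So your attempt can only be measured against those classical proofs, and it does follow their Bargmann-style coordinate scheme. Most of it is sound: the image of an orthonormal basis is again an orthonormal basis (completeness follows from surjectivity of $\phi$), the two unitary normalisations are legitimate, testing against the fixed lines $[e_i]$, $\bigl[\tfrac{1}{\sqrt2}(e_1+e_j)\bigr]$ and $\bigl[\tfrac{1}{\sqrt3}(e_1+e_j+e_k)\bigr]$ gives $|d_i|=|c_i|$ and then $d_j\in\{c_j,\overline{c_j}\}$ (after normalising $c_1>0$, $d_1=c_1$), and the three-coordinate test does forbid a mixed image of a \emph{single} vector $v$: equality of the two moduli forces $\operatorname{Re}(c_j\overline{c_k})=\operatorname{Re}(c_jc_k)$, which is impossible when $c_j$ and $c_k$ are both non-real. (A minor, fixable point: when $c_1=0$ you also need the lines $\bigl[\tfrac{1}{\sqrt2}(e_j+e_k)\bigr]$, $j,k\neq1$, to be fixed; this follows by testing them against the already-fixed three-coordinate lines.)

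The genuine gap is the clause you state but never prove: that the unitary/antiunitary branch is \emph{independent of $v$}. Your proposed mechanism cannot prove it, because every reference line you ever test against has real coordinates, and a real-coordinate line $[r]$ cannot distinguish a line from its conjugate: $|\langle \overline{v},r\rangle|=|\langle v,r\rangle|$. Concretely, fix any $u$ with $[u]\neq[\overline{u}]$ and let $\psi$ be the bijection of $P(H)$ that swaps $[u]$ and $[\overline{u}]$ and fixes every other line (bars denote coordinatewise conjugation in the basis $\{e_i\}$): $\psi$ satisfies every constraint appearing in your argument --- it fixes all your reference lines and sends each line to itself or its conjugate --- yet it is not induced by any unitary or antiunitary operator and is not angle-preserving. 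Hence a correct proof must at some point invoke hypothesis \eqref{eq:assW} for a pair of lines \emph{both} of which lie outside the family of real reference lines, and your proposal never does this. The standard repair: if $\phi([u])=[\overline{u}]\neq[u]$ and $\phi([v])=[v]\neq[\overline{v}]$, find $[t]$ with $|\langle t,u\rangle|\neq|\langle t,\overline{u}\rangle|$ and $|\langle t,v\rangle|\neq|\langle t,\overline{v}\rangle|$; then preservation of the angle to $[u]$ forces $\phi([t])=[\overline{t}\,]$, while preservation of the angle to $[v]$ forces $\phi([t])=[t]$, a contradiction. Such $t$ exists because $t=u$ satisfies the first inequality strictly (equality case of Cauchy--Schwarz), $t=v$ the second, and each exceptional set is the zero set of a not-identically-zero real-analytic function on the projective space of the span of $u,\overline{u},v,\overline{v}$, so the two sets cannot cover that projective space. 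This cross-vector coherence is precisely the point on which Wigner's original argument was incomplete, so it cannot be waved through; without it your proof does not close. The final uniqueness clause of your proposal is fine.
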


We note that the reverse direction is trivially true, namely, if $\phi$ has the form \eqref{eq:Ws}, then $\phi$ is clearly bijective and \eqref{eq:assW} holds. The real achievement here is that assuming only \eqref{eq:assW} and bijectivity already implies the remarkably regular structure \eqref{eq:Ws}.
We call a map a \emph{Wigner symmetry} if it possesses the form \eqref{eq:Ws}.
The above theorem became a cornerstone of the mathematical foundations of quantum mechanics. One reason being that it plays a crucial role in obtaining the general time-dependent Schr\"odinger equation through purely mathematical means. For a nice exposition regarding this we suggest Simon's paper \cite{Simon}. 

We note that Wigner himself did not give a mathematically rigorous proof of his statement, indeed, the proof presented in \cite{W} contains gaps. Interestingly enough, it took thirty years for the first mathematically rigorous proofs to appear, see \cite{Ba, LM, U}. In particular, in \cite{U} Uhlhorn proved a more general version of the above theorem for Hilbert spaces of dimension at least three. Namely, he only assumed the preservation of the quantum logical structure, while Theorem \ref{thm:W} assumes that its complete probabilistic structure is preserved. Still, Uhlhorn's conclusion is the same as Wigner's, which is a quite remarkable phenomenon.

\begin{theorem}[Uhlhorn, \cite{U}]\label{thm:U}
Let $H$ be a complex Hilbert space with $\dim H \geq 3$ and $\phi\colon P(H)\to P(H)$ be a bijective map preserving orthogonality in both directions, that is,
$$
\measuredangle(\phi([u]),\phi([v])) = \frac{\pi}{2} \;\;\iff\;\; \measuredangle([u],[v]) = \frac{\pi}{2} \qquad ([u],[v] \in P(H)).
$$
Then $\phi$ is a Wigner symmetry. Namely, there exists either a unitary or an antiunitary operator $U\colon H\to H$ such that
$$
\phi([v]) = [Uv] \quad ([v]\in P(H)).
$$
\end{theorem}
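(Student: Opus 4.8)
The plan is to prove Uhlhorn's theorem by reducing it to the fundamental theorem of projective geometry, since orthogonality preservation is precisely a statement about incidence structure. The key insight is that the relation ``$[u] \perp [v]$'' determines the lattice structure of closed subspaces: for any subset $S \subseteq P(H)$, the orthogonal complement $S^\perp := \{[w] : [w] \perp [u] \text{ for all } [u] \in S\}$ can be iterated to produce $S^{\perp\perp}$, which recovers the closed subspace generated by $S$. Since $\phi$ preserves orthogonality in both directions, it commutes with the operation $S \mapsto \phi(S)^\perp = \phi(S^\perp)$, and hence maps the closed subspace spanned by a set to the closed subspace spanned by its image.

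First I would verify that $\phi$ preserves collinearity: three distinct points $[u], [v], [w]$ lie on a common projective line (equivalently, span a two-dimensional subspace) if and only if they share the same orthogonal complement structure, which is a condition expressible purely through $\perp$. Concretely, $[w]$ lies in the subspace $[u,v]$ precisely when every line orthogonal to both $[u]$ and $[v]$ is also orthogonal to $[w]$, i.e. $\{[u],[v]\}^\perp \subseteq \{[w]\}^\perp$. Because $\phi$ preserves $\perp$ bidirectionally, it preserves this containment, and therefore $\phi$ maps projective lines to projective lines bijectively. This is where the hypothesis $\dim H \geq 3$ becomes essential: in dimension two there are no nontrivial collinearity constraints, the projective geometry degenerates, and the argument breaks down (which is exactly why Uhlhorn's theorem, unlike Wigner's, requires dimension at least three).

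With $\phi$ established as a collineation of the projective space $P(H)$, I would invoke the fundamental theorem of projective geometry, which asserts that every such collineation is induced by a semilinear bijection $T\colon H \to H$, meaning $T(\lambda x + \mu y) = \sigma(\lambda) Tx + \sigma(\mu) Ty$ for some field automorphism $\sigma$ of $\C$. Thus $\phi([v]) = [Tv]$. The remaining task is to upgrade $T$ from merely semilinear to (anti)unitary. Here I would use that $\phi$ preserves orthogonality, not just collinearity: the relation $\langle Tu, Tv\rangle = 0 \iff \langle u, v\rangle = 0$ forces $T$ to be compatible with the inner product up to the automorphism $\sigma$. A standard argument then shows that the only field automorphisms of $\C$ that can arise in a way consistent with an orthogonality-preserving map and with the continuity/boundedness constraints are the identity and complex conjugation, yielding respectively a linear or conjugate-linear $T$ that is a scalar multiple of a unitary or antiunitary operator.

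The main obstacle I anticipate is the last step: controlling the field automorphism $\sigma$. The automorphism group of $\C$ is enormous and contains wildly discontinuous automorphisms, so one cannot simply assert $\sigma \in \{\mathrm{id}, \overline{\phantom{z}}\}$ from semilinearity alone. The orthogonality condition must be leveraged carefully, typically by examining how $\sigma$ acts on the moduli $|\langle u, v\rangle|$ through configurations of mutually orthogonal and non-orthogonal lines; showing that $\sigma$ preserves enough arithmetic structure (in particular fixing $\R$ and hence being continuous) pins it down to the two admissible choices. This metric-rigidity step, rather than the purely combinatorial collineation argument, is the delicate heart of the proof, and I would expect the authors to either cite it from the literature (e.g. from the theory of projective geometries over $\C$) or to handle it via an explicit computation with the inner product once semilinearity is in hand.
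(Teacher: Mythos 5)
You should first note that the paper contains no proof of this statement at all: Theorem \ref{thm:U} is quoted as background and attributed to Uhlhorn's article \cite{U}, and the paper's own work begins only with Theorem \ref{thm:main}. So your proposal can only be compared with the classical argument from the literature, and in that respect it is a faithful reconstruction of it. The steps you give are correct: bijectivity plus bidirectional preservation of $\perp$ yield $\phi(S^{\perp})=\phi(S)^{\perp}$; the double orthocomplement recovers the closed span; the inclusion $\{[u],[v]\}^{\perp}\subseteq\{[w]\}^{\perp}$ does characterise $[w]\in P([u,v])$ (this is where $\dim H\geq 3$ is needed, so that $\{[u],[v]\}^{\perp}\neq\emptyset$, while in dimension two the whole of $P(H)$ is a single projective line and the scheme collapses, exactly as you observe); applying the same to $\phi^{-1}$ makes $\phi$ a collineation in both directions, and the fundamental theorem of projective geometry then produces a $\sigma$-semilinear bijection $T$ inducing $\phi$.

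The one genuine gap is the final step, which you correctly flag as the delicate heart but leave as an appeal to a ``standard argument''. It does close by the explicit computation you anticipate, and here is the skeleton. Fix an orthonormal pair $e_1,e_2$. For every real $t$ the vectors $e_1+te_2$ and $te_1-e_2$ are orthogonal, so their $T$-images are orthogonal; expanding, and using $Te_1\perp Te_2$, gives
\begin{equation*}
\overline{\sigma(t)}\,\|Te_1\|^2-\sigma(t)\,\|Te_2\|^2=0 \qquad (t\in\R).
\end{equation*}
Taking $t=1$ yields $\|Te_1\|=\|Te_2\|$, and then $\sigma(t)=\overline{\sigma(t)}$ for all real $t$, i.e.\ $\sigma(\R)\subseteq\R$. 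Since $\sigma(t^2)=\sigma(t)^2>0$, $\sigma$ maps positive reals to positive reals, hence is increasing on $\R$ and fixes $\Q$, so $\sigma|_{\R}=\mathrm{id}$; as $\sigma(i)^2=-1$, $\sigma$ is the identity or conjugation, making $T$ linear or conjugate-linear. Finally, $\|Tv\|$ is constant on unit vectors: for $u\perp v$ this was shown above, and for non-orthogonal $u,v$ one links them through a common orthogonal unit vector $w$, which exists because $\dim H\geq 3$. Writing any $x$ in an orthonormal basis and using that $T$ preserves orthogonality, one gets $\|Tx\|=\kappa\|x\|$ for a constant $\kappa>0$, so $T/\kappa$ is a surjective linear or conjugate-linear isometry, i.e.\ unitary or antiunitary. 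With that supplied, your argument is complete and is essentially Uhlhorn's own.
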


We note that Uhlhorn's theorem obviously fails to be true in a two-dimensional Hilbert space, since in that case for every line there exists only one line orthogonal to it. 
The above two theorems have been generalised in many ways, more on this can be found in the introduction of \cite{G}.

In this paper we are interested in the following problem which proposes to generalise Wigner's theorem along the direction of Uhlhorn.
\begin{problem}\label{prob}
Fix a quantum angle $0 < \alpha < \frac{\pi}{2}$. Can we characterise all bijective mappings $\phi\colon P(H)\to P(H)$ that preserve the quantum angle $\alpha$, that is,
$$
\measuredangle(\phi([u]),\phi([v])) = \alpha \;\;\iff\;\; \measuredangle([u],[v]) = \alpha \qquad ([u],[v] \in P(H))?
$$
\end{problem}
We emphasise that, like in Uhlhorn's theorem, nothing is assumed a priori about other angles, hence $\measuredangle(\phi([u]),\phi([v])) \neq \measuredangle([u],[v])$ might happen if $\measuredangle([u],[v]) \neq \alpha$.
Recently, the papers \cite{G,LPS} solved this problem for real Hilbert spaces.
However, for \emph{complex} Hilbert spaces it was only partially solved, we state the two relevant theorems below. The first one is the complete solution for two-dimensional Hilbert spaces.

\begin{theorem}[Geh\'er, \cite{G}]\label{thm:2d}
	Let $H$ be a complex Hilbert space with $\dim H = 2$ and fix a number $0 < \alpha < \frac{\pi}{2}$. Assume that $\phi\colon P(H) \to P(H)$ is a bijective map preserving the quantum angle $\alpha$ in both directions, that is,
	$$
	\measuredangle([u],[v]) = \alpha \;\iff\; \measuredangle(\phi([u]),\phi([v])) = \alpha \quad ([u],[v] \in P(H)).
	$$
	Then 
	\begin{itemize}
		\item[(i)] either $\phi$ is a Wigner symmetry,
		\item[(ii)] or $\alpha = \frac{\pi}{4}$, and there exists a Wigner symmetry $\psi$ such that
		\begin{equation}\label{eq:pworth}
			\phi([v]) \in \left\{\psi([v]), \psi([v])^\perp\right\} \qquad ([v] \in P(H)),
		\end{equation}
		where $\psi([v])^\perp$ denotes the unique line which is orthogonal to $\psi([v])$.
		Moreover, every bijective map $\phi$ that satisfies \eqref{eq:pworth} preserves the angle $\frac{\pi}{4}$.
	\end{itemize}
\end{theorem}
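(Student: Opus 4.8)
The plan is to transfer the entire problem to the Bloch sphere, where it becomes a single-distance rigidity question. Since $\dim H=2$, every line $[v]$ corresponds to the rank-one projection $vv^*=\frac12(I+\vec p\cdot\vec\sigma)$ for a unique unit vector $\vec p\in S^2$, and this sets up a bijection between $P(H)$ and $S^2$. A direct computation gives $\cos^2\measuredangle([u],[v])=\Tr(uu^*\,vv^*)=\frac12(1+\vec p\cdot\vec q)$, so $\vec p\cdot\vec q=\cos 2\measuredangle([u],[v])$; equivalently the geodesic distance $d(\vec p,\vec q)$ on $S^2$ equals $2\measuredangle([u],[v])$, and $[v]^\perp$ corresponds exactly to the antipode $-\vec p$. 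Thus preserving the quantum angle $\alpha$ is the same as preserving the single spherical distance $\rho:=2\alpha\in(0,\pi)$. Because unitaries induce precisely the rotations $SO(3)$ while an antiunitary (complex conjugation) induces a reflection, the Wigner symmetries correspond exactly to the full isometry group $O(3)$ of $S^2$. Hence Theorem \ref{thm:2d} is equivalent to the following statement about a bijection $\Phi\colon S^2\to S^2$ preserving the relation $d(x,y)=\rho$ in both directions: if $\rho\neq\frac\pi2$ then $\Phi\in O(3)$, while if $\rho=\frac\pi2$ then $\Phi(x)\in\{\Psi(x),-\Psi(x)\}$ for some $\Psi\in O(3)$.

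The unifying step I would isolate is the reduction to orthogonality, i.e.\ to the distance $\frac\pi2$. Suppose $\Phi$ preserves $d(\cdot,\cdot)=\frac\pi2$ in both directions. For each $x$ the equator $E_x:=\{z: z\perp x\}$ is a great circle, and the set of points orthogonal to \emph{all} of $E_x$ is exactly the pair of poles $\{x,-x\}$. Since $\Phi$ preserves orthogonality, $\Phi(E_x)=E_{\Phi(x)}$, so $\Phi$ maps the common orthogonal partners of $E_x$ onto those of $E_{\Phi(x)}$; that is, $\{\Phi(x),\Phi(-x)\}=\{\Phi(x),-\Phi(x)\}$, forcing $\Phi(-x)=-\Phi(x)$. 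Therefore $\Phi$ commutes with the antipodal map and descends to a bijection $\bar\Phi$ of $\mathbb{RP}^2=P(\R^3)$ that preserves orthogonality of lines in $\R^3$. By Uhlhorn's theorem applied to the real three-dimensional space $\R^3$ (the argument behind Theorem \ref{thm:U} is valid over $\R$ once $\dim\ge 3$; alternatively invoke the fundamental theorem of projective geometry together with the induced polarity), $\bar\Phi$ is induced by some $\Psi\in O(3)$, whence $\Phi(x)\in\{\Psi(x),-\Psi(x)\}$ for every $x\in S^2$.

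It remains to feed the hypothesis into this reduction, and here the two cases split. When $\rho=\frac\pi2$ the map $\Phi$ preserves orthogonality \emph{by assumption}, so the previous paragraph yields $\Phi(x)\in\{\Psi(x),-\Psi(x)\}$; translating back through the Bloch correspondence, with $\psi$ the Wigner symmetry induced by $\Psi$ and $-\Psi(x)$ corresponding to $\psi([v])^\perp$, this is precisely alternative (ii). When $\rho\neq\frac\pi2$ one must first prove the bootstrapping lemma that $\rho$-preservation already implies $\frac\pi2$-preservation; the idea is to detect metric thresholds purely combinatorially from the $\rho$-graph, for instance that two points have a common $\rho$-neighbour exactly when their distance lies in a range determined by $\rho$, and to combine several such rigid configurations of $\rho$-edges so as to characterise orthogonality without any appeal to continuity. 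Granting this, the reduction gives $\Phi(x)\in\{\Psi(x),-\Psi(x)\}$, i.e.\ $\Phi=\varepsilon\cdot\Psi$ for a sign function $\varepsilon\colon S^2\to\{\pm1\}$. Preservation of the genuine relation $d=\rho$ then forces $\varepsilon$ to be constant: if $d(x,y)=\rho$ and $\varepsilon(x)\neq\varepsilon(y)$ we would get $d(\Phi x,\Phi y)=\pi-\rho\neq\rho$, a contradiction; since the graph $\{(x,y):d(x,y)=\rho\}$ is connected for every $\rho\in(0,\pi)$ (an elementary check), $\varepsilon$ is globally constant and $\Phi=\pm\Psi\in O(3)$, which is alternative (i). I expect the bootstrapping lemma of this paragraph to be the main obstacle: extracting orthogonality, and ultimately the full metric rigidity, from preservation of the single relation $d=\rho$ with no continuity hypothesis, while cleanly confining the antipodal degeneracy to $\rho=\frac\pi2$ so that it neither contaminates the generic case nor is double counted.

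Finally, the converse ``Moreover'' assertion is immediate and does not require the sphere picture. Let $\psi$ be a Wigner symmetry and let $\phi$ be any bijection with $\phi([v])\in\{\psi([v]),\psi([v])^\perp\}$. If $\measuredangle([u],[v])=\frac\pi4$ then, since $\psi$ preserves every quantum angle, $\measuredangle(\psi([u]),\psi([v]))=\frac\pi4$, i.e.\ $|\langle\cdot,\cdot\rangle|^2=\frac12$ for the representing vectors; in a two-dimensional space replacing either line by its orthocomplement changes $|\langle\cdot,\cdot\rangle|^2$ to $1-\frac12=\frac12$, so the resulting angle is again $\arccos\frac1{\sqrt2}=\frac\pi4$. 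Hence $\measuredangle([u],[v])=\frac\pi4$ implies $\measuredangle(\phi([u]),\phi([v]))=\frac\pi4$, and applying the same reasoning to the bijection $\phi^{-1}$ (which has the same form with $\psi^{-1}$) gives the reverse implication, so $\phi$ preserves the angle $\frac\pi4$ in both directions.
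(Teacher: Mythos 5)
Your reduction is exactly the route the paper itself indicates for this theorem: the paper does not reprove the statement but cites it from \cite{G}, remarking that it follows from the Bloch representation together with \cite[Theorem 2.1]{G}, the characterisation of bijections of the unit sphere of a real Hilbert space preserving one fixed spherical angle. Your dictionary is set up correctly (quantum angle $\alpha$ corresponds to geodesic distance $\rho=2\alpha$, orthocomplementation to the antipodal map, Wigner symmetries to $O(3)$), your treatment of the case $\rho=\frac{\pi}{2}$ via antipodal equivariance and a real-scalar Uhlhorn-type theorem on $P(\R^3)$ is sound modulo that standard external input, the sign-function argument via connectedness of the $\rho$-graph is correct, and the ``Moreover'' converse (including the observation that $\phi^{-1}$ has the same form with $\psi^{-1}$) is complete.

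The genuine gap is the step you yourself flag: the ``bootstrapping lemma'' that a bijection of $\mathbb{S}^2$ preserving the single distance $\rho\neq\frac{\pi}{2}$ in both directions must preserve orthogonality. This is not a technical detail to be granted --- it is the entire mathematical content of \cite[Theorem 2.1]{G} in the generic case, i.e.\ precisely what the paper imports as a black box. Your sketch of how to prove it does not amount to an argument: the common-$\rho$-neighbour criterion you mention only detects the single threshold relation $d(x,y)\leq\min\{2\rho,\,2\pi-2\rho\}$ (two points $x,y$ admit $z$ with $d(x,z)=d(y,z)=\rho$ exactly in that range), and this threshold is symmetric under $\rho\mapsto\pi-\rho$, so it alone cannot even separate $\rho$ from $\pi-\rho$, let alone pin down the relation $d=\frac{\pi}{2}$. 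You would need to exhibit an explicit finite configuration of $\rho$-edges whose realisability characterises orthogonality (or some preserved auxiliary distance from which orthogonality follows), and no such configuration is produced; ``combine several such rigid configurations'' is a plan, not a proof, and without continuity the passage from finitely many detected thresholds to an exact distance is exactly where the difficulty lives. As written, your proposal is therefore either (a) a correct reduction of Theorem \ref{thm:2d} to \cite[Theorem 2.1]{G} --- in which case it coincides with the paper's proof but should cite that result instead of re-deriving it --- or (b) an attempted self-contained proof, in which case its central lemma is missing.
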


Theorem \ref{thm:2d} can be proved using the famous Bloch representation and a characterisation of bijective maps on the unit sphere of a real Hilbert space that preserve a fixed spherical angle (see \cite[Theorem 2.1]{G}).
The next theorem is the solution for quantum angles at most $\frac{\pi}{4}$.

\begin{theorem}[Geh\'er, \cite{G}]\label{thm:gen}
	Let $H$ be a complex Hilbert space with $\dim H \geq 3$ and fix a number $0<\alpha\leq\frac{\pi}{4}$.
	Assume that $\phi\colon P(H) \to P(H)$ is a bijective map which satisfies
	\begin{equation*}
	\measuredangle([u],[v]) = \alpha \;\iff\; \measuredangle(\phi([u]),\phi([v])) = \alpha \quad ([u],[v] \in P(H)).
	\end{equation*}
	Then $\phi$ is a Wigner symmetry.
\end{theorem}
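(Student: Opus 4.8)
The plan is to reduce the statement to Uhlhorn's theorem (Theorem~\ref{thm:U}) by showing that a bijection preserving the single angle $\alpha$ in both directions must automatically preserve orthogonality in both directions. Since $\dim H\ge 3$, Uhlhorn's theorem then yields at once that $\phi$ is a Wigner symmetry. The entire difficulty is therefore concentrated in upgrading the preservation of the \emph{one} angle $\alpha$ to preservation of the angle $\frac{\pi}{2}$.

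First I would recover, in a purely relational way, the larger angle $2\alpha$ (note that $2\alpha\le\frac{\pi}{2}$ precisely because $\alpha\le\frac{\pi}{4}$). Given two lines $[u],[v]$, consider the set of their common $\alpha$-neighbours $C([u],[v]):=\{[w]:\measuredangle([u],[w])=\measuredangle([v],[w])=\alpha\}$. A short computation using the triangle inequality for the Fubini--Study metric shows that, when $\dim H\ge 3$, the set $C([u],[v])$ is empty if $\measuredangle([u],[v])>2\alpha$, a single point (the midpoint of the unique geodesic) if $\measuredangle([u],[v])=2\alpha$, and infinite if $\measuredangle([u],[v])<2\alpha$; the passage to a positive-dimensional family once the angle drops below $2\alpha$ is exactly where $\dim H\ge 3$ is used. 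Consequently the condition ``$C([u],[v])$ is a singleton'' is a relational description of ``$\measuredangle([u],[v])=2\alpha$'', and since $\phi$ and $\phi^{-1}$ preserve the defining relation $\measuredangle=\alpha$, the map $\phi$ must preserve the relation $\measuredangle=2\alpha$ as well, together with the associated midpoint (betweenness) structure.

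From here I would try to propagate this information outward: the recovered midpoints let one reconstruct geodesic segments and hence the betweenness relation along Fubini--Study geodesics, from which the projective-line structure of $P(H)$ (lines lying in a common two-dimensional subspace) can in principle be identified; preservation of this incidence structure, together with the recovered metric data, would then force orthogonality preservation and let Uhlhorn's theorem close the argument.

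The main obstacle I anticipate is precisely this last bridge to the angle $\frac{\pi}{2}$. For a generic $\alpha$ no integer multiple of $\alpha$ equals $\frac{\pi}{2}$, so orthogonality cannot be reached by merely iterating the midpoint/doubling construction, and a genuinely global argument is needed to certify that two lines sit at maximal distance. A second delicate point is the boundary value $\alpha=\frac{\pi}{4}$, where $2\alpha=\frac{\pi}{2}$ and the singleton characterisation above breaks down: orthogonal pairs then possess an entire family of common $\alpha$-neighbours, which is the same failure that produces the genuine exception in the two-dimensional Theorem~\ref{thm:2d}. This endpoint case would have to be handled by a separate relational characterisation of orthogonality that does not rely on counting common $\alpha$-neighbours.
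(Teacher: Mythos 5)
Your opening step is sound: for $0<\alpha<\frac{\pi}{4}$ and $\dim H\geq 3$ the set $C([u],[v])$ of common $\alpha$-neighbours is indeed empty when $\measuredangle([u],[v])>2\alpha$, a singleton when $\measuredangle([u],[v])=2\alpha$ (uniqueness of the minimizing Fubini--Study geodesic below the diameter $\frac{\pi}{2}$), and infinite when $\measuredangle([u],[v])<2\alpha$, so $\phi$ does preserve the angle $2\alpha$ and the associated midpoints. This is exactly the spirit of the actual argument in \cite{G}, which also extracts a new preserved angle from the $\alpha$-sets of pairs (there via the condition that $\{[v_1],[v_2]\}^{\langle\alpha\rangle}$ contain exactly one pair of elements at angle $\alpha$ from each other). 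But from that point on your proposal does not contain a proof: the ``bridge to $\frac{\pi}{2}$'' that you list as an anticipated obstacle is not a technical detail, it is the whole content of the theorem, and you concede you do not have it. The intermediate plan (midpoints $\Rightarrow$ betweenness $\Rightarrow$ projective lines $\Rightarrow$ orthogonality) is not carried out and is problematic in itself: your construction recovers midpoints only for pairs at the \emph{exact} angle $2\alpha$, iterated doubling stalls as soon as the doubled angle reaches $\frac{\pi}{4}$ (the singleton characterisation needs the target angle below $\frac{\pi}{2}$), and, as you yourself note, for generic $\alpha$ no amount of doubling/halving reaches $\frac{\pi}{2}$. The endpoint $\alpha=\frac{\pi}{4}$, which is part of the statement, is likewise left completely open.

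It is instructive that the known proof sketched in Section \ref{sec:1} resolves precisely this obstruction by running in the opposite direction: instead of climbing up to orthogonality and invoking Uhlhorn (Theorem \ref{thm:U}), \cite{G} produces from the $\alpha$-sets of pairs a sequence of preserved angles $\beta_n\searrow 0$, shows that preservation of arbitrarily small angles forces preservation of \emph{all} angles, and then applies Wigner's theorem (Theorem \ref{thm:W}) rather than Uhlhorn's. So the workable reduction is ``down to small angles $+$ Wigner,'' not ``up to $\frac{\pi}{2}$ $+$ Uhlhorn''; your proposal identifies a correct first move but is missing the idea that makes the induction close, and hence has a genuine gap.
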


In the present paper our goal is to solve Problem \ref{prob} for the remaining case when $\dim H \geq 3$ and $\frac{\pi}{4}<\alpha<\frac{\pi}{2}$.
Before we state our main theorem, let us briefly explain the strategy used in \cite{G} to prove Theorem \ref{thm:gen}. For a subset $S\subset P(H)$, we define its \emph{$\alpha$-set} by
$$
\Sa := \left\{[v]\in P(H)\colon \measuredangle ([v], [u]) = \alpha \text{ for all }[u]\in S\right\},
$$ 
and its \emph{double-$\alpha$-set} by
$$
\Saa := \left(\Sa\right)^{\langle\alpha\rangle}.
$$ 
The core idea of \cite{G} is to examine the $\alpha$-sets of pairs of lines.
More precisely, it turns out that if $0<\alpha<\frac{\pi}{4}$, then the set $\{[v_1],[v_2]\}^{\langle\alpha\rangle}$ contains exactly one pair of elements $[w_1],[w_2]$ with $\measuredangle([w_1],[w_2]) = \alpha$ if and only if $\measuredangle([v_1],[v_2]) = \beta$, where $\beta$ is explicitly given in terms of $\alpha$. Hence the angle $\beta$ is also preserved by $\phi$. Using this observation it is then possible to construct a sequence of quantum angles $\{\beta_n\}_{n=1}^\infty \subset \left(0,\frac{\pi}{2}\right)$ which are all preserved by $\phi$, moreover, $\beta_n\searrow 0$ as $n\to\infty$. Since small angles are preserved, one can prove that all angles must be preserved. For the case $\alpha = \frac{\pi}{4}$ a somewhat modified idea can be applied, which we do not detail here.

As was pointed out in \cite{G}, the above idea fails to work for quantum angles $\alpha > \frac{\pi}{4}$. The main result of this paper is to show that nonetheless the conclusion of Theorem \ref{thm:gen} holds for all quantum angles.

\begin{theorem}\label{thm:main}
	Let $H$ be a complex Hilbert space with $\dim H \geq 3$ and fix a number $\frac{\pi}{4}<\alpha<\frac{\pi}{2}$.
	Assume that $\phi\colon P(H) \to P(H)$ is a bijective map which preserves the quantum angle $\alpha$ in both directions, namely, it satisfies
	\begin{equation*}\label{eq:ass}
	\measuredangle([u],[v]) = \alpha \;\iff\; \measuredangle(\phi([u]),\phi([v])) = \alpha \quad ([u],[v] \in P(H)).
	\end{equation*}
	Then $\phi$ is a Wigner symmetry, that is, there exists a unitary or an antiunitary operator $U\colon H\to H$ such that
	$$
	\phi([v]) = [Uv] \quad ([v]\in P(H)).
	$$
\end{theorem}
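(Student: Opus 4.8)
I would aim to reduce the statement to Uhlhorn's theorem (Theorem \ref{thm:U}), which is available since $\dim H\ge 3$; thus the entire problem becomes: show that $\phi$ preserves orthogonality, i.e.\ the quantum angle $\frac{\pi}{2}$, in both directions. Everything rests on the one equivariance principle that the hypothesis hands us for free: for every subset $S\subset P(H)$ one has $\phi(\Sa)=\phi(S)^{\langle\alpha\rangle}$, and hence also $\phi(\Saa)=\phi(S)^{\langle\!\langle\alpha\rangle\!\rangle}$. Consequently, \emph{any} relation between lines that can be phrased purely through the formation of $\alpha$-sets and double-$\alpha$-sets, together with the bare incidence relation $\measuredangle=\alpha$, is automatically transported by $\phi$. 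The task is therefore to produce an intrinsic, angle-$\alpha$-definable description of orthogonality.

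\textbf{Why this is possible here but not below $\frac{\pi}{4}$.} The first computation I would record is the following elementary but decisive fact: if $e_1\perp e_2$, a unit vector $w$ lying at angle $\alpha$ from each of $[e_1]$ and $[e_2]$ has the form $w=\cos\alpha\,e^{i\theta_1}e_1+\cos\alpha\,e^{i\theta_2}e_2+r\,w'$ with $w'\perp e_1,e_2$ and $r=\sqrt{1-2\cos^2\alpha}$, so such a $w$ exists precisely when $\alpha\ge\frac{\pi}{4}$. Thus for $\frac{\pi}{4}<\alpha<\frac{\pi}{2}$ orthogonal pairs admit a rich common $\alpha$-set, whereas for $\alpha<\frac{\pi}{4}$ orthogonal pairs have \emph{no} common $\alpha$-partner whatsoever. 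This is exactly the threshold that makes the ``transfer to a smaller preserved angle and iterate'' method of Theorem \ref{thm:gen} break down, and it is what opens the door to a direct attack. One should be warned at once that the coarse invariant fails to finish the job: a short analysis shows that for $\alpha>\frac{\pi}{4}$ \emph{every} pair of lines (orthogonal ones included) has non-empty common $\alpha$-set, so mere non-emptiness cannot single out $\measuredangle=\frac{\pi}{2}$. The substance of the argument is therefore to extract a finer invariant of the configuration $\{[a],[b]\}^{\langle\alpha\rangle}$ — for instance the set of mutual angles realised inside it, or the structure of its own double-$\alpha$-set — which takes a distinguished value exactly when $[a]\perp[b]$.

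\textbf{From the model case to general $H$.} Concretely, I would first settle the low-dimensional model inside a fixed $2$- or $3$-dimensional subspace, where $\Sa$ and $\Saa$ can be written out explicitly in coordinates, and isolate a purely combinatorial-metric condition $C([a],[b])$, expressed only through angle-$\alpha$ incidences, that holds if and only if $\measuredangle([a],[b])=\frac{\pi}{2}$. By the equivariance principle $\phi$ then preserves $C$, hence preserves orthogonality in both directions. To pass from this finite computation to an arbitrary $H$, I would argue that orthogonality of two given lines can be certified inside a suitable finite-dimensional subspace recognised by $\alpha$-incidences, the hypothesis $\dim H\ge 3$ guaranteeing that enough ambient room is always present for the auxiliary partner lines to exist. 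Once $\phi$ is known to preserve $\perp$ both ways, Uhlhorn's theorem produces the unitary or antiunitary $U$ with $\phi([v])=[Uv]$. (An alternative endgame would be to characterise coplanarity instead, invoke the fundamental theorem of projective geometry to get a semilinear map, and then use the angle-$\alpha$ constraint to force the field automorphism to be continuous and the map proportional to a unitary or antiunitary; but routing through orthogonality and Uhlhorn is cleaner.)

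\textbf{The main obstacle.} Essentially all of the difficulty sits in the geometric characterisation demanded above, and it is twofold. First, since the natural coarse invariants no longer separate angles, one is forced into a genuine real-algebraic study of the varieties $\{[a],[b]\}^{\langle\alpha\rangle}\subset P(H)$ and of their iterated $\alpha$-sets, in order to locate a feature that is extremal or otherwise distinguished exactly at $\gamma=\frac{\pi}{2}$. Second, the chosen condition must be robust under the \emph{only} information $\phi$ respects, namely the relation $\measuredangle=\alpha$, and must be insensitive to $\dim H$, including the infinite-dimensional case; I expect this to require a careful reduction to finite-dimensional subspaces followed by a consistency argument that glues the local conclusions into the single global statement that $\phi$ preserves orthogonality.
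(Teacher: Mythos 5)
You have correctly located the crux of the problem (the coarse invariants that work for $\alpha\le\frac{\pi}{4}$ no longer separate angles, so a finer $\alpha$-definable invariant is needed), but your proposal never produces that invariant: the step ``isolate a purely combinatorial-metric condition $C([a],[b])$, expressed only through angle-$\alpha$ incidences, that holds if and only if $\measuredangle([a],[b])=\frac{\pi}{2}$'' is exactly the theorem's entire difficulty, and it is deferred, not proved. What you do establish — that orthogonal pairs have a common $\alpha$-partner precisely when $\alpha\ge\frac{\pi}{4}$, and that for $\alpha>\frac{\pi}{4}$ \emph{every} pair of lines has an infinite common $\alpha$-set — is correct but only confirms that the obvious invariants fail. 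There is also a structural reason to doubt that the missing step is a ``finite computation'': the sets $\{[a],[b]\}^{\langle\alpha\rangle}$ are the objects whose analysis was already shown in \cite{G} to break down for $\alpha>\frac{\pi}{4}$, and their geometry (the function $\rho$ of Lemma \ref{shape}, which degenerates to a constant exactly at orthogonality) varies with $\dim H$, so any orthogonality test built from them needs a genuinely new idea plus a dimension-independence argument, neither of which is sketched. As it stands the proposal is a plan whose load-bearing lemma is absent, so it cannot be accepted as a proof.

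For comparison, the paper sidesteps orthogonality altogether and never invokes Uhlhorn's theorem. Its invariant is built from \emph{triples} rather than pairs: a set $T$ is called highly-$\alpha$-symmetric if $T$ and $T^{\langle\alpha\rangle}$ are infinite and every three-element subset $S\subset T$ satisfies $S^{\langle\!\langle\alpha\rangle\!\rangle}=T$ (Definition \ref{defi}). Sections \ref{sec:3} and \ref{sec:4} prove that such sets are exactly certain circles inside projective lines (Lemmas \ref{circle-char-4} and \ref{circle-char-3}; in dimension three only one implication holds in general, which is why specific parameters $c_0,d_0$ are chosen in the final proof). Since this notion is phrased purely through $\alpha$-incidences, $\phi$ maps highly-$\alpha$-symmetric sets to highly-$\alpha$-symmetric sets; covering a projective line $P(M)$ by overlapping circles then shows $\phi$ maps projective lines onto projective lines, and Theorem \ref{thm:2d} applied on each $P(M)$ (only case (i) can occur, as $\alpha\neq\frac{\pi}{4}$) yields that $\phi$ preserves all quantum angles, whence Wigner's theorem finishes. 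If you want to salvage your route, the honest statement is that you would need to prove an analogue of these circle-characterisation lemmas for your relation $C$, i.e.\ the same order of work the paper invests, with no evident shortcut.
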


We say that three lines $[v_1],[v_2],[v_3]$ are collinear if $\dim [v_1,v_2,v_3] \leq 2$. For any (closed) subspace $M\subset H$ we may identify the projective space $P(M)$ with the subset $\{[v]\in P(H)\colon v\in M, \|v\|=1\}\subset P(H)$.
If $\dim M = 2$, then we call $P(M)$ $(\subset P(H))$ a projective line.
The following definition plays a central role in our considerations.

\begin{definition}[Highly-$\alpha$-symmetric set]\label{defi}
A subset $T\subset P(H)$ is called highly-$\alpha$-symmetric if it satisfies the following three conditions: 
\begin{itemize}
\item[(i)] $\# T =\infty$,
\item[(ii)] $\# \Ta = \infty$,
\item[(iii)] for any subset $S\subset T$ with $\# S = 3$, $\Saa = T$.
\end{itemize}
\end{definition}

We now briefly explain our strategy to prove the above theorem.
The aim of the next section is to explore the structure of the $\alpha$-sets of three collinear lines, and to prove some auxiliary results. Then in sections \ref{sec:3} and \ref{sec:4} we investigate how highly-$\alpha$-symmetric sets look like when $\dim H \geq 4$ and $\dim H = 3$, respectively. It turns out that if $H$ has dimension at least four, then a set $T$ is highly-$\alpha$-symmetric if and only if it is a subset of a projective line with an additional special structure, described in Definition \ref{def:circle}. In case when the dimension of the Hilbert space is three, the aforementioned implication holds only in one direction.
In contrast with \cite{G} where $\alpha$-sets of pairs of lines were examined, here the core of our method is to \emph{explore the shape of double-$\alpha$-sets of general triples of lines}.
Using these insights we then prove in Section \ref{sec:proof} that all maps $\phi$ which satisfy our conditions necessarily map projective lines onto projective lines.
Finally, an application of Theorem \ref{thm:2d} will complete the proof.


\section{Some preliminary results}\label{sec:2}
From now on $H$ denotes a complex Hilbert space with $\dim H\geq 3$, and $\alpha$ is a fixed angle with $\frac{\pi}{4}<\alpha < \frac{\pi}{2}$. 
We begin with a lemma about some basic properties of $\alpha$-sets.

\begin{lemma}\label{basic}
We have the following relations: 
\begin{itemize}
	\item[(i)] If $S\subset P(H)$, then $S\subset \Saa$.
	\item[(ii)] If $S_1\subset S_2\subset P(H)$, then $S_1^{\langle\alpha\rangle} \supset S_2^{\langle\alpha\rangle}$ and $S_1^{\langle\!\langle \alpha\rangle\!\rangle}\subset S_2^{\langle\!\langle \alpha\rangle\!\rangle}$.
	\item[(iii)] If $S\subset P(H)$, then $(\Sa)^{\langle\!\langle \alpha\rangle\!\rangle} = \Sa$.
	\item[(iv)] Every highly-$\alpha$-symmetric set $T$ satisfies
	\begin{equation*}
		\Saa = T, \;\;\; \Sa = \Ta \qquad (S\subset T, \# S \geq 3).
	\end{equation*}
\end{itemize}

\end{lemma}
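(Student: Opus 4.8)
The plan is to treat the map $S\mapsto\Sa$ as a polarity (Galois connection) induced by the symmetric relation ``$\measuredangle([u],[v])=\alpha$'', so that parts (i)--(iii) become the standard formal properties of such an operation, while part (iv) follows by combining these with a reduction to three-element subsets. Throughout I would repeatedly use that the quantum angle is symmetric.

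First, for (i) I would take any $[v]\in S$ and any $[w]\in\Sa$; by definition $\measuredangle([w],[v])=\alpha$, and by symmetry $\measuredangle([v],[w])=\alpha$, so $[v]\in(\Sa)^{\langle\alpha\rangle}=\Saa$. For the first inclusion in (ii), if $[v]\in S_2^{\langle\alpha\rangle}$ then $[v]$ makes angle $\alpha$ with every point of $S_2\supset S_1$, hence with every point of $S_1$, giving $S_2^{\langle\alpha\rangle}\subset S_1^{\langle\alpha\rangle}$; the second inclusion of (ii) is obtained by applying this order-reversing property twice. For (iii), writing $T:=\Sa$, part (i) applied to $T$ gives $T\subset\Taa$; conversely, part (i) applied to $S$ yields $S\subset\Saa=\Ta$, and feeding this into the order-reversing half of (ii) gives $\Taa=(\Ta)^{\langle\alpha\rangle}\subset\Sa=T$, so equality holds.

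The substantive part is (iv), where the point is to upgrade condition (iii) of Definition \ref{defi} from three-element subsets to arbitrary $S\subset T$ with $\#S\geq 3$. Given such an $S$, I would fix any three-element subset $S_0\subset S$, noting $S_0\subset T$. The definition gives $S_0^{\langle\!\langle\alpha\rangle\!\rangle}=T$; applying the $\alpha$-set operation to both sides and using part (iii) with $S_0$ in place of $S$ to collapse the resulting triple operation, I obtain the clean identity $\Ta=S_0^{\langle\alpha\rangle}$. Now the chain $S_0\subset S\subset T$ together with the order-reversing property from (ii) forces $\Ta\subset\Sa\subset S_0^{\langle\alpha\rangle}=\Ta$, whence $\Sa=\Ta$. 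Taking $\alpha$-sets once more and invoking (iii) for $S_0$ gives $\Saa=(\Ta)^{\langle\alpha\rangle}=(S_0^{\langle\alpha\rangle})^{\langle\alpha\rangle}=S_0^{\langle\!\langle\alpha\rangle\!\rangle}=T$.

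I expect the only genuine subtlety to be the bookkeeping in (iv): recognising that the identity $\Ta=S_0^{\langle\alpha\rangle}$ is exactly what is needed to sandwich $\Sa$ between $\Ta$ and $S_0^{\langle\alpha\rangle}$, and that the infinitude conditions (i) and (ii) of Definition \ref{defi} play no role here --- they will only be needed later, to exclude degenerate configurations. Everything else is a purely formal consequence of the polarity structure established in (i)--(iii).
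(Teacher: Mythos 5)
Your proof is correct and follows exactly the route the paper intends: (i)--(ii) directly from the definition and symmetry of $\measuredangle$, (iii) as a formal consequence of (i)--(ii), and (iv) by combining (i)--(iii) with condition (iii) of Definition \ref{defi} applied to a three-element subset $S_0\subset S$. The paper dismisses all four parts as trivial/straightforward, so your write-up simply supplies the details --- including the correct observation that the infinitude conditions of Definition \ref{defi} are not needed here.
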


\begin{proof}
	Points (i)--(ii) are trivial by definition. Point (iii) is an easy application of (i)--(ii), and part (iv) is straightforward from (i)--(iii).
\end{proof}

As usual, we say two lines $[u], [v]\in P(H)$ are orthogonal if $\measuredangle([u],[v]) = \frac{\pi}{2}$. We introduce the notation $\perp$ for the orthogonality of vectors and subsets in $H$, and also for the orthogonality of lines in $P(H)$.
We continue with two lemmas about the general form of a pair of lines and its $\alpha$-set.

\begin{lemma}\label{2}
Let $[v_1], [v_2]\in P(H)$ be two different lines. 
Then there exist an orthonormal system $\{e_1, e_2\}\subset H$ and real numbers $c\geq d>0$, $c^2+d^2 = 1$ such that 
\begin{equation*}
[v_1] = [ce_1+ ide_2], \;\;\; [v_2] = [ce_1- ide_2].
\end{equation*}
\end{lemma}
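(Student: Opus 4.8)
The plan is to reduce everything to the two-dimensional subspace $M := [v_1,v_2]$, which satisfies $\dim M = 2$ precisely because $[v_1]\neq[v_2]$, and then to build the orthonormal pair $\{e_1,e_2\}$ by a symmetric sum/difference construction after a suitable phase normalisation. Throughout I work with fixed unit representatives $v_1,v_2$ of the two lines, keeping in mind that each is determined only up to a unimodular scalar.

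First I would normalise the phase. After multiplying $v_2$ by a suitable element of $\T$ (which does not change the line $[v_2]$) I may assume that $\langle v_1,v_2\rangle = r$ is real and nonnegative. Since the lines are distinct, $r = |\langle v_1,v_2\rangle| < 1$, so that $0\le r<1$. This single choice is what will eventually force the inequality $c\ge d$, so it is worth isolating at the outset.

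Next I would pass to the sum and difference $w := v_1+v_2$ and $u := v_1-v_2$. Using $\langle v_2,v_1\rangle = \overline{\langle v_1,v_2\rangle} = r$, a direct computation gives $\langle w,u\rangle = \|v_1\|^2-\|v_2\|^2 = 0$, so $w\perp u$, while $\|w\|^2 = 2(1+r)$ and $\|u\|^2 = 2(1-r)$ are both strictly positive by the bounds $0\le r<1$. Hence, setting $e_1 := w/\|w\|$ and $f := u/\|u\|$ produces an orthonormal pair in $M$, and solving $v_1 = \tfrac12(w+u)$, $v_2=\tfrac12(w-u)$ yields
$$
v_1 = c e_1 + d f, \qquad v_2 = c e_1 - d f, \qquad c := \sqrt{\tfrac{1+r}{2}}, \quad d := \sqrt{\tfrac{1-r}{2}}.
$$
Here $c^2+d^2 = 1$, while $c^2 - d^2 = r$ gives exactly $c\ge d$, and $d>0$ is equivalent to $r<1$, i.e. to the distinctness of the two lines.

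Finally, to turn this real symmetric decomposition into the prescribed form with $\pm i$, I would absorb a phase into the second basis vector by putting $e_2 := -if$. Then $\{e_1,e_2\}$ remains orthonormal, and since $ide_2 = -i^2 df = df$ we obtain $[v_1] = [ce_1+ide_2]$ and $[v_2] = [ce_1-ide_2]$, as claimed. I do not expect any genuine obstacle here, as the entire argument lives in a two-dimensional space; the only points demanding care are the two phase manoeuvres — the initial normalisation that secures $c\ge d$, and the insertion of the factor $i$ in the last step that converts the real off-diagonal term $\pm d f$ into the desired imaginary one $\pm i d e_2$.
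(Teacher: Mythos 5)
Your proof is correct and follows essentially the same route as the paper's own direct argument: normalise the phase so that $\langle v_1,v_2\rangle\geq 0$, observe that $v_1+v_2$ and $v_1-v_2$ are orthogonal with the larger norm on the sum, and rescale (absorbing a factor of $i$ into the second basis vector) to get the stated form. The only cosmetic difference is that you make the phase absorption $e_2:=-if$ and the formulas $c=\sqrt{(1+r)/2}$, $d=\sqrt{(1-r)/2}$ explicit, which the paper leaves to "a calculation".
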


\begin{proof}
	An application of the famous Bloch representation gives a simple proof. 
	However, in case the reader is not that familiar with it, a more direct proof can be given as follows. Since $[v_j] = [\lambda v_j]$ for all $\lambda\in\T$ and $j=1,2$, without loss of generality we may assume that $\langle v_1, v_2\rangle \geq 0$. Hence $v_1+v_2 \perp v_1-v_2$ and $0 < \|v_1-v_2\| \leq \|v_1+v_2\|$ hold. Since $\|v_1+v_2\|^2 + \|v_1-v_2\|^2 = 4$, there exist two numbers $c\geq d>0$, $c^2+d^2 = 1$ and an orthonormal system $\{e_1, e_2\}$ such that $v_1+v_2 = 2c e_1$ and $v_1-v_2 = 2id e_2$. From here a calculation gives the desired form.
\end{proof}

We introduce the notation $\sqcup$ for the disjoint union. We also set $a := \cos\alpha$ which we shall use throughout the paper.

\begin{lemma}\label{shape}
Let $\{e_1, e_2\}\subset H$ be an orthonormal system and $c\geq d>0$ with $c^2+d^2 = 1$. 
Define the function
$$
\rho\colon [-\theta_0,\theta_0]\to [0,1], \;\;\; \rho(\theta) = \sqrt{1-\left(\frac{a}{c}\right)^2\cos^2\theta-\left(\frac{a}{d}\right)^2\sin^2\theta},
$$
where 
\begin{itemize}
	\item if $a\leq d$, then $\theta_0 = \frac{\pi}{2}$,
	\item if $a > d$, then $\theta_0$ is the unique number with $0<\theta_0< \frac{\pi}{2}$ and $\left(\frac{a}{c}\right)^2\cos^2\theta_0+\left(\frac{a}{d}\right)^2\sin^2\theta_0 = 1$.
\end{itemize}
Then we have
\begin{equation}\label{eq:shape}
\{[ce_1+ ide_2], [ce_1- ide_2]\}^{\langle \alpha\rangle} = \bigsqcup\left\{ \mathcal{A}_{\theta} \colon -\theta_0 \leq \theta \leq \theta_0, \; \theta\neq-\frac{\pi}{2} \right\},
\end{equation}
where
\begin{equation}\label{eq:A-theta}
\mathcal{A}_{\theta}:= \left\{ \left[\frac{a}{c}\cos\theta \cdot e_1 + \frac{a}{d}\sin\theta \cdot e_2 + h\right] \colon
h\perp \{e_1, e_2\}, \|h\| = \rho(\theta) \right\}.
\end{equation}
\end{lemma}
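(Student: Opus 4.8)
The plan is to compute the left-hand side of \eqref{eq:shape} directly, by writing the two defining conditions $\measuredangle([w],[v_1])=\alpha$ and $\measuredangle([w],[v_2])=\alpha$ in coordinates adapted to $\{e_1,e_2\}$ and solving the resulting system. I decompose an arbitrary unit vector as $w=\beta_1 e_1+\beta_2 e_2+h$ with $h\perp\{e_1,e_2\}$ and $|\beta_1|^2+|\beta_2|^2+\|h\|^2=1$. A direct computation gives $\langle w,v_1\rangle=c\beta_1-id\beta_2$ and $\langle w,v_2\rangle=c\beta_1+id\beta_2$ (up to an overall conjugation, which is immaterial). Since $\measuredangle([w],[v_j])=\alpha$ means exactly $|\langle w,v_j\rangle|=a$, expanding
\[
|c\beta_1\mp id\beta_2|^2=c^2|\beta_1|^2+d^2|\beta_2|^2\mp 2cd\,\im(\beta_1\overline{\beta_2})
\]
and taking the sum and difference of the two equations $|\langle w,v_j\rangle|^2=a^2$ shows, using $cd>0$, that $[w]$ lies in the $\alpha$-set if and only if $c^2|\beta_1|^2+d^2|\beta_2|^2=a^2$ and $\im(\beta_1\overline{\beta_2})=0$.

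Next I would exploit the freedom $[w]=[\lambda w]$, $\lambda\in\T$, to normalise the representative. The equation $\im(\beta_1\overline{\beta_2})=0$ says precisely that $\beta_1\overline{\beta_2}\in\R$; since $a>0$ forbids $\beta_1=\beta_2=0$, multiplying $w$ by a suitable unimodular scalar turns both $\beta_1$ and $\beta_2$ into real numbers. The surviving constraint $c^2\beta_1^2+d^2\beta_2^2=a^2$ is then an ellipse in the real $(\beta_1,\beta_2)$-plane, which I parametrise by $\beta_1=\tfrac{a}{c}\cos\theta$, $\beta_2=\tfrac{a}{d}\sin\theta$. The normalisation $\|w\|=1$ forces $\|h\|^2=1-(a/c)^2\cos^2\theta-(a/d)^2\sin^2\theta=\rho(\theta)^2$, which is exactly the description of $\mathcal{A}_\theta$ in \eqref{eq:A-theta}.

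I would then determine the admissible range of $\theta$. Feasibility is the condition $\rho(\theta)^2\ge0$, and writing $(a/c)^2\cos^2\theta+(a/d)^2\sin^2\theta=(a/c)^2+[(a/d)^2-(a/c)^2]\sin^2\theta$ shows this quantity is nondecreasing in $|\theta|$ on $[0,\tfrac{\pi}{2}]$, because $d\le c$. Since $\alpha>\tfrac{\pi}{4}$ gives $a^2<\tfrac12\le c^2$, its value at $\theta=0$ is strictly below $1$. Hence the feasible set is $|\theta|\le\theta_0$, where $\theta_0=\tfrac{\pi}{2}$ exactly when $(a/d)^2\le1$ (the case $a\le d$) and $\theta_0$ is the unique interior root otherwise (the case $a>d$), matching the two definitions given in the statement.

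The step requiring the most care — and the one I regard as the chief obstacle — is verifying that the right-hand side of \eqref{eq:shape} truly is a \emph{disjoint} union, i.e.\ that distinct admissible parameters give distinct lines. After $\beta_1,\beta_2$ are fixed to be real, the only residual scalar ambiguity is $\lambda=\pm1$; the choice $\lambda=-1$ sends $(\theta,h)$ to $(\theta+\pi,-h)$, which within $[-\theta_0,\theta_0]$ identifies only the two endpoints $\theta=\pm\tfrac{\pi}{2}$ in the case $\theta_0=\tfrac{\pi}{2}$. Removing the single value $\theta=-\tfrac{\pi}{2}$ therefore eliminates exactly this redundancy, whereas for $a>d$ the endpoints $\pm\theta_0$ lie strictly inside $(-\tfrac{\pi}{2},\tfrac{\pi}{2})$ and no identification occurs. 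Checking that no further coincidences arise — in particular that two vectors with the same real $(\beta_1,\beta_2)$ but different $h$ span different lines, since the nonzero $e_1,e_2$-components pin down $\lambda$ — then completes the proof.
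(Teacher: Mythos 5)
Your proposal is correct and follows essentially the same route as the paper: both decompose an arbitrary unit vector along $e_1,e_2$ and their orthogonal complement, reduce the two conditions $|\langle w,v_j\rangle|=a$ to the real ellipse $c^2\beta_1^2+d^2\beta_2^2=a^2$ (the paper by normalising the $e_1$-coefficient to be nonnegative and a short geometric observation, you by taking the sum and difference of the squared equations), and then parametrise by $\theta$ with $\|h\|=\rho(\theta)$. The only difference is one of emphasis: the disjointness of the union and the exclusion of $\theta=-\frac{\pi}{2}$, which you verify carefully via the residual $\pm1$ ambiguity, are dismissed in the paper as obvious.
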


\begin{proof}
Notice that by our assumptions we always have $c > a$.
Since $0 < \frac{a}{c} \leq \frac{a}{d}$, the function $\theta\mapsto \left(\frac{a}{c}\right)^2\cos^2\theta+\left(\frac{a}{d}\right)^2\sin^2\theta$ is positive-valued, monotone non-increasing on $[-\frac{\pi}{2},0]$, and monotone non-decreasing on $[0,\frac{\pi}{2}]$. As $\frac{a}{c} < 1$, we have a real number $0<\theta_0\leq \frac{\pi}{2}$ with the desired property.

Consider an arbitrary line $[v]\in P(H)$. 
We may take numbers $c_1\geq 0$, $c_2\in\C$ and a vector $h\perp \{e_1,e_2\}$ such that $c_1^2 + |c_2|^2 + \|h\|^2 = 1$ and $[v] = [c_1 e_1+c_2e_2+h]$.
Then we have $[v] \in \{[ce_1+ ide_2], [ce_1- ide_2]\}^{\langle \alpha\rangle}$ if and only if
$$
\left| c_1c + ic_2d \right| = \left| c_1c - ic_2d \right| = a.
$$
This is equivalent to
\begin{itemize}
\item either $c_1>0$, $c_2\in\R$ and $(c_1c)^2 + (c_2d)^2=a^2$,
\item or $c_1=0$ and $|c_2|d = a$, in which case we may assume without loss of generality that $c_2 = \frac{a}{d}$.
\end{itemize}
Therefore $c_1c = a\cos\theta$ and $c_2d = a\sin\theta$ for some $-\frac{\pi}{2}\leq\theta\leq\frac{\pi}{2}$, which proves the $\subseteq$ part of \eqref{eq:shape}. The $\supseteq$ part of \eqref{eq:shape} and the disjointness are obvious.
\end{proof}

Note that in case when $\theta_0 = \frac{\pi}{2}$, then the set $\mathcal{A}_{-\frac{\pi}{2}}$ is well defined by \eqref{eq:A-theta}, however, we have $\mathcal{A}_{-\frac{\pi}{2}} = \mathcal{A}_{\frac{\pi}{2}}$.
Throughout the paper whenever we use the symbols $c$ and $d$, it is always assumed that $c\geq d>0$ and $c^2+d^2 = 1$. Therefore, like in the above proof, the inequality $c>a$ is automatically satisfied.

Straightforward calculations give the following properties of $\rho$, which are also illustrated in Figure \ref{fig:rho} for the reader's convenience:
\begin{itemize}
	\item $\rho$ is an even continuous function on $[-\theta_0,\theta_0]$, differentiable on $(-\theta_0,\theta_0)$, and $\rho'(0) = 0$,
	\item if $d<\sqrt{\frac{1}{2}}$, then $\rho$ is strictly increasing on $[-\theta_0,0]$, and strictly decreasing on $[0,\theta_0]$,
	\item if $d=\sqrt{\frac{1}{2}}$, then $\rho$ is the constant $\sqrt{1-2a^2}$ function,
	\item $\rho(\theta_0) = 0$ if and only if $a\geq d$.
\end{itemize}

\begin{figure}[h]
    \flushleft
    \begin{subfigure}[t]{0.425\textwidth}
        \includegraphics[scale=0.5]{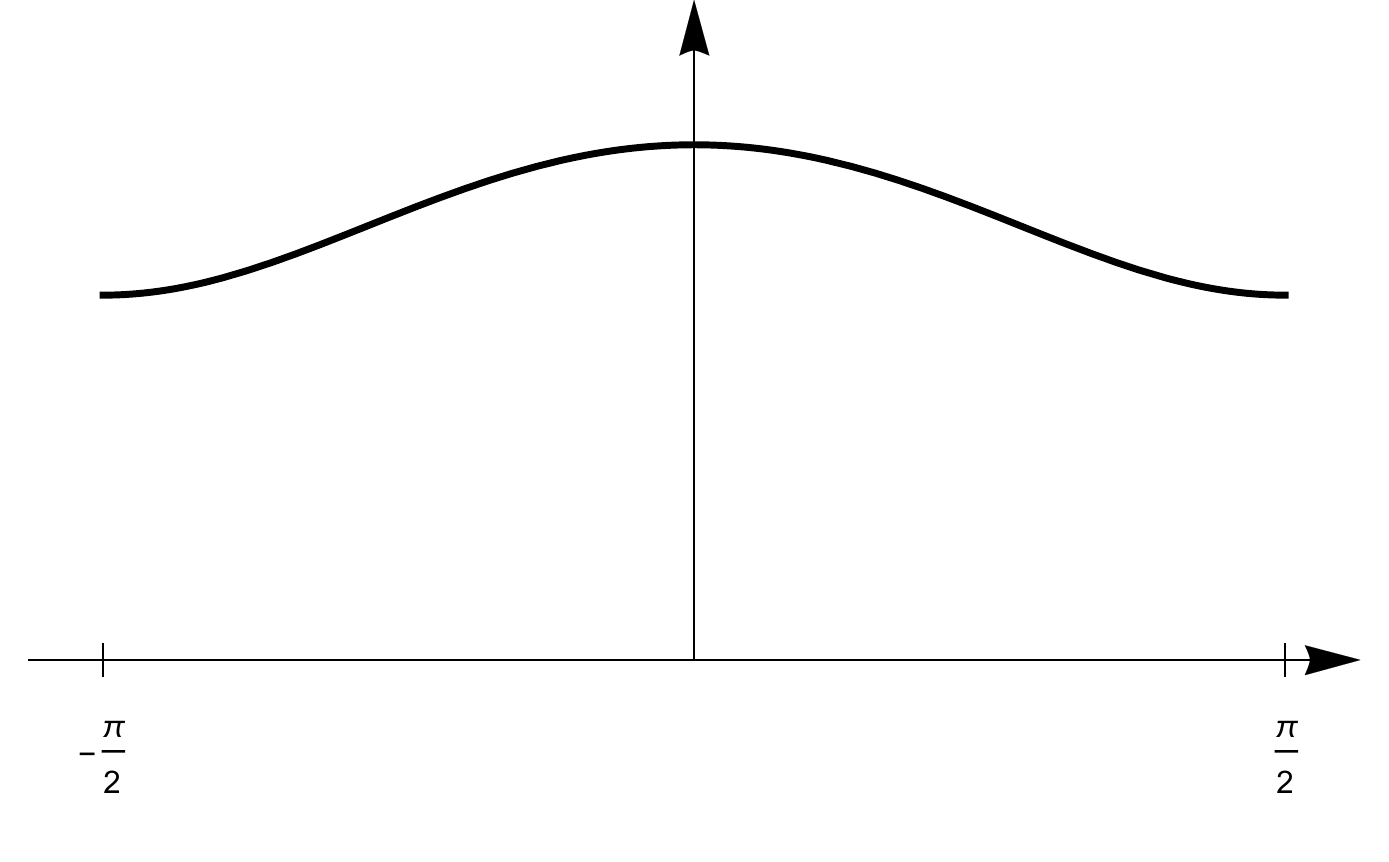}
        \caption{When $a<d<\sqrt{\frac{1}{2}}$. Then $\theta_0 = \frac{\pi}{2}$ and $\rho(\theta_0) > 0.$}
    \end{subfigure}
    \hspace{1cm}
    \begin{subfigure}[t]{0.425\textwidth}
        \includegraphics[scale=0.5]{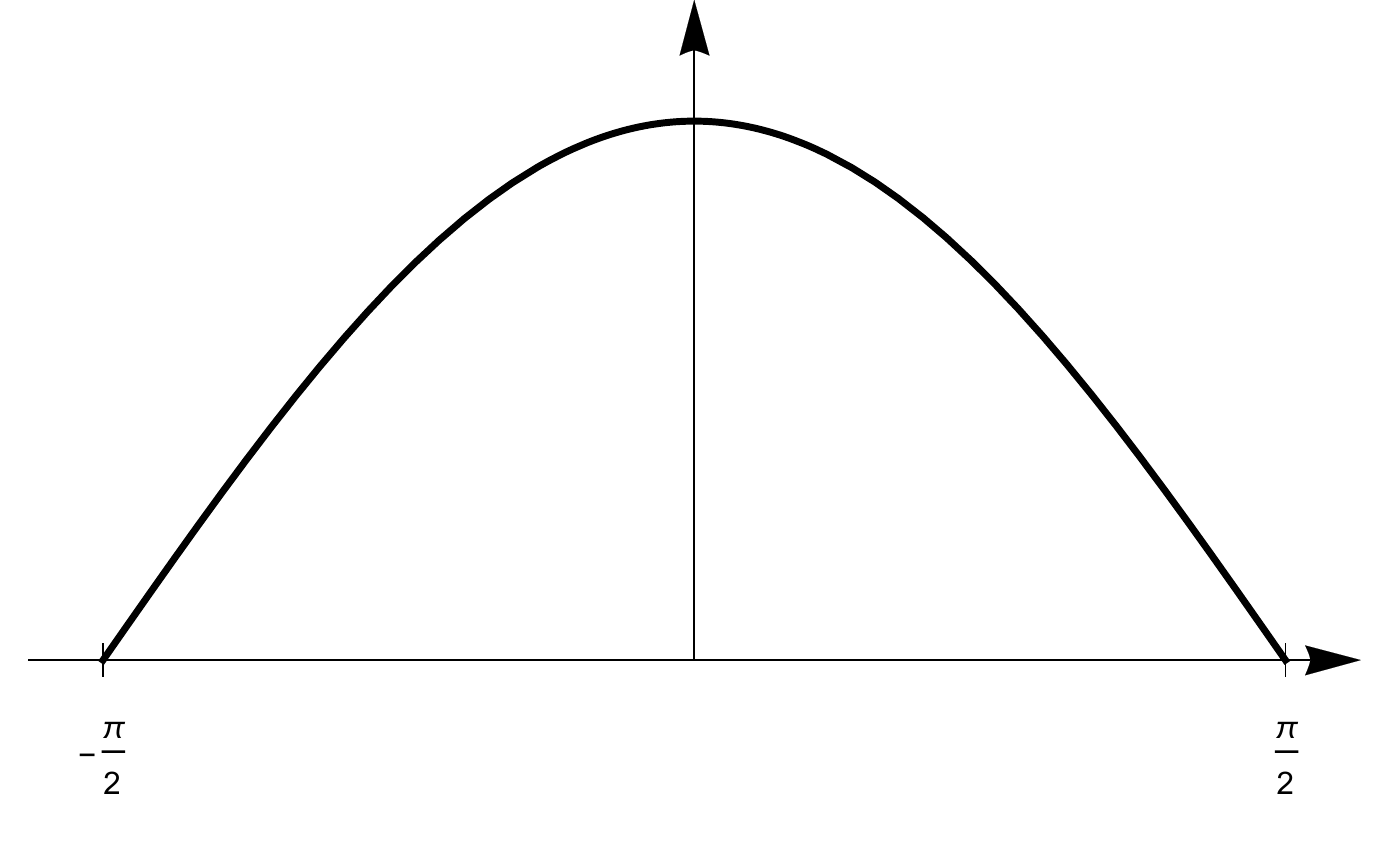}
        \caption{When $a=d<\sqrt{\frac{1}{2}}$. Then $\theta_0 = \frac{\pi}{2}$, and $\rho(\theta_0) = 0.$}
    \end{subfigure}
    
    \bigskip
    
    \begin{subfigure}[t]{0.425\textwidth}
        \includegraphics[scale=0.5]{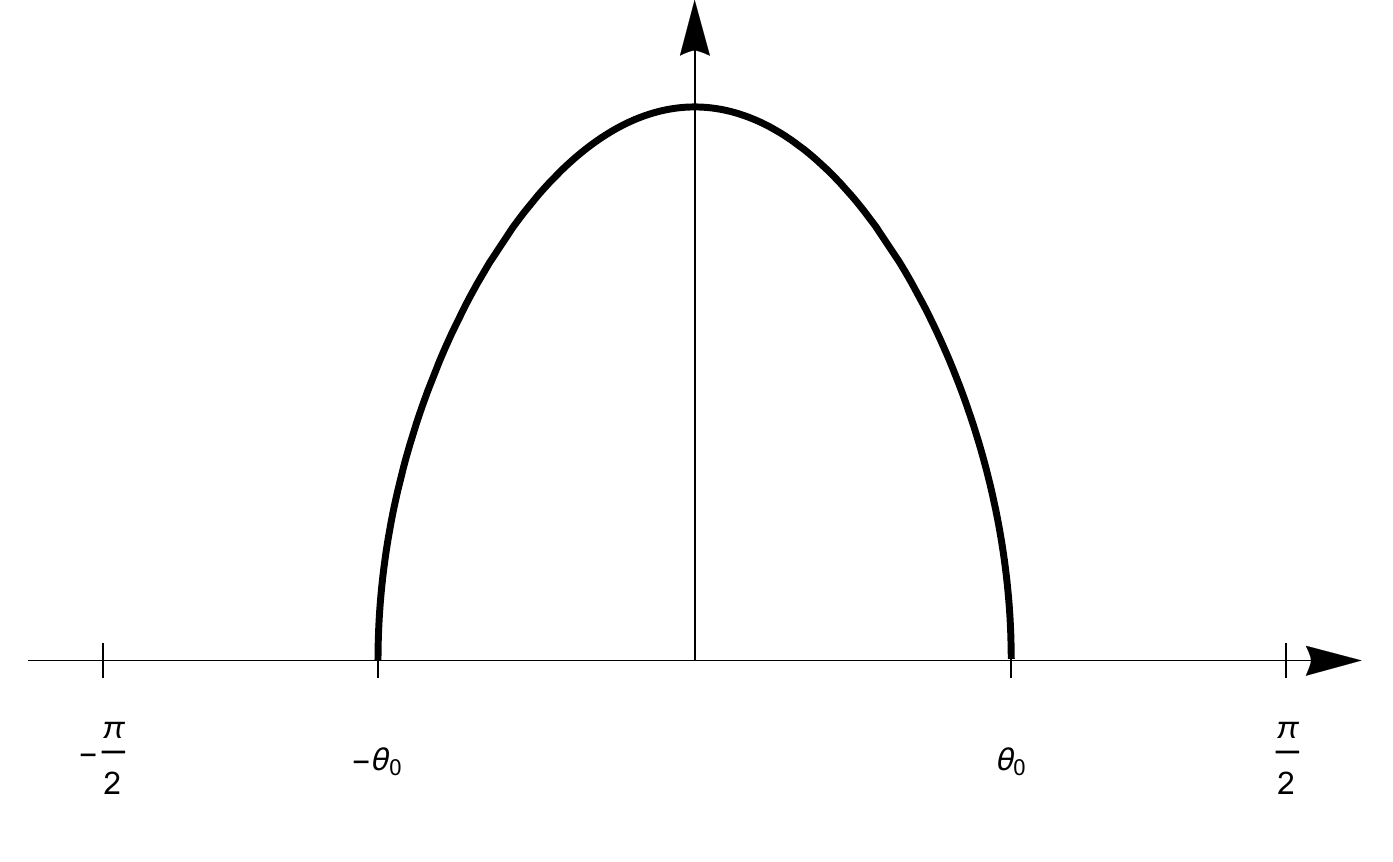}
        \caption{When $a>d$. Then $d<\sqrt{\frac{1}{2}}$, $0<\theta_0<\frac{\pi}
{2}$ and $\rho(\theta_0) = 0.$}
    \end{subfigure}
    \hspace{1cm}
    \begin{subfigure}[t]{0.425\textwidth}
        \includegraphics[scale=0.5]{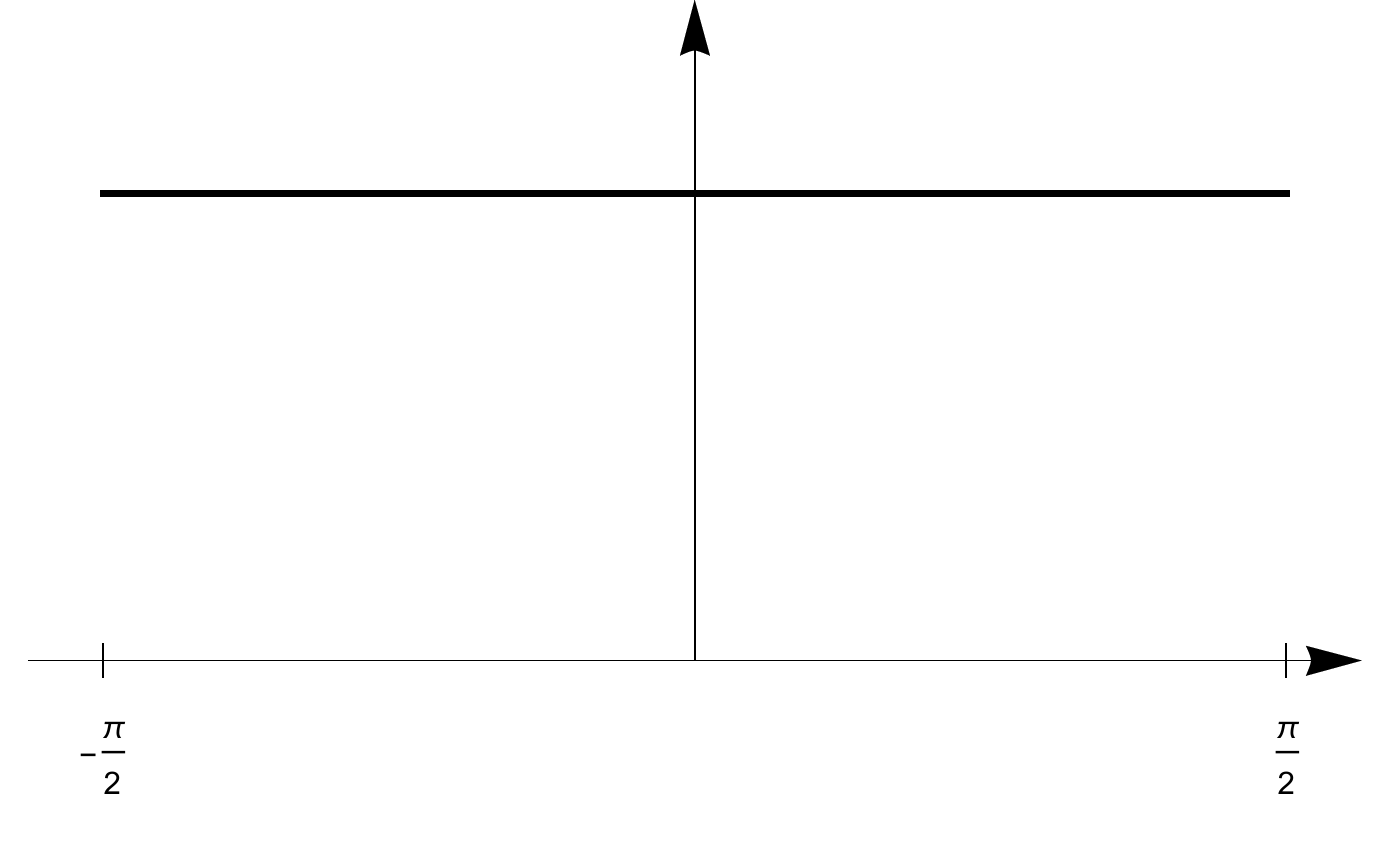}
        \caption{When $d=\sqrt{\frac{1}{2}}$. Then $\theta_0 = \frac{\pi}{2}$, and $\rho$ is a positive constant function.}
    \end{subfigure}
    \caption{Illustration of the function $\rho$.}\label{fig:rho}
\end{figure}

The following two lemmas give the general form of a collinear triple of lines and its $\alpha$-set.

\begin{lemma}\label{3}
Let $[v_1], [v_2], [v_3]\in P(H)$ be three collinear lines that are pairwise different.
Then there exist an orthonormal system $\{e_1, e_2\}\subset H$, three numbers $\lambda_1, \lambda_2, \lambda_3 \in \T$, and two real numbers $c\geq d>0$, $c^2+d^2 = 1$ such that   
$$
[v_j] = [ce_1+\lambda_{j}de_2] \qquad (j=1, 2, 3).
$$
\end{lemma}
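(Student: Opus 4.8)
The plan is to reduce the statement to a two-dimensional Hilbert space and then read off the required form from the Bloch representation. Since $[v_1],[v_2],[v_3]$ are collinear and pairwise different, the subspace $M:=[v_1,v_2,v_3]$ is exactly two-dimensional, and any $e_1,e_2,\lambda_j,c,d$ produced inside $M$ works verbatim in $H$; so I may assume $\dim H=2$ and identify $P(H)$ with the Bloch sphere $S^2\subset\R^3$. Under this identification a line $[v]$ corresponds to a unit Bloch vector $r(v)\in S^2$, and for any orthonormal basis $\{e_1,e_2\}$ of $M$ one has $|\langle v,e_1\rangle|^2=\tfrac{1}{2}(1+n\cdot r(v))$, where $n=r(e_1)$ is the Bloch vector of $[e_1]$ (the ``north pole''). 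The crucial observation is that the desired conclusion is precisely the assertion that $[v_1],[v_2],[v_3]$ lie on a common latitude: writing $[v_j]=[ce_1+\lambda_j d e_2]$ with fixed $c\ge d>0$ means exactly that $n\cdot r(v_j)$ takes the same value for $j=1,2,3$, while the phases $\lambda_j\in\T$ absorb the differing longitudes.

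Thus the key step is to produce the pole. I would set $r_j:=r(v_j)$ and look for a unit vector $n\in\R^3$ with $n\cdot r_1=n\cdot r_2=n\cdot r_3$, that is, $n\perp(r_1-r_2)$ and $n\perp(r_1-r_3)$. Because the three lines are pairwise different, the points $r_1,r_2,r_3$ are pairwise distinct points of $S^2$; three distinct points of a sphere are never collinear in $\R^3$ (a line meets $S^2$ in at most two points), so $r_1-r_2$ and $r_1-r_3$ are linearly independent and span a plane whose unit normal $n$ exists and is unique up to sign. Write $t:=n\cdot r_j$ (independent of $j$) and choose the sign of $n$ so that $t\ge0$.

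With this $n$ in hand, let $e_1\in M$ be a unit vector with $r(e_1)=n$, and let $e_2\in M$ be a unit vector orthogonal to $e_1$, so that $\{e_1,e_2\}$ is an orthonormal basis of $M$. Set $c:=\sqrt{\tfrac{1}{2}(1+t)}$ and $d:=\sqrt{\tfrac{1}{2}(1-t)}$, whence $c^2+d^2=1$ and, by the formula above, $|\langle v_j,e_1\rangle|=c$ for every $j$; since $v_j\in M=[e_1,e_2]$ this forces $|\langle v_j,e_2\rangle|=\sqrt{1-c^2}=d$ as well. Here $t\ge0$ gives $c\ge d$, and $t<1$ gives $d>0$: indeed $t=1$ would force $r_j=n$ for all $j$, contradicting the distinctness of the $r_j$. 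Finally, expanding each unit vector $v_j\in[e_1,e_2]$ in the basis $\{e_1,e_2\}$ and multiplying by a suitable unimodular scalar so that its $e_1$-coefficient equals $c$, its $e_2$-coefficient becomes $\lambda_j d$ with $\lambda_j\in\T$, i.e.\ $[v_j]=[ce_1+\lambda_j d e_2]$.

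The only genuine obstacle is the existence of the common pole $n$, which is the geometric heart of the lemma; once the Bloch dictionary is in place this is the elementary linear-algebra fact above, together with the two sign and degeneracy checks ($t\ge0$ to secure $c\ge d$, and $t<1$, equivalently the distinctness of the $r_j$, to secure $d>0$). An alternative, Bloch-free route would start from the representation of $[v_1],[v_2]$ furnished by Lemma~\ref{2} and rotate the orthonormal basis so as to bring $[v_3]$ onto the same latitude, but solving for the correct rotation explicitly is computationally heavier than the coordinate-free argument sketched here.
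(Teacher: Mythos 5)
Your argument is correct, and every step checks out: the reduction to $M=[v_1,v_2,v_3]$ with $\dim M=2$, the Bloch dictionary $|\langle v,e_1\rangle|^2=\tfrac12\left(1+n\cdot r(v)\right)$, the existence (unique up to sign) of the common normal $n$ because three distinct points of $S^2$ are never collinear in $\R^3$, and the two degeneracy checks ($t\ge 0$ by the choice of sign of $n$, and $t<1$ since $t=1$ would force $r_1=r_2=r_3=n$). However, this is a genuinely different route from the proof the paper writes out. The paper's proof starts from Lemma \ref{2}: writing $[v_1]=[\mathfrak{c}f_1+i\mathfrak{d}f_2]$, $[v_2]=[\mathfrak{c}f_1-i\mathfrak{d}f_2]$, one has $|\langle v_1,\cos t\, f_1+\sin t\, f_2\rangle| = |\langle v_2,\cos t\, f_1+\sin t\, f_2\rangle|$ for all $t\in[0,\tfrac{\pi}{2}]$, and an intermediate value argument (comparing the coefficients of $[v_3]$ against $\mathfrak{c},\mathfrak{d}$ at the endpoints $t=0$ and $t=\tfrac{\pi}{2}$) yields a direction $e_1=\cos t\, f_1+\sin t\, f_2$ on which all three moduli agree; Parseval's formula then forces agreement on the orthogonal direction $e_2$ inside $[v_1,v_2]$, and interchanging $e_1$ and $e_2$ if necessary gives $c\ge d$. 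Notably, the paper's proof opens by remarking that ``an application of the Bloch representation gives a geometric and simple proof'' before choosing the direct route --- your proof is precisely that unexecuted alternative, fleshed out. What each approach buys: yours is coordinate-free and makes the geometric content transparent (the lemma literally says that any three points of the Bloch sphere lie on a common latitude), at the cost of importing the Bloch correspondence as external machinery; the paper's version stays entirely within elementary inner-product computations plus continuity, so it is self-contained. One small correction to your closing remark: the Lemma \ref{2}-based route does not require ``solving for the correct rotation explicitly'' --- the paper obtains the rotation non-constructively from the intermediate value theorem, so that route is no heavier than yours.
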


\begin{proof}
	An application of the Bloch representation gives a geometric and simple proof. We give another more direct proof here. By Lemma \ref{2}, we can write $[v_1] = [\mathfrak{c}{f}_1+ i\mathfrak{d}{f}_2]$ and $[v_2] = [\mathfrak{c}{f}_1- i\mathfrak{d}{f}_2]$ where $\{{f}_1, {f}_2\}$ is an orthonormal system, $\mathfrak{c}\geq \mathfrak{d}>0$, $\mathfrak{c}^2+\mathfrak{d}^2 = 1$. A straightforward calculation gives that
	$$
	\left|\langle v_1, \cos t {f}_1 + \sin t {f}_2\rangle\right| = \left|\langle v_2, \cos t {f}_1 + \sin t {f}_2\rangle\right| \qquad \left(0\leq t\leq \frac{\pi}{2}\right).
	$$
	
	We may take numbers $\mathfrak{c}_1\geq 0$ and $\mathfrak{c}_2\in\C$ such that $\mathfrak{c}_1^2 + |\mathfrak{c}_2|^2 = 1$ and $[v_3] = [\mathfrak{c}_1{f}_1+\mathfrak{c}_2{f}_2]$. On the one hand, suppose that $\mathfrak{c}_1 \geq \mathfrak{c}$. Then $|\mathfrak{c}_2| \leq \mathfrak{d}$,
	$$
	\left|\langle v_1, {f}_1\rangle\right| = \left|\langle v_2, {f}_1\rangle\right| = \mathfrak{c} \leq \mathfrak{c}_1= \left|\langle v_3, {f}_1\rangle\right|
	$$
	and 
	$$
	\left|\langle v_1, {f}_2\rangle\right| = \left|\langle v_2, {f}_2\rangle\right| = \mathfrak{d} \geq |\mathfrak{c}_2|= \left|\langle v_3, {f}_2\rangle\right|.
	$$
	Therefore there exists a $0\leq t\leq \frac{\pi}{2}$ such that with $e_1 := \cos t {f}_1 + \sin t {f}_2$ we have
	\begin{equation}\label{eq:M1}
	\left|\langle v_1, e_1\rangle\right| = \left|\langle v_2, e_1\rangle\right| = \left|\langle v_3, e_1\rangle\right|.
	\end{equation}
	On the other hand, if $\mathfrak{c}_1 < \mathfrak{c}$, then we prove the existence of a line $[e_1]$ with \eqref{eq:M1} in a very similar way.
	
	Now, let $[e_2]$ be the unique line which is orthogonal to $[e_1]$ and is contained in the subspace $[v_1, v_2]$. Parseval's formula implies
	$$
	\left|\langle v_1, e_2\rangle\right| = \left|\langle v_2, e_2\rangle\right| = \left|\langle v_3, e_2\rangle\right|.
	$$
	By interchanging the role of $e_1$ and $e_2$ if necessary, we may assume $c := |\langle v_1, e_1\rangle| \geq |\langle v_1, e_2\rangle| =: d$, which completes the proof.
\end{proof}

\begin{lemma}\label{collin-alpha}
Let $c\geq d>0$ such that $c^2+d^2= 1$, $\lambda_1,\lambda_2,\lambda_3\in\T$ pairwise different, and $\{e_1, e_2\}$ an orthonormal system of $H$. Set $S_0:=\{[ce_1+\lambda_{j}de_2]\colon j=1, 2, 3\} \subset P(H)$. 
\begin{itemize}
	\item[(i)] If $a > d$, then 
	$$\Sza = \left\{\left[\frac{a}{c}e_1 + h\right] \colon h\in H,\, \lVert h\rVert = \sqrt{1-\frac{a^2}{c^2}},\, h\perp \{e_1, e_2\}\right\}.$$
	\item[(ii)] If $a \leq d$, then 
	\begin{align*}
	\Sza &= \left\{\left[\frac{a}{c}e_1 + h\right] \colon h\in H,\, \lVert h\rVert = \sqrt{1-\frac{a^2}{c^2}},\, h\perp \{e_1, e_2\}\right\} \\
	&\hspace{1.5cm} \bigsqcup \left\{\left[\frac{a}{d}e_2 + h\right] \colon h\in H,\, \lVert h\rVert = \sqrt{1-\frac{a^2}{d^2}},\, h\perp \{e_1, e_2\}\right\}.
	\end{align*}
\end{itemize}
\end{lemma}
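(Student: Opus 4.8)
The plan is to compute $\Sza$ directly from the defining condition. I would fix an arbitrary line $[v]\in P(H)$ and, exactly as in the proof of Lemma~\ref{shape}, write it as $[v]=[c_1e_1+c_2e_2+h]$ with $c_1\geq 0$, $c_2\in\C$, $h\perp\{e_1,e_2\}$, and $c_1^2+|c_2|^2+\|h\|^2=1$. Since each $ce_1+\lambda_j de_2$ is a unit vector, expanding the inner products gives
$$
\left|\langle v,\,ce_1+\lambda_j de_2\rangle\right|^2=c_1^2c^2+|c_2|^2d^2+2c_1cd\,\operatorname{Re}\!\left(c_2\overline{\lambda_j}\right),
$$
so that $[v]\in\Sza$ precisely when all three of these quantities equal $a^2$ for $j=1,2,3$.

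The decisive observation is that the term $c_1^2c^2+|c_2|^2d^2$ is independent of $j$. Assuming first $c_1>0$, the three equations then force $\operatorname{Re}(c_2\overline{\lambda_j})$ to take one common value; subtracting the $j=1$ equation from the other two yields $\operatorname{Re}\!\big(c_2(\overline{\lambda_2}-\overline{\lambda_1})\big)=\operatorname{Re}\!\big(c_2(\overline{\lambda_3}-\overline{\lambda_1})\big)=0$. Reading $c_2$ as an element of $\R^2$, this is a homogeneous real-linear system whose two coefficient vectors are the chords $\overline{\lambda_2}-\overline{\lambda_1}$ and $\overline{\lambda_3}-\overline{\lambda_1}$. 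I expect the \emph{heart of the argument} to be the remark that, because $\lambda_1,\lambda_2,\lambda_3$ are three distinct points of $\T$, they cannot be collinear in the plane, so these two chords are linearly independent over $\R$; hence the functional $w\mapsto\operatorname{Re}(c_2 w)$ vanishes on a basis of $\R^2$ and therefore $c_2=0$. The surviving equation is then $c_1c=a$, i.e. $c_1=a/c$, which is admissible since $c>a$ always holds and forces $\|h\|=\sqrt{1-a^2/c^2}>0$. This produces exactly the first family appearing in both (i) and (ii).

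It remains to treat $c_1=0$, where the condition collapses to $|c_2|d=a$, i.e. $|c_2|=a/d$, with $\|h\|^2=1-a^2/d^2$. This is solvable exactly when $a\leq d$; when it is, multiplying $v$ by a suitable unimodular scalar lets us take $c_2=a/d$, which yields the second family in (ii), while for $a>d$ it contributes nothing, leaving only the first family in (i). Finally I would note that the two families are disjoint, their $e_1$-coordinates being $a/c>0$ and $0$ respectively, which gives the disjoint unions exactly as stated. Apart from the non-collinearity/linear-independence step, every part of this is routine bookkeeping with the coordinates fixed at the outset.
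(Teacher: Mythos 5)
Your proof is correct and takes essentially the same route as the paper: the same coordinatization $[v]=[c_1e_1+c_2e_2+h]$ with $c_1\geq 0$, the same reduction to the three modulus equations, and the same two resulting families with the $a>d$ versus $a\leq d$ dichotomy. The only difference is that where the paper disposes of the key step by ``a simple geometric observation'' (three distinct points of the circle $\{c_1c+c_2\lambda d\colon\lambda\in\T\}$ lying on the circle of radius $a$ force the circles to coincide or degenerate), you make that observation explicit through the real-linear independence of the chords $\overline{\lambda_2}-\overline{\lambda_1}$ and $\overline{\lambda_3}-\overline{\lambda_1}$, which is a perfectly valid way to fill in the same step.
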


\begin{proof}
	Note that $c > a$.
	Consider an arbitrary line $[v]\in P(H)$. 
	We may take numbers $c_1\geq 0$, $c_2\in\C$ and a vector $h\perp \{e_1,e_2\}$ such that $c_1^2 + |c_2|^2 + \|h\|^2 = 1$ and $[v] = [c_1 e_1+c_2e_2+h]$. Then we have $[v]\in \Sza$ if and only if
	$$
	\left|c_1c+c_2\overline{\lambda_j}d\right| = a \qquad (j=1,2,3).
	$$
	Since the numbers $\lambda_j$ are pairwise different, a simple geometric observation implies that
	$$
	|c_1c+c_2{\lambda}d| = a \qquad (\lambda\in\T).
	$$
	Thus $[v]\in \Sza$ if and only if
	\begin{itemize}
		\item either $c_2 = 0$, $c_1 = \frac{a}{c}$,
		\item or $c_1 = 0$, $|c_2| = \frac{a}{d}$.
	\end{itemize}
	Note that without loss of generality $c_2>0$ may be assumed in the latter case. This completes the proof.
\end{proof}

We finish this section with an important definition.

\begin{definition}[Circle]\label{def:circle}
For any orthonormal system $\{e_1,e_2\}\subset H$ and numbers $\mathfrak{c},\mathfrak{d}>0$, $\mathfrak{c}^2+\mathfrak{d}^2=1$, the set of the form $\{[\mathfrak{c}e_1 + \lambda \mathfrak{d}e_2] \colon \lambda\in \T\}$ is called a circle.
\end{definition}

Set $M := [e_1,e_2]$ with the above vectors and consider the Bloch representation of $P(M)$ (see for instance \cite{G}). Remark that a straightforward calculation shows that the image of the circle $\{[\mathfrak{c}e_1 + \lambda \mathfrak{d}e_2] \colon \lambda\in \T\}$ is an actual circle on the surface $\mathbb{S}^2$, hence the above choice of the name. Moreover, it is a great (or geodesic) circle if and only if $\mathfrak{c}=\mathfrak{d}=\frac{1}{\sqrt{2}}$.

In the forthcoming two sections we shall explore how the double-$\alpha$-set of $S_0$ looks like, and will also examine highly-$\alpha$-symmetric sets in detail.


\section{The structure of highly-$\alpha$-symmetric sets in the at least four-dimensional case}\label{sec:3}

Our goal in this section is to show that highly-$\alpha$-symmetric sets are exactly circles in $P(H)$ if $\dim H\geq 4$.
First, we calculate the double-$\alpha$-set of $S_0$ from Lemma \ref{collin-alpha}.

\begin{lemma}\label{circle4}
Using the notation and assumptions of Lemma \ref{collin-alpha}, suppose that $\dim H \geq 4$. Then we have
\begin{equation*}
\Szaa = \{[ce_1 + \lambda de_2] \colon \lambda\in \T\}.
\end{equation*}
\end{lemma}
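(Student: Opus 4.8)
The plan is to feed the explicit description of $\Sza$ from Lemma \ref{collin-alpha} into the definition of the $\alpha$-set and read off which lines survive. Recall that every member of $\Sza$ has the form $\left[\frac{a}{c}e_1 + h\right]$ with $h\perp\{e_1,e_2\}$ and $\|h\|=\sqrt{1-a^2/c^2}$ (together with a second family $\left[\frac{a}{d}e_2 + h\right]$, $\|h\|=\sqrt{1-a^2/d^2}$, when $a\le d$). To compute $\Szaa=(\Sza)^{\langle\alpha\rangle}$, I take an arbitrary line and, after scaling by a unimodular constant, write it as $[w]$ with $w=w_1e_1+w_2e_2+g$, where $w_1\ge 0$, $w_2\in\C$, $g\perp\{e_1,e_2\}$ and $w_1^2+|w_2|^2+\|g\|^2=1$. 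The task is then to determine exactly when $|\langle w,u\rangle|=a$ holds for every $u\in\Sza$.

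The heart of the argument is the test against the first family. For $u=\frac{a}{c}e_1+h$ one has $\langle w,u\rangle = w_1\frac{a}{c} + \langle g,h\rangle$, since $h$ is orthogonal to both $e_1$ and $e_2$. As $h$ ranges over the sphere of radius $r:=\sqrt{1-a^2/c^2}$ (which is positive because $c>a$) inside the subspace $\{e_1,e_2\}^\perp$, the scalar $\langle g,h\rangle$ sweeps out the \emph{entire} closed disk $\{z\in\C:|z|\le\|g\|r\}$. This is precisely where the hypothesis $\dim H\ge 4$ enters: the complement $\{e_1,e_2\}^\perp$ has complex dimension at least two, so one can choose the component of $h$ along $g$ to realise any prescribed modulus and phase of $\langle g,h\rangle$ while still having a second orthogonal direction available to absorb the remaining norm. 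Hence $w_1\frac{a}{c}+\langle g,h\rangle$ ranges over a disk of radius $\|g\|r$ centred at $w_1\frac{a}{c}$, and this can have constant modulus $a$ only if the disk degenerates to a single point, i.e. $\|g\|r=0$; as $r>0$ this forces $g=0$.

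Once $g=0$, the first-family condition reduces to $|w_1\frac{a}{c}|=a$, giving $w_1=c$ and therefore $|w_2|^2=1-c^2=d^2$, so $w_2=\lambda d$ for some $\lambda\in\T$; when $a\le d$ the second family merely reproduces $|w_2|=d$ and yields no new restriction. This establishes $\Szaa\subseteq\{[ce_1+\lambda de_2]:\lambda\in\T\}$. The reverse inclusion is then a one-line verification: for $w=ce_1+\lambda de_2$ and any $u\in\Sza$ one has $|\langle w,u\rangle|=a$ at once, because $u$ has vanishing $e_2$-component in the first family (respectively vanishing $e_1$-component in the second) and its $h$-part is orthogonal to both $e_1$ and $e_2$.

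The only genuinely delicate step is the disk-filling claim, and I expect it to be the main obstacle, since it is exactly the point where the dimension hypothesis is indispensable. When $\dim H=3$ the complement $\{e_1,e_2\}^\perp$ is one complex dimensional, so $\langle g,h\rangle$ fills merely a circle of radius $\|g\|r$ rather than a full disk; then $w_1\frac{a}{c}+\langle g,h\rangle$ can retain constant modulus $a$ along a nondegenerate circle centred at the origin, namely when $w_1=0$ and $\|g\|r=a$. These extra solutions show that the clean description can fail in dimension three, which is precisely why that case is deferred to Section \ref{sec:4}.
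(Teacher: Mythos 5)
Your proposal is correct and follows essentially the same route as the paper: both arguments test an arbitrary line against the family $\left\{\left[\frac{a}{c}e_1+h\right]\right\}\subseteq \Sza$ and use the key observation that, because $\dim H\ge 4$, the inner product $\langle g,h\rangle$ sweeps out a full closed disk (you in fact justify this disk-filling step more explicitly than the paper does), forcing the component orthogonal to $[e_1,e_2]$ to vanish. The handling of the case $a\le d$ via a direct verification of the reverse inclusion also matches the paper's proof.
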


\begin{proof}
	Recall that $c>a$. Define 
	\begin{equation*}
		\mathcal{C} := \left\{\left[\frac{a}{c}e_1 + h\right]\colon h\in H,\, \lVert h\rVert = \sqrt{1-\frac{a^2}{c^2}},\, h\perp \{e_1, e_2\}\right\}.
	\end{equation*}
	As $\mathcal{C} \subseteq \Sza$, we have $\mathcal{C}^{\langle\alpha\rangle} \supseteq \Szaa$.
	Consider a line $[v] = [c_1e_1+c_2e_2+k]$ with $c_1\geq 0$, $c_2\in\C$, $k\in H$, $k\perp\{e_1, e_2\}$, and $c_1^2+|c_2|^2+\|k\|^2 = 1$. We have $[v] \in \mathcal{C}^{\langle\alpha\rangle}$ if and only if 
	\begin{equation*}
	\left| c_1 \frac{a}{c} + \langle k, h\rangle \right| = a \qquad \left(h\in H,\, \lVert h\rVert = \sqrt{1-\frac{a^2}{c^2}},\, h\perp \{e_1, e_2\}\right).
	\end{equation*}
	Notice that the inner product $\langle k, h\rangle$ above runs through a closed disk of radius $\|k\|\cdot\sqrt{1-\frac{a^2}{c^2}}$ on the complex plane. As $c>a$, we obtain $k = 0$ and $c_1 = c$, hence
	$$ \mathcal{C}^{\langle\alpha\rangle} = \{[ce_1 + \lambda de_2] \colon \lambda\in \T\}. $$
	In case of (i) of Lemma \ref{collin-alpha}, this completes the proof. On the other hand, in case of (ii) of Lemma \ref{collin-alpha}, we easily see the reverse inclusion $\Szaa \supseteq \mathcal{C}^{\langle\alpha\rangle}$, hence the proof is done.
\end{proof}

Observe that Lemmas \ref{collin-alpha} and \ref{circle4} imply the following.

\begin{corollary}\label{cor:M}
	If $\dim H \geq 4$, then every circle in $P(H)$ is highly-$\alpha$-symmetric. 
\end{corollary}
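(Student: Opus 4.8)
The plan is to verify the three defining conditions of Definition \ref{defi} directly, reducing each of them to the computations already carried out in Lemmas \ref{collin-alpha} and \ref{circle4}. First I would normalise the given circle $\{[\mathfrak{c}e_1 + \lambda\mathfrak{d}e_2]\colon\lambda\in\T\}$. If $\mathfrak{c} < \mathfrak{d}$, then from the identity $[\mathfrak{c}e_1 + \lambda\mathfrak{d}e_2] = [\mathfrak{d}e_2 + \overline{\lambda}\mathfrak{c}e_1]$ together with the fact that $\overline{\lambda}$ runs over $\T$ as $\lambda$ does, the very same subset of $P(H)$ is obtained after interchanging $e_1$ and $e_2$. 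Hence we may always assume $\mathfrak{c} = c \geq d = \mathfrak{d}$, so that the circle takes the form $T = \{[ce_1 + \lambda de_2]\colon\lambda\in\T\}$ with $c \geq d > 0$ and $c^2+d^2=1$, which is precisely the set appearing in the conclusion of Lemma \ref{circle4}.

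Condition (i) is immediate: since $c>0$, any equality $[ce_1+\lambda de_2] = [ce_1+\mu de_2]$ forces $\lambda=\mu$, so $\lambda\mapsto[ce_1+\lambda de_2]$ is a bijection of $\T$ onto $T$ and thus $\# T = \infty$. For condition (iii) I would observe that every three-element subset $S\subset T$ consists of three pairwise distinct collinear lines $[ce_1+\lambda_j de_2]$ ($j=1,2,3$) with distinct $\lambda_1,\lambda_2,\lambda_3\in\T$, that is, $S$ is exactly a set $S_0$ in the notation of Lemma \ref{collin-alpha}. Lemma \ref{circle4} then yields $\Szaa = \{[ce_1+\lambda de_2]\colon\lambda\in\T\} = T$ with no further work.

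It remains to check condition (ii), namely $\# \Ta = \infty$, and this is the only place where the hypothesis $\dim H\geq 4$ is genuinely used. Fixing any triple $S_0\subset T$ as above, I claim $\Ta = \Sza$. The inclusion $\Ta\subseteq\Sza$ is Lemma \ref{basic}(ii), while the reverse follows by combining Lemma \ref{circle4} (which gives $T = \Szaa$) with Lemma \ref{basic}(iii) (which gives $(\Sza)^{\langle\!\langle \alpha\rangle\!\rangle} = \Sza$): unwinding definitions, $\Ta = (\Szaa)^{\langle\alpha\rangle} = (\Sza)^{\langle\!\langle \alpha\rangle\!\rangle} = \Sza$. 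By Lemma \ref{collin-alpha}, in every case $\Sza$ contains the family $\{[\tfrac{a}{c}e_1 + h]\colon \|h\| = \sqrt{1-a^2/c^2},\ h\perp\{e_1,e_2\}\}$; since $c>a$ this radius is strictly positive, and since $\dim H\geq 4$ the subspace $\{e_1,e_2\}^\perp$ has dimension at least two, so this family is a sphere of positive radius sitting inside a space of dimension $\geq 2$ and is therefore infinite. Thus $\#\Ta = \infty$.

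All steps are routine once the preceding lemmas are in place; the single load-bearing ingredient is the dimension hypothesis $\dim H \geq 4$, which guarantees that $\{e_1,e_2\}^\perp$ is large enough for the relevant sphere of perturbation vectors $h$ to be infinite (and which is also exactly what makes Lemma \ref{circle4} available). I therefore expect no genuine obstacle here beyond this bookkeeping, the main subtlety being simply to keep track of the normalisation $c \geq d$ and to identify $\Ta$ with $\Sza$ before invoking Lemma \ref{collin-alpha}.
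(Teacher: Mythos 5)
Your proof is correct and is essentially the paper's own argument: the paper obtains the corollary as an immediate consequence of Lemmas \ref{collin-alpha} and \ref{circle4}, and your verification of conditions (i)--(iii) of Definition \ref{defi} is exactly the bookkeeping implicit in that observation. One small quibble with your commentary (not with the mathematics): condition (ii) does not genuinely need $\dim H\geq 4$, since a sphere of positive radius in the one-complex-dimensional space $\{e_1,e_2\}^{\perp}$ is already infinite and distinct vectors $h$ there yield distinct lines $\left[\frac{a}{c}e_1+h\right]$; the dimension hypothesis enters only through Lemma \ref{circle4}, i.e.\ in condition (iii), as your final parenthetical in fact concedes.
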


For the remaining part of this section our aim is to prove the reverse. 

\begin{lemma}\label{nice}
	Assume that $\dim H \geq 4$. Then every highly-$\alpha$-symmetric set $T$ satisfies one of the following points:
	\begin{itemize}
	\item[(i)] either $T$ is contained in a projective line,
	\item[(ii)] or there exists a subspace $M$ with $\dim M = 3$ such that for all $[v_1],[v_2],[v_3] \in T$ pairwise different elements we have $\left[v_1,v_2,v_3\right] = M$.
	\end{itemize}
\end{lemma}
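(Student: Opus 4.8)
The plan is to split on whether $T$ is contained in a projective line. If it is, we are immediately in case (i), so assume from now on that $T$ lies in no projective line; the goal is then to produce the subspace $M$ of case (ii). Since $T$ is not contained in any $2$-dimensional subspace, there are three pairwise different lines $[v_1],[v_2],[v_3]\in T$ with $\dim[v_1,v_2,v_3]=3$, and I set $M:=[v_1,v_2,v_3]$.

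The first key step is to rule out \emph{collinear} triples in $T$. Suppose $[u_1],[u_2],[u_3]\in T$ were pairwise different and collinear. By Lemma \ref{3} they can be written as $[ce_1+\lambda_j de_2]$ for a suitable orthonormal system $\{e_1,e_2\}$, and then Lemmas \ref{collin-alpha} and \ref{circle4} compute their double-$\alpha$-set to be the circle $\{[ce_1+\lambda de_2]\colon\lambda\in\T\}$, which is contained in the projective line $P([e_1,e_2])$. But by Lemma \ref{basic}(iv) this double-$\alpha$-set equals $T$, so $T$ would lie in a projective line, contradicting our assumption. Hence \emph{every} three pairwise different lines of $T$ span a $3$-dimensional subspace.

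The heart of the argument, and the step I expect to be the main obstacle, is to show $T\subseteq P(M)$. Take $[u]\in T$ different from $[v_1],[v_2]$ and suppose, for contradiction, that $u\notin M$. Writing $N:=[v_1,v_2]$, the previous step guarantees that both $[v_3]$ and $[u]$ lie off $N$, so their components $m_3,m_u$ orthogonal to $N$ are nonzero (and $m_3\in M$). Since $\{[v_1],[v_2],[v_3]\}$ and $\{[v_1],[v_2],[u]\}$ are both triples in $T$, Lemma \ref{basic}(iv) gives $\{[v_1],[v_2],[v_3]\}^{\langle\alpha\rangle}=\{[v_1],[v_2],[u]\}^{\langle\alpha\rangle}=\Ta$; consequently, for every $[w]$ in $Y:=\{[v_1],[v_2]\}^{\langle\alpha\rangle}$ one has $\measuredangle([w],[v_3])=\alpha$ if and only if $\measuredangle([w],[u])=\alpha$. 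Now I would feed in the explicit description of $Y$ from Lemma \ref{shape}: for each admissible $\theta$ the elements of $\mathcal{A}_{\theta}\subseteq Y$ have the form $[p(\theta)+h]$ with $p(\theta)\in N$ fixed and $h$ ranging over the whole sphere of radius $\rho(\theta)$ in $N^\perp$. Because $\dim H\geq 4$, this sphere lives in a space of dimension $\dim H-2\geq 2$, so the two functionals $h\mapsto\langle h,m_3\rangle$ and $h\mapsto\langle h,m_u\rangle$ can be varied independently along it. The condition above says that the level set $\{h\colon|\langle p(\theta),v_3\rangle+\langle h,m_3\rangle|=a\}$ and the corresponding level set for $u$ coincide for a $\theta$ with $\rho(\theta)>0$; I would argue that these are proper, nonempty subsets of the sphere, and that their coincidence is impossible unless $m_u$ is a scalar multiple of $m_3$. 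Establishing this matching rigorously, in particular checking that the level sets are genuinely proper and that the independence of the two functionals forces the contradiction, is the delicate part. Granting it, $m_u\in[m_3]\subseteq M$, whence $u\in N+[m_3]\subseteq M$, contradicting $u\notin M$.

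Finally, once $T\subseteq P(M)$ is established, the proof concludes quickly: any three pairwise different lines of $T$ span a $3$-dimensional subspace by the first step, and this subspace is contained in the $3$-dimensional $M$, hence equals $M$. This is exactly case (ii). The essential use of the hypothesis $\dim H\geq 4$ occurs in the middle step, where one needs the sphere of admissible orthogonal components $h$ to be at least $2$-dimensional; this is also why the three-dimensional case, treated in the next section, genuinely differs.
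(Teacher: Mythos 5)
Your skeleton (reduce to the no-collinear-triples case via Lemmas \ref{3}, \ref{collin-alpha}, \ref{circle4} and Lemma \ref{basic}(iv), then show $T\subseteq P(M)$, then conclude) is sound, and your first and last steps are correct. The genuine gap is the middle step, which you yourself flag as ``the delicate part'': the implication you hope to prove there --- that coincidence, for every $\theta$, of the level sets on the spheres in $N^\perp$ forces $m_u\in\C m_3$ --- is \emph{false} when $\dim H=4$. Concretely, take $\alpha=\frac{\pi}{3}$ (so $a=\frac{1}{2}$, which lies in the allowed range), $\dim H=4$, and let $v_1,v_2,v_3,u$ be an orthonormal basis of $H$. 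Writing $x=x_1v_1+x_2v_2+x_3v_3+x_4u$, both $\{[v_1],[v_2],[v_3]\}^{\langle\alpha\rangle}$ and $\{[v_1],[v_2],[u]\}^{\langle\alpha\rangle}$ equal $\left\{[x]\colon |x_j|=\tfrac{1}{2}\ (j=1,2,3,4)\right\}$, so with $N=[v_1,v_2]$ one has $z_3\equiv z_u\equiv 0$, $\rho\equiv\tfrac{1}{\sqrt{2}}$, and for every $\theta$ the two level sets coincide (both equal $\{h\perp N\colon \|h\|=\tfrac{1}{\sqrt{2}},\,|\langle h,v_3\rangle|=\tfrac{1}{2}\}=\{h\perp N\colon \|h\|=\tfrac{1}{\sqrt{2}},\,|\langle h,u\rangle|=\tfrac{1}{2}\}$); these sets are nonempty and proper, yet $m_u=u$ is orthogonal, not parallel, to $m_3=v_3$. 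So no amount of care can extract your dichotomy from the pairwise comparison of $\{[v_1],[v_2],[v_3]\}$ and $\{[v_1],[v_2],[u]\}$ alone. The underlying reason is dimensional: your ``independent variation'' of $h\mapsto\langle h,m_3\rangle$ and $h\mapsto\langle h,m_u\rangle$ can be made rigorous only when $\{m_3\}^\perp\cap N^\perp$ has complex dimension at least $2$ (i.e.\ $\dim H\geq 5$), where the attainable values of $\langle h,m'\rangle$ along an orbit fill a disk; when $\dim H=4$ they fill only a circle, and a circle can sit entirely on the target circle $\{|z|=a\}$ --- exactly what happens above.

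This obstruction is precisely why the paper's proof has a different shape. Instead of comparing $v_3$ and $u$ relative to the two-dimensional $N$, the paper decomposes an arbitrary element of $T$ as $[x+y]$ with $x\in M$, $y\perp M$, and exploits that all three of $v_1,v_2,v_3$ lie in $M$: the conditions defining $\{[v_1],[v_2],[v_3]\}^{\langle\alpha\rangle}$ constrain the $M^\perp$-components of its members only through their norm, so $[x+y]\in T$ with $x\neq0\neq y$ forces the whole collinear family $\{[x+\lambda y]\colon\lambda\in\T\}$ into $T$, a contradiction. This gives only $T\subseteq P(M)\cup P(M^\perp)$, and the residual possibility of a point of $T$ in $P(M^\perp)$ --- exactly the configuration in the counterexample --- is then excluded \emph{globally}: trivially for $\dim H\geq5$, and for $\dim H=4$ by running the same analysis on the triples $\{[v_2],[v_3],[e]\}$, $\{[v_1],[v_3],[e]\}$, $\{[v_1],[v_2],[e]\}$ and intersecting the resulting containments to force $\#T\leq4$, contradicting $\#T=\infty$. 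Some such use of additional triples and of the cardinality hypothesis is unavoidable; your local, per-element argument cannot be completed as stated in dimension four.
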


\begin{proof}
	Suppose that there exist $[u_1],[u_2],[u_3] \in T$ collinear and pairwise different. Then, by Lemmas \ref{3} and \ref{circle4}, the set $T = \{[u_1],[u_2],[u_3]\}^{\langle\!\langle \alpha\rangle\!\rangle}$ is a circle, hence (i) follows.
	
	From now on we assume otherwise.
	Consider three arbitrary pairwise different lines $[v_1],[v_2],[v_3] \in T$. Set $M := [v_1,v_2,v_3]$ which is a three-dimensional subspace. 
	Our goal is to prove $T\subset P(M)$, which will complete the proof. Note that
	\begin{align*}
	&\{[v_1],[v_2],[v_3]\}^{\langle\alpha\rangle} \\
	& \; = \left\{ [u+w] \colon u\in M, w\perp M, \|u\|^2+\|w\|^2=1, |\langle u, v_1 \rangle| = |\langle u, v_2 \rangle| = |\langle u, v_3 \rangle| = a \right\}.
	\end{align*}
	As this set is equal to $\Ta$, it is not empty.
	Let $[x+y] \in P(H)$ be an arbitrary line where $x\in M$, $y\perp M$ and $\|x\|^2+\|y\|^2 = 1$. 
	Clearly, we have $[x+y] \in T = \{[v_1],[v_2],[v_3]\}^{\langle\!\langle \alpha\rangle\!\rangle}$ if and only if 
	\begin{equation}\label{eq:xuyw}
	\left|\langle x,u\rangle + \langle y,w\rangle\right| = a
	\end{equation}
	holds for all $u\in M, w\perp M, \|u\|^2+\|w\|^2=1, |\langle u, v_1 \rangle| = |\langle u, v_2 \rangle| = |\langle u, v_3 \rangle| = a$.
	We point out that the only restriction on $w$ above, apart from being orthogonal to $M$, concerns its norm.
	Therefore, if $[x+y] \in T$ with $x\neq 0$, $y\neq 0$, then $T$ contains collinear triples, namely
	$$
	\left\{ [x + \lambda y] \colon \lambda\in\T \right\} \subset T,$$
	which is a contradiction.
	
	The above observations imply $T\subset P(M) \cup P(M^\perp)$, where $M^\perp$ denotes the largest subspace in $H$ orthogonal to $M$. 
	On the one hand, if $\dim H \geq 5$ and $[y] \in T \cap P(M^\perp)$, then \eqref{eq:xuyw} cannot hold. Hence in that case indeed $T\subset P(M)$ follows. 
	On the other hand, if $\dim H = 4$, then $T\subset P(M) \cup \{[e]\}$ where $e\perp \{v_1,v_2,v_3\}$, $\|e\|=1$. Assume for a moment that $[e]\in T$. Then a consideration of $\{[v_2],[v_3],[e]\}$ instead of $\{[v_1],[v_2],[v_3]\}$ gives that $T\subset P([v_2,v_3,e]) \cup \{[f]\}$ where $f\perp \{v_2,v_3,e\}$, $\|f\|=1$. Since $v_1 \notin [v_2,v_3,e]$, we have $[v_1] = [f]$. In such a way we eventually obtain that 
	$$T\subset \left(P([v_2,v_3,e]) \cup \{[v_1]\}\right) \cap \left(P([v_1,v_3,e]) \cup \{[v_2]\}\right) \cap \left(P([v_1,v_2,e]) \cup \{[v_3]\}\right) \cap \left(P(M) \cup \{[e]\}\right).$$ 
	Hence $\#T = 4$, a contradiction. Therefore, $[e]\notin T$, and we conclude $T\subset P(M)$.
\end{proof}

\begin{lemma}\label{infinite-element}
Assume that $\dim H \geq 4$. Let $\{e_1, e_2,e_3\}\subset H$ be an orthonormal system, $c\geq d>0$ with $c^2+d^2 = 1$, and $c_1, c_2\in\C$, $c_3 > 0$, $|c_1|^2+|c_2|^2+c_3^2 = 1$. Set $[v_1] = [ce_1+ ide_2]$, $[v_2] = [ce_1- ide_2]$, $[v_3] = [c_1e_1+c_2e_2+c_3e_3]\in P(H)$, and define the function
\begin{equation}\label{eq:z}
	z\colon [-\theta_0,\theta_0]\to\C, \;\;\; z(\theta) = c_1\frac{a}{c}\cos\theta+c_2\frac{a}{d}\sin\theta,
\end{equation}
where $\theta_0$ and $\rho$ are as in Lemma \ref{shape}.
Then for each $-\theta_0 \leq \theta \leq \theta_0$ we have $\#\left(\mathcal{A}_{\theta}\cap\{[v_3]\}^{\langle \alpha\rangle}\right) = \infty$ if and only if one of the following possibilities happens:
\begin{itemize}
	\item[(i)] either $|z(\theta)|-c_3\rho(\theta) < a < |z(\theta)|+c_3\rho(\theta)$,
	\item[(ii)] or $z(\theta) = 0$ and $\rho(\theta) = \frac{a}{c_3}$,
\end{itemize}
where $\mathcal{A}_{\theta}$ is as in \eqref{eq:A-theta}.

Moreover, we have $\#\left(\mathcal{A}_{\theta}\cap\{[v_3]\}^{\langle \alpha\rangle}\right) = 1$ if and only if
\begin{itemize}
	\item[(iii)]  $z(\theta) \neq 0$, and either $a = |z(\theta)|-c_3\rho(\theta)$, or $ a = |z(\theta)|+c_3\rho(\theta)$.
	\end{itemize}
\end{lemma}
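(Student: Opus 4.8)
The plan is to reduce both equivalences to an elementary incidence problem for a circle and a disk in $\C$. Each line of $\mathcal A_\theta$ is represented by the unit vector $w=\frac ac\cos\theta\cdot e_1+\frac ad\sin\theta\cdot e_2+h$ with $h\perp\{e_1,e_2\}$ and $\|h\|=\rho(\theta)$, and since at least one of the real numbers $\frac ac\cos\theta,\,\frac ad\sin\theta$ is strictly positive for every admissible $\theta$, the assignment $h\mapsto[w]$ is injective; hence counting lines amounts to counting admissible $h$. A direct computation gives $|\langle w,v_3\rangle|=|z(\theta)+c_3\zeta|$, where $\zeta=\langle e_3,h\rangle$, so that $[w]\in\{[v_3]\}^{\langle\alpha\rangle}$ if and only if $|z(\theta)+c_3\zeta|=a$. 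Writing $h=\zeta e_3+h'$ with $h'\perp\{e_1,e_2,e_3\}$, the constraint $\|h\|=\rho(\theta)$ shows that $\zeta$ runs over the closed disk $\{|\zeta|\le\rho(\theta)\}$; this is the only step that uses $\dim H\ge 4$, and it is also where the key dichotomy appears: every \emph{interior} value $|\zeta|<\rho(\theta)$ is realised by infinitely many $h$ (the tail $h'$ sweeps a sphere of positive radius), whereas every \emph{boundary} value $|\zeta|=\rho(\theta)$ forces $h'=0$ and so comes from the single vector $h=\zeta e_3$.

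Set $p:=z(\theta)+c_3\zeta$, so that as $h$ varies $p$ fills the closed disk $\Delta$ of radius $c_3\rho(\theta)$ centred at $z(\theta)$, and we are asking how the circle $\{|p|=a\}$ meets $\Delta$. In view of the interior/boundary dichotomy, $\#\bigl(\mathcal A_\theta\cap\{[v_3]\}^{\langle\alpha\rangle}\bigr)=\infty$ exactly when either $\{|p|=a\}$ meets the \emph{open} disk $\operatorname{int}\Delta$, or it coincides with the boundary circle $\partial\Delta$ (whose points then contribute a whole circle's worth of single $h$'s). Since the centre of $\Delta$ has modulus $|z(\theta)|$, the first event happens precisely when $a$ lies strictly between the least and greatest distances from the origin to $\Delta$, that is $\bigl||z(\theta)|-a\bigr|<c_3\rho(\theta)$, which (recalling $a>0$) is condition (i); the second is an equality of concentric circles, i.e.\ $z(\theta)=0$ and $c_3\rho(\theta)=a$, which is (ii). The degenerate case $\rho(\theta)=0$, where $\mathcal A_\theta$ is a single line, I would dispose of separately and check directly against all three conditions.

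Finally, once the two infinite cases are excluded, $\{|p|=a\}$ can only meet $\partial\Delta$, and it does so in at most one point: two transversal intersections would drag the circle into $\operatorname{int}\Delta$ and put us back in case (i). Thus the count equals $1$ exactly when $\{|p|=a\}$ is tangent to $\partial\Delta$ without entering $\operatorname{int}\Delta$, equivalently when $a$ equals an endpoint of the attainable range $\{|p|:p\in\Delta\}=\bigl[\max(0,|z(\theta)|-c_3\rho(\theta)),\,|z(\theta)|+c_3\rho(\theta)\bigr]$ and the extremiser is a single boundary point; this yields $a=|z(\theta)|-c_3\rho(\theta)$ or $a=|z(\theta)|+c_3\rho(\theta)$, while the uniqueness of the extremiser (as opposed to a full boundary circle, which would revert to (ii)) is precisely the requirement $z(\theta)\ne0$, i.e.\ (iii). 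The main obstacle I anticipate is exactly this interior/boundary bookkeeping, and in particular distinguishing the genuine count-one tangencies from the look-alike internal tangency $a=c_3\rho(\theta)-|z(\theta)|$, which appears similar but actually sits in the interior of the range (hence in case (i)) and therefore produces infinitely many lines.
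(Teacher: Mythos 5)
Your proof is correct and follows essentially the same route as the paper's: both reduce membership of a point of $\mathcal{A}_\theta$ in $\{[v_3]\}^{\langle\alpha\rangle}$ to the equation $\left|z(\theta)+c_3\zeta\right|=a$, where $\zeta=\langle e_3,h\rangle$ ranges over the closed disk of radius $\rho(\theta)$, and then settle the count by elementary circle--disk geometry in $\C$. The only difference is that you make explicit the bookkeeping the paper compresses into ``elementary geometric observations''---the injectivity of $h\mapsto[w]$ and the dichotomy that interior values of $\zeta$ are attained by infinitely many $h$ while boundary values are attained by exactly one---which is a clarification of, not a departure from, the paper's argument.
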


\begin{proof}
	An element $\left[\frac{a}{c}\cos\theta \cdot e_1 + \frac{a}{d}\sin\theta \cdot e_2 + h\right]$ of $\mathcal{A}_{\theta}$ is in $\{[v_3]\}^{\langle \alpha\rangle}$ if and only if
	\begin{equation*}
	\left| z(\theta) + c_3\langle e_3, h\rangle \right| = a.
	\end{equation*}
	Notice that if we go through all elements of $\mathcal{A}_{\theta}$, then the complex number $c_3\langle e_3, h\rangle$ goes through a closed (possibly degenerate) disk of radius $c_3\rho(\theta)$. This radius is $0$ if and only if $\rho(\theta)=0$. 
	
	Assume that $z(\theta) \neq 0$. 
	Then by some elementary geometric observations we obtain the following possibilities:
	\begin{itemize}
	\item if $|z(\theta)|-c_3\rho(\theta) > a$ or $ a > |z(\theta)|+c_3\rho(\theta)$, 
	then $\mathcal{A}_{\theta}\cap\{[v_3]\}^{\langle \alpha\rangle} = \emptyset$,
	\item if $|z(\theta)|-c_3\rho(\theta) = a$ or $ a = |z(\theta)|+c_3\rho(\theta)$, 
	then $\#\left(\mathcal{A}_{\theta}\cap\{[v_3]\}^{\langle \alpha\rangle}\right) = 1$,
	\item if $|z(\theta)|-c_3\rho(\theta) < a < |z(\theta)|+c_3\rho(\theta)$,
	then $\#\left(\mathcal{A}_{\theta}\cap\{[v_3]\}^{\langle \alpha\rangle}\right) = \infty$.
	\end{itemize}
	
	In case when $z(\theta) = 0$, then we obtain the following possibilities: 
	\begin{itemize}
	\item if $c_3\rho(\theta) < a$, then $\mathcal{A}_{\theta}\cap\{[v_3]\}^{\langle \alpha\rangle} = \emptyset$,
	\item if $c_3\rho(\theta) \geq a$, then $\#\left(\mathcal{A}_{\theta}\cap\{[v_3]\}^{\langle \alpha\rangle}\right) = \infty$.
	\end{itemize}
	Notice that the case $c_3\rho(\theta) > a$ is included in (i) in the statement of the lemma.
\end{proof}

Notice that $\#\left(\mathcal{A}_{\theta}\cap\{[v_3]\}^{\langle \alpha\rangle}\right)$ is either $0$, or $1$, or $\infty$, provided that $\dim H \geq 4$.
Now, we are in the position to prove the main result of this section.

\begin{lemma}\label{circle-char-4}
	Assume that $\dim H \geq 4$. Then a set $T\subset P(H)$ is highly-$\alpha$-symmetric if and only if it is a circle.
\end{lemma}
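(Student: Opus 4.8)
The plan is to establish the two implications separately, leaning on the structural results already assembled in this section.

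For the direction that every circle is highly-$\alpha$-symmetric there is nothing left to prove: this is precisely the content of Corollary \ref{cor:M}, which in turn rested on Lemmas \ref{collin-alpha} and \ref{circle4}. So I would immediately turn to the converse. Suppose $T$ is highly-$\alpha$-symmetric. I would first invoke Lemma \ref{nice}, which splits the situation into two cases: either (i) $T$ is contained in a projective line, or (ii) there is a three-dimensional subspace $M$ with $[v_1,v_2,v_3]=M$ for every pairwise different triple in $T$. In case (i), any three distinct elements of $T$ lie in a two-dimensional space and are therefore collinear; writing them in the normal form of Lemma \ref{3} and applying Lemma \ref{circle4} shows that their double-$\alpha$-set is a circle, whence Lemma \ref{basic}(iv) forces $T$ to coincide with that circle. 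Thus everything reduces to excluding case (ii).

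To attack case (ii), I would fix $[v_1],[v_2]\in T$ and put them in the standard form $[v_1]=[ce_1+ide_2]$, $[v_2]=[ce_1-ide_2]$ supplied by Lemma \ref{2}. Since every third element of $T$ spans the same $M$ together with $[v_1],[v_2]$, each such element takes the shape $[v_3]=[c_1e_1+c_2e_2+c_3e_3]$ with $c_3>0$ and a single fixed unit vector $e_3$ spanning $M\ominus[e_1,e_2]$. The decisive structural fact is that, by Lemma \ref{basic}(iv), the set $T^{\langle\alpha\rangle}=\{[v_1],[v_2],[v_3]\}^{\langle\alpha\rangle}$ is independent of the chosen third point. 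Using the slice decomposition of Lemma \ref{shape} and the computation of Lemma \ref{infinite-element}, membership of a point in the slice $\mathcal{A}_\theta$ is governed solely by the complex scalar $s=\langle e_3,h\rangle$ through the equation $|z(\theta)+c_3 s|=a$. Equating $T^{\langle\alpha\rangle}$ for two distinct third points $[v_3],[v_3']$ then forces the circles $\{s:|z(\theta)+c_3s|=a\}$ and $\{s:|z'(\theta)+c_3's|=a\}$ to have the same intersection with the disk $\{|s|\le\rho(\theta)\}$ for every admissible $\theta$.

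The core of the contradiction is the following. Whenever possibility (i) of Lemma \ref{infinite-element} occurs, i.e. $|z(\theta)|-c_3\rho(\theta)<a<|z(\theta)|+c_3\rho(\theta)$, the admissible values of $s$ sweep out a genuine arc, hence an infinite set; and since this is an open condition in $\theta$, it persists on a whole subinterval. On that subinterval the two circles agree on infinitely many points and must coincide, yielding $c_3=c_3'$ and $z(\theta)\equiv z'(\theta)$; as $\theta\mapsto\cos\theta$ and $\theta\mapsto\sin\theta$ are linearly independent this gives $c_1=c_1'$, $c_2=c_2'$, so $[v_3]=[v_3']$, contradicting $\#T=\infty$. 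I expect the main obstacle to be the complementary degenerate scenario in which possibility (i) never occurs, so that every nonempty slice is a singleton (possibility (iii)) and $T^{\langle\alpha\rangle}$ collapses entirely into $P(M)$; here no circle survives to be matched. To close this gap I would exploit the hypothesis $\dim H\ge 4$ together with the real-analyticity of $\theta\mapsto|z(\theta)|\pm c_3\rho(\theta)$, arguing that the functional identities these singleton slices impose, forced to hold simultaneously for infinitely many third points against one and the same singleton curve $\theta\mapsto s_\theta$, are over-determined and cannot all be met. Once case (ii) is ruled out, case (i) shows $T$ is a circle, which completes the proof.
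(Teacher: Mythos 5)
Your outline reproduces the paper's skeleton (Corollary \ref{cor:M} for the easy direction, Lemma \ref{nice} plus Lemmas \ref{3} and \ref{circle4} for the collinear case, and slice-matching via Lemmas \ref{shape} and \ref{infinite-element} for the rest), and your generic matching argument --- two circles sharing a nondegenerate arc coincide, hence $c_3=\widehat{c_3}$ and $z\equiv\widehat{z}$ on an interval, hence $[v_3]=[\widehat{v_3}]$ --- is exactly the paper's Case 2. However, the two non-generic situations are where your proof has genuine gaps. First, your dichotomy is incomplete: you split into ``possibility (i) of Lemma \ref{infinite-element} holds for some $\theta$'' versus ``every nonempty slice is a singleton'', but an infinite slice can also arise via possibility (ii), i.e. $z(\widetilde\theta)=0$ and $c_3\rho(\widetilde\theta)=a$, and this is \emph{not} an open condition in $\theta$, so your continuity argument produces no interval there. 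The paper treats precisely this point separately: it first shows $[e_3]\notin T$ (so $z\not\equiv 0$), then perturbs $\theta$ from $\widetilde\theta$ towards $0$, using that $|z|$ grows while $\rho$ does not decrease (for $\widetilde\theta=0$ one compares the one-sided derivative $|z|'(0^+)=|c_2|\,a/d>0$ with $\rho'(0)=0$) to land inside possibility (i) on a nondegenerate interval; this perturbation requires $c>d$.

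Second, the all-singleton case, which you yourself flag as ``the main obstacle'' and only hope to settle by an over-determination/real-analyticity heuristic, is not proved --- and in your setup it cannot be proved this way, because you never normalise $[v_1],[v_2]$ to be non-orthogonal. The paper chooses a non-orthogonal pair (possible by Lemma \ref{nice} and $\#T=\infty$), so that $c>d$ strictly; then (iii) of Lemma \ref{infinite-element} gives $\bigl||z(\theta)|-a\bigr|=c_3\rho(\theta)$ on an infinite set $F$, squaring twice yields an identity of trigonometric polynomials in $1,\cos2\theta,\sin2\theta,\cos4\theta,\sin4\theta$, and comparing coefficients forces $|z|$ and $\rho$ to be constant; $\rho$ constant means $c=d=1/\sqrt{2}$, contradicting $c>d$. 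If instead you happen to pick an orthogonal pair, then $\rho$ genuinely \emph{is} constant and third points with $|z|$ constant exist (e.g.\ $c_2=ic_1$), so this contradiction evaporates and the case is left open. Also note that $\theta\mapsto|z(\theta)|\pm c_3\rho(\theta)$ fails to be real-analytic at zeros of $z$ and at $\pm\theta_0$ when $\rho(\theta_0)=0$, so even the heuristic you propose would need repair. In short: correct skeleton and correct generic case, but the two degenerate cases --- which are the actual content of the paper's proof --- are missing.
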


\begin{proof}
	Corollary \ref{cor:M} gives one direction. 
	To prove the reverse implication, assume that $T$ is highly-$\alpha$-symmetric. By Lemma \ref{nice} and the assumption $\# T = \infty$, we may take a pair of different elements $[v_1], [v_2]\in T$ such that they are not orthogonal.
	Then, by Lemma \ref{2}, we have $[v_1] =[ce_1+ ide_2]$ and $[v_2] =[ce_1- ide_2]$ for some orthonormal system $\{e_1, e_2\}\subset H$ and real numbers $c > d > 0$, $c^2+d^2 = 1$. Define $\mathcal{A}_\theta$, $\rho$ and $\theta_0$ as in Lemma \ref{shape}. Consider an arbitrary third element $[u]\in T\setminus\{[v_1], [v_2]\}$. If $[u]$ sits on the projective line spanned by $[v_1]$ and $[v_2]$, then by Lemmas \ref{3} and \ref{circle4}, the set $T = \{[v_1], [v_2], [u]\}^{\langle\!\langle \alpha\rangle\!\rangle}$ is a circle.
	
	From now on we assume that $T\cap P([v_1, v_2]) = \{[v_1], [v_2]\}$. By Lemma \ref{nice}, there exists a unit vector $e_3\perp\{e_1,e_2\}$ such that $T\subset P([e_1,e_2,e_3])$. Consider two arbitrary (not necessarily different) lines $[v_3], [\widehat{v_3}] \in T\setminus\{[v_1], [v_2]\}$. We may take numbers $c_1, c_2\in\C$, $c_3> 0$, $|c_1|^2+|c_2|^2+c_3^2=1$, $\widehat{c_1}, \widehat{c_2}\in\C$, $\widehat{c_3}> 0$, $|\widehat{c_1}|^2+|\widehat{c_2}|^2+\widehat{c_3}^2=1$ such that 
	$$[v_3] = [c_1e_1+c_2e_2+c_3e_3] \;\;\; \text{and} \;\;\; [\widehat{v_3}] = [\widehat{c_1}e_1+\widehat{c_2}e_2+\widehat{c_3}e_3].$$
	By (iv) of Lemma \ref{basic}, we have 
	$$\{[v_1], [v_2],[v_3]\}^{\langle \alpha\rangle} = \Ta = \{[v_1], [v_2],[\widehat{v_3}]\}^{\langle \alpha\rangle}.$$ 
	By Lemma \ref{shape}, this implies 
	\begin{equation}\label{eq:A-th-cap}
	\mathcal{A}_\theta \cap \{[v_3]\}^{\langle \alpha\rangle} = \mathcal{A}_\theta \cap \{[\widehat{v_3}]\}^{\langle \alpha\rangle}
	\end{equation}
	for all $-\theta_0 \leq \theta \leq \theta_0$. We define the functions $z$ and $\widehat{z}$ by \eqref{eq:z} and
	\begin{equation*}
	\widehat{z}\colon [-\theta_0,\theta_0]\to\C, \;\;\; \widehat{z}(\theta) = \widehat{c_1}\frac{a}{c}\cos\theta+\widehat{c_2}\frac{a}{d}\sin\theta.
	\end{equation*}
	Clearly, \eqref{eq:A-th-cap} is equivalent to the following for all $-\theta_0 \leq \theta \leq \theta_0$:
	\begin{equation}\label{eq:A-theta-part}
	a = \left| z(\theta) + c_3\langle e_3,h\rangle \right| \iff a = \left| \widehat{z}(\theta) + \widehat{c_3}\langle e_3, h\rangle \right| \qquad (h\perp \{e_1, e_2\}, \|h\| = \rho(\theta)).
	\end{equation}
	
	Assume for a moment that $[e_3]\in T$. 
	Substitute $[v_3]=[e_3]$. 
	Then for all $-\theta_0 \leq \theta \leq \theta_0$ we have
	$$
	a = \left| \langle e_3,h\rangle \right| \iff a = \left| \widehat{z}(\theta) + \widehat{c_3}\langle e_3, h\rangle \right| \qquad (h\perp \{e_1, e_2\}, \|h\| = \rho(\theta)).
	$$
	Since $\# \Ta = \infty$, there exists at least one pair $(\theta, h)$ which solves both equations above. Note that $\langle e_3,h\rangle \neq 0$, and that $(\theta, \lambda h)$ also solves the first, hence the second, equation for all $\lambda\in\T$. By a simple geometric consideration one sees that this can happen only if $\widehat{c_3} = 1$. Therefore $[\widehat{v_3}]=[e_3]$, which further implies the contradiction $T = \{[v_1],[v_2],[e_3]\}$.
	Hence we obtain $[e_3]\notin T$.

	Therefore, neither $z$ nor $\widehat{z}$ is the constant zero function. 
	In particular, since their images are contained in (possibly degenerate) ellipses, they have at most two zeros. We distinguish two cases.
	
	\emph{Case 1. When for every $\theta\in[-\theta_0,\theta_0]$ we have $\#\left(\mathcal{A}_{{\theta}}\cap\{[v_3]\}^{\langle \alpha\rangle}\right) \leq 1$.}
	Define the set
	$$
	F := \left\{ \theta\in[-\theta_0,\theta_0] \colon \#\left(\mathcal{A}_{{\theta}}\cap\{[v_3]\}^{\langle \alpha\rangle}\right) = 1 \right\}.
	$$
	Since $\#\Ta = \infty$, we obtain $\# F =\infty$. 
	By (iii) of Lemma \ref{infinite-element}, we infer that 
	\begin{equation}\label{eq:c3}
	\big||z(\theta)|-a\big| = c_3\rho(\theta)
	\;\;\; (\theta\in F).
	\end{equation}
	
	We claim that \eqref{eq:c3} implies that $|z(\theta)|$ is constant on $[-\theta_0,\theta_0]$. In order to see this, we take the square of both sides in \eqref{eq:c3}, rearrange the equation, and take squares again:
	\begin{equation}\label{eq:c32}
	\left( |z(\theta)|^2+a^2-c_3^2\rho(\theta)^2 \right)^2 = \left(2a |z(\theta)|\right)^2 \;\;\; (\theta\in F).
	\end{equation}
	Notice that $\rho(\theta)^2$ and $|z(\theta)|^2$ are complex linear combinations of $\cos^2\theta, \sin^2\theta$ and $\cos\theta\sin\theta$. Hence they, and in particular the right-hand side of \eqref{eq:c32}, are complex linear combinations of $1, \cos(2\theta)$ and $\sin(2\theta)$. The left-hand side of \eqref{eq:c32} can be written in the form 
	\begin{align*}
	&(\mathfrak{a}+\mathfrak{b}\cos(2\theta)+\mathfrak{c}\sin(2\theta))^2 \\
	&= \mathfrak{a}^2 + \mathfrak{b}^2\cos^2(2\theta)+\mathfrak{c}^2\sin^2(2\theta) + 2 \mathfrak{a}\mathfrak{b}\cos(2\theta) + 2 \mathfrak{a}\mathfrak{c}\sin(2\theta) + 2 \mathfrak{b}\mathfrak{c}\cos(2\theta)\sin(2\theta)
	\end{align*}
	with some complex numbers $\mathfrak{a}, \mathfrak{b}, \mathfrak{c}$. Note that this expression is a complex linear combination of $1, \cos(2\theta),\sin(2\theta), \cos(4\theta)$ and $\sin(4\theta)$. Since both sides of \eqref{eq:c32} are trigonometric polynomials and they coincide on the infinite set $F\subset\left[-\frac{\pi}{2},\frac{\pi}{2}\right]$, they must coincide on the whole real line. Hence the coefficients on both sides with respect to $1, \cos(2\theta),\sin(2\theta), \cos(4\theta)$ and $\sin(4\theta)$ have to be the same. Since it is zero for $\sin(4\theta)$, we obtain that $\mathfrak{b} = 0$ or $\mathfrak{c} = 0$. Assume we have $\mathfrak{b} = 0$, then the left-hand side of \eqref{eq:c32} is 
	$$
	\mathfrak{a}^2 + \mathfrak{c}^2\sin^2(2\theta) + 2 \mathfrak{a}\mathfrak{c}\sin(2\theta)
	= \mathfrak{a}^2 + \frac{\mathfrak{c}^2}{2} - \frac{\mathfrak{c}^2}{2}\cos(4\theta) + 2 \mathfrak{a}\mathfrak{c}\sin(2\theta).
	$$
	But since the coefficient of $\cos(4\theta)$ is also zero, we obtain that $\mathfrak{c} = 0$. Therefore $|z(\theta)|$ is indeed a (non-zero) constant function. Similarly, we obtain the same conclusion for the $\mathfrak{c} = 0$ case.
	Using this information in \eqref{eq:c3} we obtain that $\rho(\theta)$ is constant on $F$, hence on $[-\theta_0,\theta_0]$. Therefore we infer $c=d=\frac{1}{\sqrt{2}}$, which contradicts our assumption $c>d$, so the present case cannot happen.
	
	\emph{Case 2. When there exists a $\widetilde{\theta}\in[-\theta_0,\theta_0]$ such that $\#\left(\mathcal{A}_{\widetilde{\theta}}\cap\{[v_3]\}^{\langle \alpha\rangle}\right) = \infty$ holds.}
	We claim that there is a non-degenerate interval $J\subseteq [-\theta_0,\theta_0]$ such that (i) from Lemma \ref{infinite-element} holds for all $\theta\in J$. If $\widetilde{\theta}$ satisfies (i), then this is clear from the continuity of $z$ and $\rho$. 
	Suppose $\widetilde{\theta}\neq 0$ and it satisfies (ii), namely, $z(\widetilde{\theta}) = 0$ and $\rho(\widetilde{\theta}) = \frac{a}{c_3}$.
	In this case if we move $\theta$ a little bit away from $\widetilde{\theta}$ but closer towards $0$, then (as $c>d>0$) both $|z(\theta)|$ and $\rho(\theta)$ increase continuously. Hence we get the desired interval.
	Finally, assume that $\widetilde{\theta} = 0$ and it satisfies (ii), namely, $z(0) = 0$ and $\rho(0) = \frac{a}{c_3}$. Consequently, $c_1=0$, and since $z$ is not constant zero, $c_2\neq 0$. We only have to observe that $|z(\theta)| =  \frac{a}{d} |c_2 \sin\theta|$ is differentiable from the right at $0$, and that this half-sided derivative is $|c_2| \frac{a}{d} > 0$. Since $\rho'(0) = 0$, we get the same conclusion by elementary calculus.

	Now, for all $\theta\in J$ there exists a non-degenerate arc $C_\theta$ in the complex plane such that
	\begin{align*}
	a = \left| z(\theta) + {c_3}\langle e_3, h\rangle \right| &\iff a = \left| \widehat{z}(\theta) + \widehat{c_3}\langle e_3, h\rangle \right| \\
	&\iff \langle e_3, h\rangle \in C_\theta \hspace{2cm} (h\perp \{e_1, e_2\}, \|h\| = \rho(\theta)).
	\end{align*}
	As the radii of the circles containing the arcs $z(\theta) + {c_3} C_\theta$ and $\widehat{z}(\theta) + \widehat{c_3}C_\theta$ are both equal to $a$, we obtain $\widehat{c_3} = c_3$. A consideration of their centres also gives $z(\theta) = \widehat{z}(\theta)$ $(\theta\in J)$. Since both $z$ and $\widehat{z}$ are trigonometric polynomials, their coincidence on the interval $J$ implies $c_1=\widehat{c_1}$, $c_2=\widehat{c_2}$, and hence $[\widehat{v_3}]=[v_3]$. So this second case cannot happen either. The proof is done.
\end{proof}

As it turns out the above lemma fails in three dimensions. The aim of the next section is to explore what can be said about highly-$\alpha$-symmetric sets in that case.

At this point the reader has the option to proceed with Section \ref{sec:proof} and read the proof of Theorem \ref{thm:main} in the case when $\dim H \geq 4$.


\section{The structure of highly-$\alpha$-symmetric sets in the three-dimensional case}\label{sec:4}

We start with a simple statement.

\begin{lemma}\label{closed}
The $\alpha$-set $\Sa$ of any subset $S\subset P(H)$ is closed. In particular, every highly-$\alpha$-symmetric set $T$ is compact, hence they contain at least one element that is not an isolated point of $T$.
\end{lemma}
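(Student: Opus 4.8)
The plan is to establish the closedness first and then derive compactness from finite-dimensionality. For the closedness of $\Sa$, I would fix a single line $[u]\in S$ and note that the map $[v]\mapsto\measuredangle([u],[v])$ is continuous on $P(H)$, since $\measuredangle$ is a metric and the distance to a fixed point is always continuous (indeed $1$-Lipschitz). Hence the set $\{[v]\in P(H)\colon\measuredangle([u],[v])=\alpha\}$ is the preimage of the closed singleton $\{\alpha\}$ under a continuous map, so it is closed. Writing $\Sa$ as the intersection $\bigcap_{[u]\in S}\{[v]\colon\measuredangle([u],[v])=\alpha\}$ of such closed sets then shows that $\Sa$ is closed for every $S\subset P(H)$; this part uses nothing about the dimension of $H$.

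Next I would use the fact that a highly-$\alpha$-symmetric set is itself an $\alpha$-set. Choosing any three-element subset $S\subset T$, part (iv) of Lemma \ref{basic} gives $T=\Saa=(\Sa)^{\langle\alpha\rangle}$, so $T$ is the $\alpha$-set of the subset $\Sa$ and is therefore closed by the previous paragraph. To upgrade this to compactness I would invoke the three-dimensional setting of this section: when $\dim H=3$ the unit sphere of $H$ is compact, and $P(H)$ is its continuous image under the quotient map $v\mapsto[v]$, so $P(H)$ is compact. A closed subset of a compact space is compact, hence $T$ is compact.

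Finally, for the existence of a non-isolated point I would argue by contradiction: if every element of $T$ were isolated, then $T$ would be a discrete, closed (hence compact) subset of $P(H)$, which forces $T$ to be finite, contradicting $\card T=\infty$ from Definition \ref{defi}. The only genuinely delicate point is the passage to compactness: closedness of $\alpha$-sets holds in arbitrary dimension, but a closed (and automatically bounded, as $\diam P(H)\le\tfrac{\pi}{2}$) subset of $P(H)$ need not be compact in infinite dimensions, so the argument really does lean on $H$ being finite-dimensional here; everything else is routine point-set topology.
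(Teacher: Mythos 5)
Your proof is correct, and it is precisely the routine argument the paper has in mind when it omits the proof as "straightforward": closedness of $\Sa$ as an intersection of level sets of the $1$-Lipschitz maps $[v]\mapsto\measuredangle([u],[v])$, the observation that $T=\Saa=(\Sa)^{\langle\alpha\rangle}$ is itself an $\alpha$-set, compactness of closed subsets of $P(H)$ in finite dimension, and the fact that an infinite compact set cannot be discrete. You are also right to flag the only delicate point: the passage from closed to compact genuinely needs $\dim H<\infty$, which is harmless here since the lemma is only invoked in the three-dimensional setting of Section~\ref{sec:4} (for $\dim H\geq 4$ compactness of highly-$\alpha$-symmetric sets would instead follow from Lemma~\ref{circle-char-4}, where they are shown to be circles).
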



The proof is straightforward, hence it is omitted. 
We now prove the three-dimensional version of Lemma \ref{circle4}.

\begin{lemma}\label{circle3}
Using the notation and assumptions of Lemma \ref{collin-alpha}, suppose that $\dim H = 3$ and that $e_3\perp\{e_1,e_2\}$ is a unit vector. Then we have the following possibilities:
\begin{itemize}
	\item[(i)] if either $\frac{c}{\sqrt{1+c^2}} \geq a > d$, or $(a,c,d) = \left(\frac{1}{\sqrt{3}}, \sqrt{\frac{2}{3}}, \frac{1}{\sqrt{3}}\right)$, or $(a,c,d) = \left(\frac{1}{\sqrt{3}}, \frac{1}{\sqrt{2}}, \frac{1}{\sqrt{2}}\right)$, then we have 
	\begin{align}\label{eq:double-circle}
		\Szaa &=\Big\{[ce_1 + \lambda de_2] \colon \lambda\in \T\Big\}  \bigsqcup \left\{\left[\sqrt{1-\frac{a^2}{1-\frac{a^2}{c^2}}}e_2 + \lambda \frac{a}{\sqrt{1-\frac{a^2}{c^2}}} e_3\right] \colon \lambda\in \T\right\},
	\end{align}
	\item[(ii)] otherwise we have
	\begin{equation}\label{eq:circle3}
		\Szaa = \{[ce_1 + \lambda de_2] \colon \lambda\in \T\}.
	\end{equation}
\end{itemize}
\end{lemma}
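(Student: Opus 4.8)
The plan is to compute $\Szaa = (\Sza)^{\langle\alpha\rangle}$ by feeding the explicit form of $\Sza$ from Lemma~\ref{collin-alpha} into a second $\alpha$-set computation. The decisive feature of the three-dimensional setting is that the vectors $h$ occurring in Lemma~\ref{collin-alpha} are now forced to lie in $\C e_3$, so that $\Sza$ is not a high-dimensional sphere (as in the proof of Lemma~\ref{circle4}) but a union of \emph{circles}. Writing $\mathfrak{d}_1 := \sqrt{1-a^2/c^2}$ and $\mathfrak{d}_2 := \sqrt{1-a^2/d^2}$, Lemma~\ref{collin-alpha} reads $\Sza = \mathcal{C}_1$ when $a > d$ and $\Sza = \mathcal{C}_1 \sqcup \mathcal{C}_2$ when $a \leq d$, where $\mathcal{C}_1 = \{[\tfrac{a}{c}e_1 + \mu e_3] \colon |\mu| = \mathfrak{d}_1\}$ and $\mathcal{C}_2 = \{[\tfrac{a}{d}e_2 + \mu e_3] \colon |\mu| = \mathfrak{d}_2\}$; note that $\mathcal{C}_2$ collapses to the single point $[e_2]$ exactly when $a = d$. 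Since the $\alpha$-set of a union is the intersection of the $\alpha$-sets, the whole problem reduces to understanding the $\alpha$-set of one circle.

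The elementary fact driving everything is the following. For a fixed line $[v] = [x_1 e_1 + x_2 e_2 + x_3 e_3]$ with $|x_1|^2+|x_2|^2+|x_3|^2 = 1$, as $\mu$ runs over the circle $|\mu| = \mathfrak{d}_1$ the number $\langle \tfrac{a}{c}e_1 + \mu e_3, v\rangle = \tfrac{a}{c}\overline{x_1} + \mu\overline{x_3}$ traces a circle of radius $|x_3|\mathfrak{d}_1$ centred at $\tfrac{a}{c}\overline{x_1}$ in $\C$; its modulus is constantly $a$ if and only if either the radius vanishes (so $x_3 = 0$ and then $|x_1| = c$) or the centre vanishes (so $x_1 = 0$ and then $|x_3| = \tfrac{a}{\mathfrak{d}_1}$). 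Thus membership of $[v]$ in $\mathcal{C}_1^{\langle\alpha\rangle}$ is governed by the alternative (B1)~$x_3 = 0,\ |x_1| = c$ or (A1)~$x_1 = 0,\ |x_3| = \tfrac{a}{\mathfrak{d}_1}$, and the same analysis applied to $\mathcal{C}_2$ gives (B2)~$x_3 = 0,\ |x_2| = d$ or (A2)~$x_2 = 0,\ |x_3| = \tfrac{a}{\mathfrak{d}_2}$. The structural point is that branch (B1) (respectively (B2)) always reproduces exactly the original circle $\{[ce_1 + \lambda d e_2]\colon \lambda\in\T\}$, while branch (A1) produces a \emph{second} circle lying in the $e_2$--$e_3$ plane, which is non-empty precisely when $\mathfrak{d}_1 \geq a$, that is, when $\tfrac{c}{\sqrt{1+c^2}} \geq a$. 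A direct computation shows its coordinates are exactly those of the second circle in \eqref{eq:double-circle}.

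It then remains to impose the correct conjunction of conditions in each of the three regimes. When $a > d$ only $\mathcal{C}_1$ is present, so $\Szaa$ is the original circle together with the (A1)-circle precisely when $\tfrac{c}{\sqrt{1+c^2}} \geq a$ --- this is the first clause of case (i), and otherwise we land in case (ii). When $a \leq d$ we must intersect $\mathcal{C}_1^{\langle\alpha\rangle}$ with $\mathcal{C}_2^{\langle\alpha\rangle}$ (or with $\{|x_2| = d\}$ if $a = d$): the mixed branches (A1)$\wedge$(B2) and (B1)$\wedge$(A2) are empty, since they demand $x_3 = 0$ while also $|x_3| > 0$; (B1)$\wedge$(B2) returns the original circle; and anything extra can only come from (A1)$\wedge$(A2) (when $a < d$) or from (A1) under the single constraint $|x_2| = d$ (when $a = d$). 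Forcing these to be consistent with $|x_1|^2+|x_2|^2+|x_3|^2 = 1$ leads, after a short computation, to the two exceptional triples $(a,c,d) = (\tfrac{1}{\sqrt{3}}, \tfrac{1}{\sqrt{2}}, \tfrac{1}{\sqrt{2}})$ and $(a,c,d) = (\tfrac{1}{\sqrt{3}}, \sqrt{\tfrac{2}{3}}, \tfrac{1}{\sqrt{3}})$; in every remaining sub-case only the original circle survives, giving (ii).

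The main obstacle is the careful bookkeeping of the boundary and degenerate situations rather than any genuine difficulty in the geometry. One must observe that the second circle written in \eqref{eq:double-circle} degenerates gracefully to the single point $[e_3]$ exactly at the boundary $\tfrac{c}{\sqrt{1+c^2}} = a$ and at the triple $(\tfrac{1}{\sqrt{3}}, \tfrac{1}{\sqrt{2}}, \tfrac{1}{\sqrt{2}})$, whereas at $(\tfrac{1}{\sqrt{3}}, \sqrt{\tfrac{2}{3}}, \tfrac{1}{\sqrt{3}})$ it is a genuine circle; this is why the two exceptional triples belong to case (i) even though one of them satisfies $a < d$. Checking that the three enumerated situations in (i) are precisely those producing more than the original circle, and that the explicit coordinates agree with branch (A1), is the delicate part, while the underlying mechanism --- constant modulus of a point moving on a circle --- is entirely elementary.
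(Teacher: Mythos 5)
Your proposal is correct and follows essentially the same route as the paper's proof: you decompose $\Sza$ into the circles $\mathcal{C}_1$ (and $\mathcal{C}_2$, degenerating to $[e_2]$ when $a=d$), use the same elementary fact that a circle traced by the inner product has constant modulus $a$ only when its radius or its centre vanishes, and then intersect the resulting branches, recovering the threshold $\frac{c}{\sqrt{1+c^2}}\geq a$ and the two exceptional triples exactly as the paper does. The only difference is presentational (your branch bookkeeping (A1)/(B1)/(A2)/(B2) versus the paper's explicit computation of $\mathcal{C}^{\langle\alpha\rangle}$ and $\mathcal{D}^{\langle\alpha\rangle}$ followed by the observation $\Szaa\subseteq\{[ce_1+\lambda de_2]\colon\lambda\in\T\}\sqcup\{[e_3]\}$), not mathematical.
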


Note that if $(a,c,d) = \left(\frac{1}{\sqrt{3}}, \sqrt{\frac{2}{3}}, \frac{1}{\sqrt{3}}\right)$, then \eqref{eq:double-circle} becomes
\begin{align}\label{eq:circle-circle}
	\Szaa =\left\{\left[\sqrt{\frac{2}{3}}e_1 + \lambda \sqrt{\frac{1}{3}}e_2\right] \colon \lambda\in \T\right\}  \bigsqcup \left\{\left[\sqrt{\frac{1}{3}}e_2 + \lambda \sqrt{\frac{2}{3}} e_3\right] \colon \lambda\in \T\right\},
\end{align}
and if $\frac{c}{\sqrt{1+c^2}} = a$, then \eqref{eq:double-circle} is
\begin{align*}
		\Szaa = \Big\{[ce_1 + \lambda de_2] \colon \lambda\in \T\Big\}  \bigsqcup \left\{\left[e_3\right]\right\}.
\end{align*}
In particular, if $(a,c,d) = \left(\frac{1}{\sqrt{3}}, \frac{1}{\sqrt{2}}, \frac{1}{\sqrt{2}}\right)$, then \eqref{eq:double-circle} takes the form
\begin{align}\label{eq:circle-point}
	\Szaa = \left\{\left[\sqrt{\frac{1}{2}}e_1 + \lambda \sqrt{\frac{1}{2}}e_2\right] \colon \lambda\in \T\right\}  \bigsqcup \left\{\left[e_3\right]\right\}.
\end{align}

\begin{proof}[Proof of Lemma \ref{circle3}]
	As in the proof of Lemma \ref{circle4}, we set
	\begin{equation*}
		\mathcal{C} := \left\{\left[\frac{a}{c}e_1 + \lambda \sqrt{1-\frac{a^2}{c^2}} e_3\right]\colon \lambda\in\T \right\}.
	\end{equation*}
	Consider a line $[v] = [c_1e_1+c_2e_2+c_3e_3]$ with $c_1\geq 0$, $c_2, c_3\in\C$ and $c_1^2+|c_2|^2+|c_3|^2 = 1$.
	We obtain that $[v] \in \mathcal{C}^{\langle\alpha\rangle}$ if and only if 
	\begin{equation*}
	\left| c_1 \frac{a}{c} + c_3\overline{\lambda} \sqrt{1-\frac{a^2}{c^2}} \right| = a \qquad \left(\lambda\in\T\right).
	\end{equation*}
	As $c>a$, this is equivalent to
	\begin{itemize}
	\item either $c_1=c$, $|c_2| = d$ and $c_3=0$,
	\item or $c_1=0$, $|c_2| = \sqrt{1 - \frac{a^2}{1-\frac{a^2}{c^2}}}$ and $|c_3| = \frac{a}{\sqrt{1-\frac{a^2}{c^2}}}$.
	\end{itemize}
	Note that $a^2 > {1-\frac{a^2}{c^2}}$ holds if and only if $a > \frac{c}{\sqrt{1+c^2}}$.
	Therefore we obtain the following two possibilities:
	\begin{itemize}
	\item if $a > \frac{c}{\sqrt{1+c^2}}$, then
		$$\mathcal{C}^{\langle\alpha\rangle} = \{[ce_1 + \lambda de_2] \colon \lambda\in \T\},$$
	\item if $a \leq \frac{c}{\sqrt{1+c^2}}$, then
		$$\mathcal{C}^{\langle\alpha\rangle} = \{[ce_1 + \lambda de_2] \colon \lambda\in \T\} \bigsqcup \left\{\left[\sqrt{1 - \frac{a^2}{1-\frac{a^2}{c^2}}}e_2 + \lambda \frac{a}{\sqrt{1-\frac{a^2}{c^2}}} e_3\right] \colon \lambda\in \T\right\}.$$
	\end{itemize}
	In particular, this completes the case when $a > d$, since then $\Sza = \mathcal{C}^{\langle\alpha\rangle}$ holds. 
	
	In what follows, we shall handle the cases $a < d$ and $a = d$ separately.
	Set
	$$\mathcal{D} := \left\{\left[\frac{a}{d}e_2 + \lambda \sqrt{1-\frac{a^2}{d^2}} e_3\right]\colon \lambda\in\T\right\}.$$
	Suppose that $a < d$. 
	Then similarly as for $\mathcal{C}^{\langle\alpha\rangle}$ (where $c>a$ was automatic), we obtain the following:
	\begin{itemize}
	\item if $a > \frac{d}{\sqrt{1+d^2}}$, then
		$$\mathcal{D}^{\langle\alpha\rangle} = \{[ce_1 + \lambda de_2] \colon \lambda\in \T\},$$
	\item if $a \leq \frac{d}{\sqrt{1+d^2}}$, then
		$$\mathcal{D}^{\langle\alpha\rangle} = \{[ce_1 + \lambda de_2] \colon \lambda\in \T\} \bigsqcup \left\{\left[\sqrt{1 - \frac{a^2}{1-\frac{a^2}{d^2}}}e_1 + \lambda \frac{a}{\sqrt{1-\frac{a^2}{d^2}}} e_3\right] \colon \lambda\in \T\right\}.$$
	\end{itemize}
	Recall that $\Sza = \mathcal{C}\cup\mathcal{D}$. Hence we observe that 
	$$\{[ce_1 + \lambda de_2] \colon \lambda\in \T\} \subseteq \Szaa = \mathcal{C}^{\langle\alpha\rangle}\cap\mathcal{D}^{\langle\alpha\rangle} \subseteq \{[ce_1 + \lambda de_2] \colon \lambda\in \T\}\sqcup\{[e_3]\}.$$
	Therefore, after some easy calculations we obtain the following, which completes the $a < d$ case: 
	\begin{itemize}
	\item if $(a,c,d) = \left(\frac{1}{\sqrt{3}}, \frac{1}{\sqrt{2}}, \frac{1}{\sqrt{2}}\right)$, then we have \eqref{eq:double-circle} and \eqref{eq:circle-point},
	\item otherwise, we have \eqref{eq:circle3}.
	\end{itemize}
	
	Finally, let us assume that $a = d$. In this case $\mathcal{D} = \{[e_2]\}$, hence 
	$$
	S_0^{\langle\!\langle\alpha\rangle\!\rangle} = \mathcal{C}^{\langle\alpha\rangle}\cap\{[e_2]\}^{\langle\alpha\rangle} = \mathcal{C}^{\langle\alpha\rangle}\cap\{[de_2+x]\colon x\perp e_2, \|x\|=c\}.
	$$
	If we also have $a=d > \frac{c}{\sqrt{1+c^2}}$, then this clearly gives \eqref{eq:circle3}.
	Otherwise, 
	\begin{align*}
	S_0^{\langle\!\langle\alpha\rangle\!\rangle} = & \left\{[ce_1 + \lambda de_2], \left[\sqrt{1 - \frac{d^2}{1-\frac{d^2}{c^2}}}e_2 + \lambda \frac{d}{\sqrt{1-\frac{d^2}{c^2}}} e_3\right] \colon \lambda\in \T\right\} \\ 
	&\bigcap \Big\{[de_2+x]\colon x\perp e_2, \|x\|=c\Big\}.
	\end{align*}
	This gives \eqref{eq:circle3}, unless $\sqrt{1 - \frac{d^2}{1-\frac{d^2}{c^2}}} = d$, which happens if and only if $(a,c,d) = \left(\frac{1}{\sqrt{3}}, \sqrt{\frac{2}{3}}, \frac{1}{\sqrt{3}}\right)$. In this latter case we obtain \eqref{eq:double-circle} and \eqref{eq:circle-circle}, which completes the proof.
\end{proof}

Assume that the assumption of (ii) in Lemma \ref{circle3} holds. Then by Lemma \ref{collin-alpha} the circle in \eqref{eq:circle3} is highly-$\alpha$-symmetric. In the next lemma we investigate the other case.

\begin{lemma}\label{circle-implied-3}
	Assume that $\dim H = 3$ and that the assumptions of (i) in Lemma \ref{circle3} hold. Then the set $\Szaa$ in \eqref{eq:double-circle} is not highly-$\alpha$-symmetric.
\end{lemma}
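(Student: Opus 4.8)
The plan is to show that $T:=\Szaa$ of \eqref{eq:double-circle} violates condition (iii) of Definition \ref{defi}, while (i) and (ii) hold. The first move is to identify $\Ta$ exactly. Since $T=\Szaa=(\Sza)^{\langle\alpha\rangle}$, Lemma \ref{basic}(iii) applied to $S_0$ gives $\Ta=(\Sza)^{\langle\!\langle\alpha\rangle\!\rangle}=\Sza$; in particular $\#\Ta=\infty$, so (ii) holds and (i) is clear. The crucial feature is that, by the explicit form of $\Sza$ in Lemma \ref{collin-alpha}, \emph{every} line in $\Ta=\Sza$ has a vanishing $e_1$-coordinate or a vanishing $e_2$-coordinate. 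It is precisely this rigidity that I aim to contradict.

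Suppose, towards a contradiction, that $T$ is highly-$\alpha$-symmetric. By Lemma \ref{basic}(iv) we then have $\Sa=\Ta$ for every $S\subset T$ with $\#S\geq 3$. Because $S\subset T$ already forces $\Sa\supseteq\Ta$ (Lemma \ref{basic}(ii)), it suffices to exhibit a single three-element set $S\subset T$ and one line $[v]\in\Sa$ whose $e_1$- and $e_2$-coordinates are both nonzero: by the first paragraph such a $[v]$ cannot belong to $\Ta$, whence $\Sa\supsetneq\Ta$, a contradiction.

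For the triple I would take two lines $[u_1],[u_2]$ on the first circle $\{[ce_1+\lambda de_2]\colon\lambda\in\T\}$ and one line $[u_3]$ on the second component of $T$. With the symmetric choice $[u_1]=[ce_1+ide_2]$ and $[u_2]=[ce_1-ide_2]$, Lemma \ref{shape} writes $\{[u_1],[u_2]\}^{\langle\alpha\rangle}=\bigsqcup_\theta\mathcal{A}_\theta$, and a point of $\mathcal{A}_\theta$ has both its $e_1$- and $e_2$-coordinates nonzero precisely when $\theta\notin\{0,\pm\tfrac{\pi}{2}\}$, while the lines of $\Ta$ occupy only the excluded values $\theta\in\{0,\pm\tfrac{\pi}{2}\}$. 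Intersecting with $\{[u_3]\}^{\langle\alpha\rangle}$ as in Lemma \ref{infinite-element}, one notes that at $\theta=0$ this intersection is already infinite (it contains the part of $\Ta$ lying there), so condition (i) or (ii) of Lemma \ref{infinite-element} holds at $\theta=0$; invoking the continuity of $z$ and $\rho$, together with the one-sided analysis of Case~2 in the proof of Lemma \ref{circle-char-4} when $\theta=0$ is of type (ii), one obtains a whole interval of $\theta$'s off $\{0,\pm\tfrac{\pi}{2}\}$ for which $\mathcal{A}_\theta\cap\{[u_3]\}^{\langle\alpha\rangle}\neq\emptyset$. Any line in such an intersection is the desired $[v]\in\Sa\setminus\Ta$.

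The step I expect to be the main obstacle is the degenerate sub-case of \eqref{eq:double-circle} where the second component collapses to the single line $[e_3]$ but the first circle is not great, namely $a=\frac{c}{\sqrt{1+c^2}}>d$. Here the symmetric pair $[ce_1\pm ide_2]$ is useless: with $[u_3]=[e_3]$ the modulus equations decouple and force the $e_2$-coordinate of \emph{every} point of $\Sa$ to vanish, so that $\Sa=\Ta$ and no contradiction results. The remedy is to abandon the antipodal symmetry and take $[u_1]=[ce_1+e^{i\beta}de_2]$ and $[u_2]=[ce_1+e^{-i\beta}de_2]$ for a suitable $\beta\in(0,\pi)$. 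For $[v]=[x_1e_1+x_2e_2+x_3e_3]$ the requirements $|\langle v,u_1\rangle|=|\langle v,u_2\rangle|=a$ become $\operatorname{Im}(x_1\overline{x_2})=0$ together with $c^2|x_1|^2+d^2|x_2|^2+2cd\cos\beta\,\operatorname{Re}(x_1\overline{x_2})=a^2$, and it is exactly the new cross term that breaks the forced vanishing of $x_2$. Combining this with $|x_3|=a$ and $|x_1|^2+|x_2|^2=1-a^2$, the expression $c^2|x_1|^2+d^2|x_2|^2+2cd\cos\beta\,\operatorname{Re}(x_1\overline{x_2})-a^2$ (with $|x_1|^2=1-a^2-|x_2|^2$ and $x_1\overline{x_2}$ real of a fixed sign) vanishes at $|x_2|=0$, has the sign of $\cos\beta\,\operatorname{Re}(x_1\overline{x_2})$ for small $|x_2|>0$, and is negative at $|x_1|=0$; choosing $\beta$ and the sign so that it is positive just after $0$, an intermediate-value argument produces a genuine solution with $|x_2|>0$, that is, the required $[v]\in\Sa\setminus\Ta$. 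Verifying that $\beta$ can always be positioned to make this intersection non-empty is the technical heart of the argument.
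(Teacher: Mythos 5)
Your strategy coincides with the paper's: both arguments show that $T=\Szaa$ fails to be highly-$\alpha$-symmetric by exhibiting a three-element $S\subset T$ together with a line in $\Sa$ whose $e_1$- and $e_2$-coordinates are both nonzero, hence lying outside $\Ta=\Sza$, contradicting (iii)--(iv) of Lemma \ref{basic}. The executions differ only in the order of construction. You fix the triple first (a conjugate pair on the first circle plus a point of the second component) and perturb away from $\theta=0$; the paper fixes the witness first, $[w]=\left[\frac{a}{c}\cos t\,e_1+\frac{a}{c}\sin t\,e_2+\sqrt{1-\frac{a^2}{c^2}}\,e_3\right]$ with $t>0$ small, and then chooses $\lambda,\mu\in\T$ so that $[ce_1+\lambda de_2]$, $[ce_1+\overline{\lambda}de_2]$ and the second-component line with phase $\mu$ all lie in $\{[w]\}^{\langle\alpha\rangle}$. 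The paper's ordering buys exactly what costs you the extra work in your last paragraph: since $|\langle w,e_3\rangle|=\sqrt{1-a^2/c^2}$ equals $a$ precisely when $\frac{c}{\sqrt{1+c^2}}=a$, the degenerate sub-case where the second component collapses to $\{[e_3]\}$ needs no separate treatment there. Your remedy for that sub-case with $c>d$ (tilting the pair to $\beta\neq\frac{\pi}{2}$ and running an intermediate-value argument) is correct as written, and so is your main case whenever the second component is a genuine circle, since then $c_2\neq0$ and the one-sided analysis from Case 2 of Lemma \ref{circle-char-4} applies. (A presentational caveat: Lemma \ref{infinite-element} is stated for $\dim H\geq4$; in dimension three you should appeal directly to the underlying two-circle intersection criterion, under which its condition (i) yields a nonempty, two-point, intersection.)

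However, one sub-case of Lemma \ref{circle3}(i) escapes both of your branches: $(a,c,d)=\left(\frac{1}{\sqrt{3}},\frac{1}{\sqrt{2}},\frac{1}{\sqrt{2}}\right)$, i.e.\ the situation \eqref{eq:circle-point}, where the second component is $\{[e_3]\}$ \emph{and} $c=d$. Your main branch fails there because $[u_3]=[e_3]$ forces $z\equiv0$, so the cited one-sided analysis (which needs $c_2\neq0$ to make $|z(\theta)|$ grow linearly off $\theta=0$) is unavailable. Your remedy branch explicitly assumes $a=\frac{c}{\sqrt{1+c^2}}>d$ and so excludes it; moreover, its sign argument genuinely degenerates when $c=d$: both endpoint values $c^2(1-a^2)-a^2$ and $d^2(1-a^2)-a^2$ vanish, and your expression reduces to $2c^2\cos\beta\,\operatorname{Re}(x_1\overline{x_2})$, which for $\cos\beta\neq0$ has no zero with $x_1,x_2$ both nonzero. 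Fortunately the hole is trivial to fill: for $c=d=\frac{1}{\sqrt{2}}$ and $a=\frac{1}{\sqrt{3}}$ the function $\rho$ is constant equal to $\sqrt{1-2a^2}=a$, so $\mathcal{A}_\theta\subset\{[e_3]\}^{\langle\alpha\rangle}$ for \emph{every} $\theta$; hence with your original symmetric pair $[ce_1\pm ide_2]$ and $[u_3]=[e_3]$ one gets $\Sa=\bigsqcup_\theta\mathcal{A}_\theta$, and any point of $\mathcal{A}_\theta$ with $\theta\notin\left\{0,\pm\frac{\pi}{2}\right\}$ is the required witness. With this one-line patch your proof is complete.
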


\begin{proof}
Our strategy is to find four lines $[u_1],[u_2],[u_3]\in\Szaa$ and $[w]\in P(H)$ such that 
\begin{equation}\label{eq:M3}
	[w]\in \{[u_1], [u_2], [u_3]\}^{\langle \alpha\rangle} \setminus \Sza
\end{equation}
which, by (iii)--(iv) of Lemma \ref{basic}, will prove our statement.
Let $0 < t < \frac{\pi}{2}$ and consider the unit vector
\[
w := \frac{a}{c}\cos t \cdot e_1 + \frac{a}{c} \sin t \cdot e_2 + \sqrt{1-\frac{a^2}{c^2}} e_3,
\]
Note that $\langle w, e_j\rangle \neq 0$ $(j=1,2,3)$, thus by Lemma \ref{collin-alpha} we have $[w]\notin \Sza$. 
Using elementary calculus, it is easy to see that for small enough $t>0$ we have
\begin{equation}\label{eq:Mineq}
0 < c \left(\frac{a}{c}\cos t\right) - d \left(\frac{a}{c} \sin t\right) < a <  c \left(\frac{a}{c}\cos t\right) + d \left(\frac{a}{c} \sin t\right).
\end{equation}
Hence there exists a number $\lambda\in \T\setminus\{1, -1\}$ with 
$$
[u_1] := [ce_1 +\lambda d e_2],\; [u_2] := [ce_1 +\overline{\lambda} d e_2]\in \{[w]\}^{\langle \alpha\rangle}\cap\Szaa.
$$ 
In a similar way, we obtain the following for small enough $t>0$:
\[
0 < a - \sqrt{1-\frac{a^2}{1-\frac{a^2}{c^2}}}  \left(\frac{a}{c}\sin t\right) 
\leq a \leq 
a + \sqrt{1-\frac{a^2}{1-\frac{a^2}{c^2}}}  \left(\frac{a}{c}\sin t\right).
\]
However, unlike in \eqref{eq:Mineq}, here we have equations if $\frac{c}{\sqrt{1+c^2}} = a$.
Therefore, we conclude the existence of a number $\mu \in \T$ such that
\[
[u_3]:= \left[ \sqrt{1-\frac{a^2}{1-\frac{a^2}{c^2}}} e_2 + \mu \frac{a}{\sqrt{1-\frac{a^2}{c^2}}} e_3\right]\in \{[w]\}^{\langle \alpha\rangle}\cap\Szaa. 
\]
The relation \eqref{eq:M3} follows and the proof is complete.
\end{proof}

	We continue with the analogue of Lemma \ref{circle-char-4} in three dimensions. 
	It basically says that highly-$\alpha$-symmetric sets are exactly the circles with certain diameters. The lemma also implies some estimations for the diameter.
	
\begin{lemma}\label{circle-char-3}
Assume that $\dim H = 3$. Then for any set $T\subset P(H)$ and orthonormal system $\{e_1,e_2\}\subset H$ the following hold:
	\begin{itemize}
	\item[(i)] If $T$ is highly-$\alpha$-symmetric, then it is a circle.
	\item[(ii)] If $a\neq \frac{1}{\sqrt{3}}$, $c \geq d > a$, $c^2+d^2=1$, then the circle $\{[ce_1 + \lambda de_2] \colon \lambda\in \T\}$ is highly-$\alpha$-symmetric.
	\item[(iii)] If $a = \frac{1}{\sqrt{3}}$, $c > d > a$, $c^2+d^2=1$, then the circle $\{[ce_1 + \lambda de_2] \colon \lambda\in \T\}$ is highly-$\alpha$-symmetric.
	\item[(iv)] If $0 < d < \min\left\{a,\sqrt{\frac{1-2a^2}{1-a^2}}\right\}$, $c^2+d^2=1$, then the circle $\{[ce_1 + \lambda de_2] \colon \lambda\in \T\}$ is not highly-$\alpha$-symmetric.
	\end{itemize}
\end{lemma}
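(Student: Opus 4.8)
The plan is to reduce all four claims to the already-established description of double-$\alpha$-sets of collinear triples (Lemma \ref{circle3}) together with the non-highly-$\alpha$-symmetry of the double-circle (Lemma \ref{circle-implied-3}). The converse directions (ii)--(iv) are the clean part: here $T=\{[ce_1+\lambda de_2]:\lambda\in\T\}$ is a fixed circle, and any three-element subset $S_0\subset T$ is exactly a collinear triple $\{[ce_1+\lambda_j de_2]\}$ of the type treated in Lemmas \ref{collin-alpha} and \ref{circle3}. So $S_0^{\langle\!\langle\alpha\rangle\!\rangle}$ is computed directly, and the whole task is to decide which branch of Lemma \ref{circle3} the hypotheses select.

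For (ii) and (iii) I would check $\#T=\infty$ (clear) and $\#T^{\langle\alpha\rangle}=\infty$ (the circle $\{[\frac ac e_1+\lambda\sqrt{1-a^2/c^2}e_3]:\lambda\in\T\}$ always lies in $T^{\langle\alpha\rangle}$), and then verify condition (iii) of Definition \ref{defi}: the assumption $c\ge d>a$, together with $a\neq\frac1{\sqrt3}$ (resp.\ $c>d$), excludes both exceptional triples $(\frac1{\sqrt3},\sqrt{2/3},\frac1{\sqrt3})$ and $(\frac1{\sqrt3},\frac1{\sqrt2},\frac1{\sqrt2})$ and makes $\frac{c}{\sqrt{1+c^2}}\ge a>d$ false (as $a>d$ fails), placing us in case (ii) of Lemma \ref{circle3}, so $S_0^{\langle\!\langle\alpha\rangle\!\rangle}=T$ for every such $S_0$. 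For (iv), $d<a$ gives $a>d$, while $d<\sqrt{\frac{1-2a^2}{1-a^2}}$ is equivalent to $\frac{c}{\sqrt{1+c^2}}>a$; hence $\frac{c}{\sqrt{1+c^2}}\ge a>d$ holds and we land in case (i) of Lemma \ref{circle3}, so $S_0^{\langle\!\langle\alpha\rangle\!\rangle}$ is the double-circle \eqref{eq:double-circle} with a nondegenerate second circle. Thus $S_0^{\langle\!\langle\alpha\rangle\!\rangle}\supsetneq T$, condition (iii) of Definition \ref{defi} fails, and $T$ is not highly-$\alpha$-symmetric.

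The substantial claim is (i). Once a highly-$\alpha$-symmetric $T$ is known to contain three pairwise different collinear lines, the argument closes quickly: by Lemma \ref{3} write them as $S_0=\{[ce_1+\lambda_j de_2]\}$; since $T=S_0^{\langle\!\langle\alpha\rangle\!\rangle}$ by Lemma \ref{basic}(iv), Lemma \ref{circle3} forces $T$ to be either a circle or the double-circle \eqref{eq:double-circle}, and the latter is excluded by Lemma \ref{circle-implied-3}, leaving $T$ a circle. So the heart of the matter is to \emph{produce a collinear triple inside $T$}. I would argue by contradiction: assume no three points of $T$ are collinear, so $T$ meets every projective line in at most two points. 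Since $\#T=\infty$ and $\dim H=3$ forbid an all-orthogonal set, there is a non-orthogonal pair $[v_1],[v_2]\in T$; normalising by Lemma \ref{2} and writing $\{[v_1],[v_2]\}^{\langle\alpha\rangle}=\bigsqcup_\theta\mathcal{A}_\theta$ as in Lemma \ref{shape}, I note that in three dimensions each $\mathcal{A}_\theta$ from \eqref{eq:A-theta} is itself a circle (or a point).

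Now the fixed set $T^{\langle\alpha\rangle}=\{[v_1],[v_2],[v_3]\}^{\langle\alpha\rangle}$, which is independent of the choice of third point $[v_3]\in T$, either contains a whole circle $\mathcal{A}_{\theta^*}$ or not. If it does, then $\mathcal{A}_{\theta^*}$ is a circle and $T\subseteq(\mathcal{A}_{\theta^*})^{\langle\alpha\rangle}$ by Lemma \ref{basic}; by Lemma \ref{collin-alpha} the right-hand side is a union of at most two projective lines, so the infinite set $T$ meets one of them infinitely often, yielding a collinear triple---contradiction. If no whole $\mathcal{A}_\theta$ lies in $T^{\langle\alpha\rangle}$, then every slice $\mathcal{A}_\theta\cap T^{\langle\alpha\rangle}$ is finite while infinitely many are non-empty; comparing, for two candidate third points $[v_3],[\widehat{v_3}]$, the identical solution sets they cut out on each circle $\mathcal{A}_\theta$, and using that the relevant functions $z,\widehat z,\rho$ are trigonometric polynomials---in the spirit of Case 1 of the proof of Lemma \ref{circle-char-4}---should force $[\widehat{v_3}]$ to be one of only finitely many lines, so $T\setminus P([v_1,v_2])$ is finite and $\#T<\infty$, again a contradiction. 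I expect this last slice-by-slice matching, where one must compare \emph{circles} rather than the full disks available when $\dim H\ge 4$, to be the main technical obstacle.
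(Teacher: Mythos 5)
Your parts (ii)--(iv) are correct and are exactly the paper's argument: every three-element subset of the circle is a collinear triple, Lemma \ref{collin-alpha} supplies $\#T^{\langle\alpha\rangle}=\infty$, and the hypotheses place you in case (ii) (resp.\ case (i)) of Lemma \ref{circle3}, using the equivalence $d<\sqrt{\tfrac{1-2a^2}{1-a^2}}\iff a<\tfrac{c}{\sqrt{1+c^2}}$. The collinear-triple case of (i) is also the paper's (Lemmas \ref{3}, \ref{circle3}, \ref{circle-implied-3}). Moreover, your Case A is correct and is in fact a tidier route than the paper's corresponding case (Case 2 of its proof): where you note $T\subseteq(\mathcal{A}_{\theta^*})^{\langle\alpha\rangle}$ is contained in a union of at most two projective lines and hence $T$ has a collinear triple, the paper instead computes $z(\widetilde{\theta})=\widehat{z}(\widetilde{\theta})=0$ and $c_3=\widehat{c_3}$, forcing all third points onto a single projective line and $\#T\leq 4$.

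The genuine gap is your Case B, which you yourself leave as a hope (``should force'', ``main technical obstacle''), and---more importantly---the target you set there is the wrong one. The slice-by-slice matching does \emph{not} pin $[\widehat{v_3}]$ down to finitely many lines. What it actually yields (this is Case 1 of the paper's proof) is: for $\theta$ in an infinite set $F$, the two-point solution sets force $z(\theta)$ and $\widehat{z}(\theta)$ to be real linearly dependent; the resulting trigonometric identity $\Im\bigl(z(\theta)\overline{\widehat{z}(\theta)}\bigr)\equiv 0$ gives real linear dependence of each pair $\{c_1,\widehat{c_1}\}$, $\{c_2,\widehat{c_2}\}$, $\{c_1+c_2,\widehat{c_1}+\widehat{c_2}\}$. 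If some third point has $c_1,c_2$ real linearly \emph{independent}, one indeed gets $\#T\leq 4$; but in the remaining branch, where every third point has $c_1,c_2$ real linearly dependent, the only conclusion is the bound
\begin{equation*}
\left|\langle v_3,v_1\rangle\right| = \sqrt{c^2|c_1|^2+d^2|c_2|^2}\;\leq\; c\;<\;1
\qquad\text{for all } [v_3]\in T\setminus\{[v_1],[v_2]\},
\end{equation*}
i.e.\ all of $T$ stays at quantum angle at least $\arccos c>0$ from $[v_1]$. This is perfectly consistent with $T$ being infinite, so no contradiction with $\#T=\infty$ is available, and your claimed conclusion ``$T\setminus P([v_1,v_2])$ is finite'' does not follow.

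The missing idea is topological: the paper first proves Lemma \ref{closed} ($\alpha$-sets are closed, hence a highly-$\alpha$-symmetric set is compact, hence---being infinite---has a non-isolated point) and \emph{starts} the no-collinear-triple argument by choosing $[v_1]$ to be a non-isolated point of $T$ (with $[v_2]$ arbitrary; non-orthogonality is never needed, since Lemma \ref{2} covers $c=d$). Then the displayed bound above says every other point of $T$ is uniformly bounded away from $[v_1]$, contradicting non-isolation. Without building this choice of $[v_1]$ into your setup, Case B cannot be closed along the lines you sketch; with it, your Case B becomes exactly the paper's Case 1.
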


\begin{proof}
	Parts (ii)--(iv) easily follow from Lemmas \ref{collin-alpha} and \ref{circle3}. 
For (iv) we additionally note that $d < \sqrt{\frac{1-2a^2}{1-a^2}}$ implies $a < \frac{c}{\sqrt{1+c^2}}$.
	
	In order to prove (i), assume that $T$ is highly-$\alpha$-symmetric. 
	Suppose that there are three different elements $[v_1], [v_2], [v_3]\in T$ which sit on the same projective line. Then by Lemmas \ref{3}, \ref{circle3} and \ref{circle-implied-3}, the set $T = \{[v_1], [v_2], [v_3]\}^{\langle\!\langle \alpha\rangle\!\rangle}$ is a circle.
 
	From now on, we shall assume that no three different elements of $T$ are collinear. 
	Our aim is to obtain a contradiction. 
By Lemma \ref{closed}, we may take a line $[v_1]\in T$ that is not isolated in $T$.	
Take another line $[v_2]\in T\setminus\{[v_1]\}$. 
	They can be written as $[v_1] = [ce_1+ ide_2]$ and $[v_2] =[ce_1- ide_2]$ with some orthonormal system $\{e_1, e_2\}\subset H$ and real numbers $c\geq d>0$, $c^2+d^2 = 1$. 
Let $e_3\perp\{e_1,e_2\}$ be a unit vector.
In what follows, we use the same symbols as in the proof of Lemma \ref{circle-char-4}. 	
Namely, we consider two arbitrary lines $[v_3], [\widehat{v_3}]\in T\setminus\{[v_1], [v_2]\}$ which may be written as $[v_3] = [c_1e_1+c_2e_2+c_3e_3]$ and $[\widehat{v_3}] = [\widehat{c_1}e_1+\widehat{c_2}e_2+\widehat{c_3}e_3]$, where $c_1, c_2, \widehat{c_1}, \widehat{c_2}\in\C$, $c_3> 0$, $\widehat{c_3}> 0$, $|c_1|^2+|c_2|^2+c_3^2=|\widehat{c_1}|^2+|\widehat{c_2}|^2+\widehat{c_3}^2=1$.
	By (iv) of Lemma \ref{basic}, we have $\{[v_1], [v_2], [v_3]\}^{\langle \alpha\rangle} = \Ta = \{[v_1], [v_2], [\widehat{v_3}]\}^{\langle \alpha\rangle}$.
	By Lemma \ref{shape} this implies 
	$$\mathcal{A}_\theta \cap \{[v_3]\}^{\langle \alpha\rangle} = \mathcal{A}_\theta \cap \{[\widehat{v_3}]\}^{\langle \alpha\rangle}$$
	for all $-\theta_0 \leq \theta \leq \theta_0$, where
	$$
	\mathcal{A}_{\theta}:= \left\{ \left[\frac{a}{c}\cos\theta \cdot e_1 + \frac{a}{d}\sin\theta \cdot e_2 + \overline{\lambda}\rho(\theta)e_3\right] \colon
\lambda\in\T \right\}.
	$$
	In particular, observe that the cardinality 
	$$c(\theta) := \#\left(\mathcal{A}_\theta \cap \{[v_3]\}^{\langle \alpha\rangle}\right)$$ 
	does not depend on the specific choice of $[v_3]\in T\setminus\{[v_1], [v_2]\}$.
	We have the following for all $-\theta_0 \leq \theta \leq \theta_0$:
	\begin{equation}\label{eq:A-theta-part-3}
	a = \left| z(\theta) + c_3\rho(\theta)\lambda \right| \iff a = \left| \widehat{z}(\theta) + \widehat{c_3}\rho(\theta)\lambda \right| \quad (\lambda\in\T).
	\end{equation}
	
Exactly the same argument as in the proof of Lemma \ref{circle-char-4} right after \eqref{eq:A-theta-part} shows that $[e_3]\notin T$, hence neither $z$ nor $\widehat{z}$ is the constant zero function. We distinguish two cases.
		
	\emph{Case 1. When for all $\theta\in[-\theta_0,\theta_0]$ we have $c(\theta) < \infty$.} As can be seen by a simple geometric consideration, in this case for all $-\theta_0 < \theta <\theta_0$ (which implies $\rho(\theta)>0$) both equations of \eqref{eq:A-theta-part-3} have at most two solutions. In particular, there is no solution if $z(\theta) = 0$. Let $F$ be the set of those $\theta\in (-\theta_0, \theta_0)$ for which there is at least one solution $\lambda$. Note that $\# F =\infty$, as $\#\Ta = \infty$. For all $\theta\in F$ let $\lambda_1(\theta)$ and $\lambda_2(\theta)$ denote the two solutions, which might coincide for some $\theta$. By elementary geometry, one sees that $z(\theta)$ and $\widehat{z}(\theta)$ are real linearly dependent for all $\theta\in F$. 
Indeed, we can easily see the following: if $\lambda_1(\theta)=\lambda_2(\theta)$, then both $\{z(\theta), \lambda_1(\theta)\}$ and $\{\widehat{z}(\theta), \lambda_1(\theta)\}$ are real linearly dependent; if $\lambda_1(\theta)\neq \lambda_2(\theta)$, then both $z(\theta)$ and $\widehat{z}(\theta)$ are orthogonal to $\lambda_1(\theta)-\lambda_2(\theta)$ in the complex plane.
Hence for all $\theta\in F$
	\begin{align}\label{eq:z-zh-conj}
		0 = \frac{1}{a^2}\Im\left(z(\theta)\overline{\widehat{z}(\theta)}\right) = \frac{\Im\left(c_1\overline{\widehat{c_1}}\right)}{c^2}\cos^2\theta + \frac{\Im\left(c_2\overline{\widehat{c_2}}\right)}{d^2}\sin^2\theta + \frac{\Im\left(c_1\overline{\widehat{c_2}} + c_2\overline{\widehat{c_1}}\right)}{cd}\sin\theta\cos\theta.
	\end{align}
	Note that a trigonometric polynomial has infinitely many zeros on a compact interval if and only if it is the constant zero function on $\R$. Therefore, the right-hand side of \eqref{eq:z-zh-conj} is zero for all real $\theta$. By substituting $\theta = 0, \frac{\pi}{2}, \arccos c$, we obtain that each of the following is a real linearly dependent system in $\C$:
	\begin{equation}\label{eq:real-lin-dep}
		\left\{c_1, \widehat{c_1}\right\}, \;\; \left\{c_2, \widehat{c_2}\right\}, \;\; \left\{c_1 + c_2, \widehat{c_1}+\widehat{c_2}\right\}. 
	\end{equation}
		 
	Assume for a moment that $c_1$ and $c_2$ are real linearly independent complex numbers. 
Then \eqref{eq:real-lin-dep} implies $\widehat{c_1} = qc_1$ and $\widehat{c_2} = qc_2$ with some $0\neq q\in\R$. Notice that this forces $[\widehat{v_3}]$ to lie on the projective line spanned by $[e_3]$ and $\left[\frac{c_1}{|c_1|^2+|c_2|^2}e_1+\frac{c_2}{|c_1|^2+|c_2|^2}e_2\right]$, hence the contradiction $\#T \leq 4$ follows. Therefore we conclude that $c_1$ and $c_2$ are real linearly dependent, 
hence
	\begin{align*}
		\left|\langle v_3, v_1\rangle\right| = \left|cc_1 - idc_2 \right| = \sqrt{c^2\cdot|c_1|^2+d^2\cdot|c_2|^2} \leq c.
	\end{align*}
	Since $[v_3]\in T\setminus\{[v_1],[v_2]\}$ was arbitrary, we obtain that 
	$$\inf\{\measuredangle([v_1],[u]) \colon [u]\in T\setminus\{[v_1]\}\} > 0.$$
	This contradicts our assumption that $[v_1]$ is not an isolated point of $T$, so this case cannot happen.
	
	\emph{Case 2. When there exists a $\widetilde{\theta}\in[-\theta_0,\theta_0]$ such that $c(\widetilde{\theta}) = \infty$ holds.} In this case $\rho(\widetilde{\theta})>0$ and both equations in \eqref{eq:A-theta-part-3} are solved by infinitely many, hence all $\lambda\in\T$. Therefore, 
	\begin{equation}\label{eq:vmi}
	\widehat{z}(\widetilde{\theta}) = \widehat{c_1}\frac{a}{c}\cos\widetilde{\theta}+\widehat{c_2}\frac{a}{d}\sin\widetilde{\theta} = 0, \;\;\; z(\widetilde{\theta}) = c_1\frac{a}{c}\cos\widetilde{\theta}+c_2\frac{a}{d}\sin\widetilde{\theta} = 0
	\end{equation}
	and $c_3 = \widehat{c_3} = \frac{a}{\rho(\theta)}$.
	After some easy calculation we infer from \eqref{eq:vmi} that $(0,0) \neq (\widehat{c_1},\widehat{c_2}) = \mu(c_1,c_2)$ holds with some $\mu\in\T$. Therefore, 
	$[\widehat{v_3}]$ must lie on the projective line spanned by $[e_3]$ and $\left[\frac{c_1}{|c_1|^2+|c_2|^2}e_1+\frac{c_2}{|c_1|^2+|c_2|^2}e_2\right]$. However, since $[\widehat{v_3}] \in T\setminus\{[v_1], [v_2]\}$ was arbitrary, this implies the contradiction $\#T \leq 4$. So this case cannot happen either, the proof is done.
\end{proof}


\section{Proof of the main theorem}\label{sec:proof}

This section is devoted to the final step of the proof of our main result.

\begin{proof}[Proof of Theorem \ref{thm:main}]
	Let $M$ be an arbitrary two-dimensional subspace of $H$. In what follows we shall prove that there exists another two-dimensional subspace $N$ such that $\phi$ maps $P(M)$ onto $P(N)$. Then a straightforward application of Theorem \ref{thm:2d} gives that the restriction $\phi|_{P(M)}$ preserves every quantum angle, which in turn completes the proof. 
	
	Fix 
	$$
	c_0 := \left\{\begin{matrix}
		\sqrt{\frac{1}{2}}, & \text{if}\;\; \dim H \geq 4 \;\; \text{or} \;\; a\neq\sqrt{\frac{1}{3}} \\
		\sqrt{\frac{7}{12}}, & \text{if}\;\; \dim H = 3 \;\; \text{and} \;\; a=\sqrt{\frac{1}{3}}
	\end{matrix}\right.
	$$ 
	and $d_0 := \sqrt{1-c_0^2}$. By Lemmas \ref{circle-char-4} and \ref{circle-char-3}, every circle of the form 
	$$
	C([e_1],[e_2]) := \{[c_0e_1 + \lambda d_0e_2] \colon \lambda\in \T\},
	$$
	where $\{e_1,e_2\}$ is an orthonormal system, is highly-$\alpha$-symmetric. We obviously have
	$$
	P(M) = \bigcup \big\{ C([e_1],[e_2]) \colon \{e_1,e_2\} \;\; \text{is an orthonormal basis of}\;\; M\big\}.
	$$
	It is apparent from Definition \ref{defi} and the properties of $\phi$, that $\phi$ and $\phi^{-1}$ map highly-$\alpha$-symmetric sets onto highly-$\alpha$-symmetric sets. In particular, $\phi\left(P(M)\right)$ is a union of circles of the form $D([e_1],[e_2]) := \phi\big(C([e_1],[e_2])\big)$.
	
	Observe that if $\#\left(C([e_1],[e_2])\cap C([f_1],[f_2])\right) \geq 2$ holds for two orthonormal bases $\{e_1,e_2\}$ and $\{f_1,f_2\}$ of $M$, then $D([e_1],[e_2])$ and $D([f_1],[f_2])$ are contained in the same projective line.
	Indeed, there exist two different lines $[u_1], [u_2] \in P(M)$ such that $\{[u_1],[u_2]\} \subseteq C([e_1],[e_2])\cap C([f_1],[f_2])$. Set $[v_1] := \phi([u_1])$ and $[v_2] := \phi([u_2])$. Since $[v_1], [v_2]\in D([e_1],[e_2])\cap D([f_1],[f_2])$, we conclude $D([e_1],[e_2]) \cup D([f_1],[f_2]) \subseteq P([v_1, v_2])$.
Note that $P([v_1,v_2])$ is equal to the projective line generated by $\phi([c_0e_1 + d_0e_2])$ and $\phi([c_0e_1 - d_0e_2])$.
	
	From here we distinguish between two cases.
	
	\emph{Case 1. When $\dim H \geq 4$ or $a\neq\sqrt{\frac{1}{3}}$ holds.}
	Then $c_0=d_0=\frac{1}{\sqrt{2}}$, and it is rather straightforward to see from the Bloch representation that $\#\left(C([e_1],[e_2])\cap C([f_1],[f_2])\right) \geq 2$ holds for all pairs of orthonormal bases $\{e_1,e_2\}$ and $\{f_1,f_2\}$ of $M$. Indeed, the Bloch representations of these circles are great circles on $\mathbb{S}^2$. However, let us give here a more direct proof of the inequality $\#\left(C([e_1],[e_2])\cap C([f_1],[f_2])\right) \geq 2$. If $\{[e_1],[e_2]\} = \{[f_1],[f_2]\}$, then this is obvious, so from now on we assume otherwise. There exist numbers $\mathfrak{a}, \mathfrak{b}> 0$, $\mathfrak{a}^2+\mathfrak{b}^2 = 1$, $\mu\in\T$ such that $f_1$ may be assumed to have the form $\mathfrak{a}e_1+\mu\mathfrak{b}e_2$. Consequently, $f_2$ may be assumed to have the form $\mathfrak{b}e_1-\mu\mathfrak{a}e_2$. Then
	$$
	\left[\frac{1}{\sqrt{2}}e_1+i\mu\frac{1}{\sqrt{2}}e_2\right] = \left[\frac{\mathfrak{a}-i\mathfrak{b}}{\sqrt{2}}e_1+\mu\frac{\mathfrak{b}+i\mathfrak{a}}{\sqrt{2}}e_2\right] = \left[\frac{1}{\sqrt{2}}f_1-i\frac{1}{\sqrt{2}}f_2\right] \in C([e_1],[e_2])\cap C([f_1],[f_2]),
	$$
	and similarly
	$$
	\left[\frac{1}{\sqrt{2}}e_1-i\mu\frac{1}{\sqrt{2}}e_2\right] = \left[\frac{\mathfrak{a}+i\mathfrak{b}}{\sqrt{2}}e_1+\mu\frac{\mathfrak{b}-i\mathfrak{a}}{\sqrt{2}}e_2\right] = \left[\frac{1}{\sqrt{2}}f_1+i\frac{1}{\sqrt{2}}f_2\right] \in C([e_1], [e_2])\cap C([f_1], [f_2]).
	$$
	
	Fix an orthonormal basis $\{e_1,e_2\}$ of $M$. 
	We obtain 
	\begin{align*}
	\phi(P(M)) &= \phi\left(\bigcup \big\{ C([f_1],[f_2]) \colon \{f_1,f_2\} \;\; \text{is an orthonormal basis of}\;\; M\big\}\right)\\
	&= \bigcup \big\{ D([f_1],[f_2]) \colon \{f_1,f_2\} \;\; \text{is an orthonormal basis of}\;\; M\big\}\\
	&\subseteq P(N),
	\end{align*}
	where $P(N)$ is the projective line generated by $\phi([c_0e_1 + d_0e_2])$ and $\phi([c_0e_1 - d_0e_2])$.
	However, by the very same reasons, the inverse $\phi^{-1}$ maps $P(N)$ into some projective line $P(L)$. Since we have $P(M)\subseteq\phi^{-1}(P(N))$, we infer $P(M)=\phi^{-1}(P(N))$, which in turn completes the proof of this case.
	
	\emph{Case 2. When $\dim H = 3$ and $a = \sqrt{\frac{1}{3}}$ are satisfied.} 
	Then $c_0 = \sqrt{\frac{7}{12}}$ and $d_0 = \sqrt{\frac{5}{12}}$.
	One easily sees that it suffices to show the following: for any two orthonormal bases $\{e_1,e_2\}$ and $\{f_1,f_2\}$ of $M$, there exists a third orthonormal basis $\{g_1,g_2\}$ of $M$ such that 
	\begin{equation}\label{eq:g1g2}
	\#\left(C([e_1],[e_2])\cap C([g_1],[g_2])\right) \geq 2, \;\;\; \#\left(C([g_1],[g_2])\cap C([f_1],[f_2])\right) \geq 2.
	\end{equation}
	Again, one way to verify this is by utilising the Bloch representation, however, let us show it directly here. 
	If $[e_1] = [f_1]$ and $[e_2] = [f_2]$, then this is obvious, so from now on we assume otherwise. 
	Then there are numbers $0 \leq \mathfrak{a} < 1, 0 < \mathfrak{b}\leq 1$, $\mathfrak{a}^2+\mathfrak{b}^2 = 1$, $\mu\in\T$ such that $f_1$ and $f_2$ may be assumed to have the forms $\mathfrak{a}e_1+\mu\mathfrak{b}e_2$ and $\mathfrak{b}e_1-\mu\mathfrak{a}e_2$, respectively.
	Note that
	\begin{align*}
	C([f_1],[f_2]) & = \left\{\left[\sqrt{\frac{7}{12}}\left(\mathfrak{a}e_1+\mu\mathfrak{b}e_2\right) + \lambda \sqrt{\frac{5}{12}}\left(\mathfrak{b}e_1-\mu\mathfrak{a}e_2\right)\right] \colon \lambda\in \T\right\} \\
	& = \left\{\left[\left(\sqrt{\frac{7}{12}}\mathfrak{a}+\lambda \sqrt{\frac{5}{12}}\mathfrak{b}\right)e_1+\mu\left(\sqrt{\frac{7}{12}}\mathfrak{b} - \lambda \sqrt{\frac{5}{12}}\mathfrak{a}\right)e_2\right] \colon \lambda\in \T\right\}.
	\end{align*}
	As $\mathfrak{b} > 0$, we obtain that $\#\left(C([e_1],[e_2])\cap C([f_1],[f_2])\right) \geq 2$ is satisfied if and only if there exists a $\lambda\in\T\setminus\{-1,1\}$ such that $\left|\sqrt{\frac{7}{12}}\mathfrak{a}+\lambda \sqrt{\frac{5}{12}}\mathfrak{b}\right| = \sqrt{\frac{7}{12}}$. The latter equation is equivalent to $\left|\mathfrak{a}+\lambda \sqrt{\frac{5}{7}}\mathfrak{b}\right| = 1$. 
	Since $\left|\mathfrak{a}-\sqrt{\frac{5}{7}}\mathfrak{b}\right| \leq \max\left\{\mathfrak{a}, \sqrt{\frac{5}{7}}\right\} <1$, the inequality $\#\left(C([e_1],[e_2])\cap C([f_1],[f_2])\right) \geq 2$ holds if and only if $\mathfrak{a}+\sqrt{\frac{5}{7}}\mathfrak{b} > 1$. A simple calculation gives that this is further equivalent to $\frac{1}{6} < \mathfrak{a} < 1$.
	
	Note that $\mathfrak{a} = |\langle e_1,f_1\rangle|$. Therefore, if we have $\frac{1}{6} < |\langle e_1,f_1\rangle| < 1$, then \eqref{eq:g1g2} holds with $g_1=e_1$ and $g_2=e_2$. On the other hand, if $0\leq |\langle e_1,f_1\rangle| = \mathfrak{a} \leq \frac{1}{6}$, then choose $g_1 := \frac{1}{\sqrt{2}}e_1+\mu\frac{1}{\sqrt{2}}e_2$ and $g_2 := \frac{1}{\sqrt{2}}e_1-\mu\frac{1}{\sqrt{2}}e_2$. We have $\left|\langle g_1,e_1\rangle\right| = \frac{1}{\sqrt{2}} > \frac{1}{6}$ and
	$$
	\left|\langle g_1,f_1\rangle\right| = \frac{1}{\sqrt{2}}\mathfrak{a} + \frac{1}{\sqrt{2}}\mathfrak{b} \geq \frac{1}{\sqrt{2}}\frac{\sqrt{35}}{6}  > \frac{1}{6}.
	$$
	This completes the proof.
\end{proof}

We close our paper with mentioning that even though $H$ was assumed to be a Hilbert space, our method clearly works for general complex inner product spaces as well. In that case, the only change we have to make in the statement of Theorem \ref{thm:main} is to replace ``unitary or an antiunitary operator'' with ``bijective linear or conjugate-linear isometry'', since the former term is usually used only for Hilbert spaces.


\end{document}